\newtheorem{Thm}{Theorem}
\newtheorem{Cor}[Thm]{Corollary}
\newtheorem{Lem}[Thm]{Lemma}
\newtheorem{Prop}[Thm]{Proposition}
\newtheorem{Hyp}[Thm]{Hypothesis}
\newtheoremstyle{remexdef}{6pt}{13pt}{}{}{\bfseries}{.}{.5em}{\thmname{#1}~\thmnumber{#2}\thmnote{ (#3)}}
\theoremstyle{remexdef}
\newtheorem{Def}[Thm]{Definition}
\newtheorem{Rem}[Thm]{Remark}
\numberwithin{equation}{section}
\DeclareMathOperator{\Hess}{Hess}
\DeclareMathOperator{\Diag}{Diag}
\begin{document}

\newcommand{\Ss}{\mathbb{S}}
\newcommand{\N}{\mathbb{N}}
\newcommand{\Z}{\mathbb{Z}}
\newcommand{\Q}{\mathbb{Q}}
\newcommand{\R}{\mathbb{R}}
\newcommand{\C}{\mathbb{C}}
\newcommand{\LL}{\mathbb{L}}
\newcommand{\Prob}{\mathbb{P}}
\newcommand{\E}{\mathbb{E}}
\newcommand{\Filt}{\mathcal{F}}
\newcommand{\ind}{\boldsymbol{1}}
\renewcommand{\geq}{\geqslant}
\renewcommand{\leq}{\leqslant}
\newcommand{\id}{\mbox{Id}}
\newcommand{\txx}{\textcolor}

\newcommand{\lp}{\left(}
\newcommand{\rp}{\right)}
\newcommand{\lc}{\left[}
\newcommand{\rc}{\right]}
\newcommand{\lac}{\left\{}
\newcommand{\rac}{\right\}}
\newcommand{\lb}{\left|}
\newcommand{\rb}{\right|}

\renewcommand{\arraystretch}{1.3}


\renewcommand{\algorithmicrequire}{\textbf{Input:}}
\renewcommand{\algorithmicensure}{\textbf{Output:}}


\title{Monte Carlo methods for light propagation in biological tissues}
\author[1,4]{Laura Vinckenbosch}
\author[1,2]{C\'eline Lacaux}
\author[1,2]{Samy Tindel}
\author[3]{Magalie Thomassin}
\author[3]{Tiphaine Obara}

\affil[1]{Inria, BIGS, Villers-l\`es-Nancy, F-54600}
\affil[2]{Universit\'e de Lorraine, Institut \'Elie Cartan de Lorraine, UMR 7502, 
Vand\oe uvre-l\`es-Nancy, F-54506}
\affil[3]{Universit\'e de Lorraine, CRAN, UMR 7039, 9, avenue de la for\^et de Haye, Vand\oe uvre-l\`es-Nancy Cedex, 54516} 
\affil[4]{Universit\'e de Fribourg, D\'epartement de Math\'ematiques, chemin du Mus\'ee 23, CH-1700 Fribourg}

\date{\today}
\maketitle
\tableofcontents


\newpage

\begin{abstract}
Light propagation in turbid media is driven by the equation of radiative transfer. We give a formal probabilistic representation of its solution in the framework of  biological tissues 
and we implement algorithms based on Monte Carlo methods in order to estimate the quantity of light that is received by an homogeneous tissue when emitted by an optic fiber. A variance reduction method is studied and implemented, as well as a Markov chain Monte Carlo method based on the Metropolis-Hastings algorithm. The resulting estimating methods are then compared to the so-called Wang-Prahl (or Wang) method. 
Finally, the formal representation allows to derive a non-linear optimization algorithm close to Levenberg-Marquardt that is used for the estimation of the scattering and absorption coefficients of the tissue from measurements.

\smallskip
\noindent \textbf{Keywords.} Light propagation, equation of radiative transfer, Markov chain Monte Carlo methods, parameter estimation. 

\smallskip
\noindent \textbf{2010 Mathematics Subject Classification.} Primary :  78M31;  Secondary : 65C40, 60J20.
\end{abstract}

\section{Introduction}
The results presented in this article have initially been motivated by several research projects grounded on Photodynamic therapy (PDT), which is a type of phototherapy used for treating several diseases such as acne, bacterial infection, viruses and some cancers. The aim of this treatment is to kill pathological cells with a photosensitive drug that is absorbed by the target cells and that is then activated by light. For appropriate wavelength and power, the light beam makes the photosensitizer produce singlet oxygen at high doses and induces the apoptosis and necrosis of the malignant cells. See~\cite{Wilson1986, Wilson2008} for a review on  PDT.

The project that initiated this work focuses on an innovative application : the interstitial PDT for the treatment of high-grade brain tumors~\cite{Benachour2012,Bechet2014}. This strategy requires the installation of optical fibers to deliver light directly into the tumor tissue to be treated, while nanoparticles are used to carry the photosensitizer into the cancer cells.

Due to the complexity of interactions between physical, chemical and biological aspects and due to the high cost and the poor reproducibility of the experiments, mathematical and physical models must be developed to better control and understand PDT responses. In this new challenge, the two main questions to which these models should answer are:

\begin{enumerate}
\item What is the optimal shape, position and number of light sources in order to optimize the damage on malignant cells?
\item Is there a way to identify the physical parameters of the tissue which drive the light propagation?
\end{enumerate}

The  light propagation phenomenon involves three processes: absorption, emission and scattering that are described by the so-called equation of radiative transfer (ERT), see~\cite{Chandrasekhar1960}. In general, this equation does not admit any explicit solution, and its study relies on methods of approximation. One of them is its approximation by the diffusion equation and the use of finite elements methods to solve it numerically (see for example~\cite{Arridge1993}). An other approach, which appeared in the 1970s, is the simulation of particle-transport with Monte Carlo (MC) method (see~\cite{Bhan2007,Carter1975,ZhuLiu2013} and references therein). This method has been extended by several authors in order to deal with the special case of biological tissues and there is now a consensus in favor of the algorithm proposed by L.~Wang and S. L.~Jacques in~\cite{Wang1992_rep}, firstly described by S. A.~Prahl in \cite{Prahl1988} and S. A.~Prahl et al.~in \cite{Prahl1989}.  
This method is based on a probabilistic interpretation of the trajectory of a photon. It is widely used and there exist now turnkey softwares based on this method. However, this method is time consuming in 3D and the associated softwares lie inside some kind of black boxes. Due to a slight lack of formalism, it is difficult to speed it up while controlling the estimation error, or to adapt it to inhomogenous tissues such as infiltrating gliomas. Finally, even though there exist several methods in order to estimate the optical parameters of the tissue (see for example~\cite{KarlssonFredrikssonLarsson2012, FredrikssonLarssonStromberg2012, BargoPrahlGoodel2005, PalmerRamanujam2006}), one still misses formal representations that answer to the questions of identifiability.

In the current work, we wish to give a new point of view on simulation issues for ERT, starting from the very beginning. We  first derive a rigorous probabilistic representation of the solution to ERT in homogeneous tissues, which will help us to propose an alternative MC method to Wang's algorithm~\cite{Wang1992_rep}. Then we also propose a variance reduction method.

Interestingly enough, our formulation of the problem also allows us to design quite easily a Markov chain Monte Carlo (MCMC) method 
 based on Metropolis-Hastings algorithm. We have compared both MC and MCMC algorithms, and our simulation results show that the plain MC method is still superior in case of an homogeneous tissue. However, MCMC methods induce quick mutations, which paves the way to very promising algorithms in the inhomogenous case. 
Finally we handle the inverse problem (of crucial importance for practitioners), consisting in estimating the optical coefficients of the tissue according to a series of measurements. Towards this aim, we 
derive a probabilistic representation of the variation of the fluence rate with respect to the absorption and scattering coefficients. This leads us to  the implementation of a Levenberg-Marquardt type algorithm that gives an approximate solution to the inverse problem.

Our work should thus be seen as a complement to the standard algorithm described in~\cite{Wang1992_rep}. Focusing on a rigorous formulation, it opens the way to a thorough analysis of convergence, generalizations to MCMC type methods and a mathematical formulation of the inverse problem.

The paper is organized as follows. We derive the probabilistic representation of the solution to ERT in Section~\ref{sec:proba:representation}. In Sections~\ref{sec:montecarlo:approach} and~\ref{sec:MH}, we describe the MC and MCMC algorithms which are compared to Wang's algorithm in Section~\ref{sec:comparison}. Finally, the sensitivity of the measures with respect to the optical parameters of the medium, as well as their estimation are treated in Section~\ref{sec:inverse}.


\section{Probabilistic representation of the fluence rate}\label{sec:proba:representation}

\subsection{The radiative transfer equation}
 Let $D=\R^3$ be the set of positions in the {biological homogeneous} tissue and $\Ss^2$ be the unit sphere in $\R^3$. Let us denote the optical parameters of the tissue by $\mu_s>0$ for the \emph{scattering coefficient}, $\mu_a>0$ for the \emph{simple absorption coefficient} and  $\mu=\mu_s+\mu_a$ for the \emph{total absorption coefficient} (or \emph{attenuation coefficient}). Moreover, let us denote by $L_e(x,\omega)$ the \emph{emitted light} from $x$ in direction $\omega$ and by  $L(x,\omega)$  the \emph{quantity of light} at $x$ in the direction $\omega$. Then the equation of radiative transfer takes the following form (see e.g.~\cite{Prahl1988,Arvo1993}):

\begin{equation}\label{eq:ERT}
L(x,\omega)=L_i(x,\omega) + TL(x,\omega),\qquad x\in D,\, \omega\in\Ss^2,
\end{equation}
where $L_i(x,\omega)$ is the incident volume emittance 
and $T:L^{\infty}(\R^3 \times \Ss^2;\R)\to L^{\infty}(\R^3 \times \Ss^2;\R)$  is the linear operator defined on the Banach space of essentially bounded real-valued functions $\ell:\R^3 \times \Ss^2\to\R$,  given by
\begin{equation} \label{eq:def:operator:T}
T\ell(x,\omega)= \mu_s
\int_{\R_{+}} dr \, \exp(-\mu r) \int_{\Ss^2} d\sigma(\hat{\omega}) \, f(\omega,x-\omega r,\hat{\omega}) \, \ell(x-\omega r,\hat{\omega}), 
\end{equation}
with  $f$ the so-called \emph{bidirectional scattering distribution function} and $\sigma$  the uniform probability measure on the unit sphere $\mathbb{S}^2$.  
The incident volume emittance $L_i$ is also defined by applying a linear operator $T_i$ to $L_e$: 
\begin{equation}\label{eq:def-Li}
L_i  = T_{i} L_{e},
\quad\mbox{with}\quad
T_{i}\ell(x,\omega)
=
\int_{0}^{+\infty} \ell(x-r\omega,\omega) 
\exp(- \mu r) dr.
\end{equation} 

In the following, we will denote the \emph{albedo coefficient} by $\rho:=\frac{\mu_s}{\mu}<1$. Moreover, since we consider an homogeneous biological tissue, the scattering function is given by the so-called \emph{Henyey-Greenstein function}, see~\cite{Henyey1941}, that is
\begin{equation}\label{eq:def:fGH}
f(\omega,x,\hat{\omega})=f_{HG}(\omega,\hat{\omega})=\frac{1-g^2}{(1+g^2-2g\langle\omega,\hat{\omega}\rangle)^{3/2}}, \quad \omega,\hat{\omega}\in\Ss^2,\,\forall x\in D,
\end{equation}
where the constant $g\in [0,1)$ is the \emph{anisotropy factor} of the medium.  
The function $\hat{\omega}\mapsto f_{HG}(\omega,\hat{\omega})$ is a bounded and infinitely differentiable probability density function on $\Ss^2$ with respect to the uniform probability $\sigma$. It only depends on the angle $\theta$ between $\omega$ and $\hat{\omega}$ since $\langle \omega,\hat{\omega}\rangle=\cos(\theta)$. The greater the anisotropic parameter $g$, the more likely the large values of $\cos(\theta)$ and the less the scattering of the ray (see Fig.~\ref{fig:phase_function}). 

\begin{figure}[h!]
\begin{center}
\includegraphics[width=0.6\textwidth]{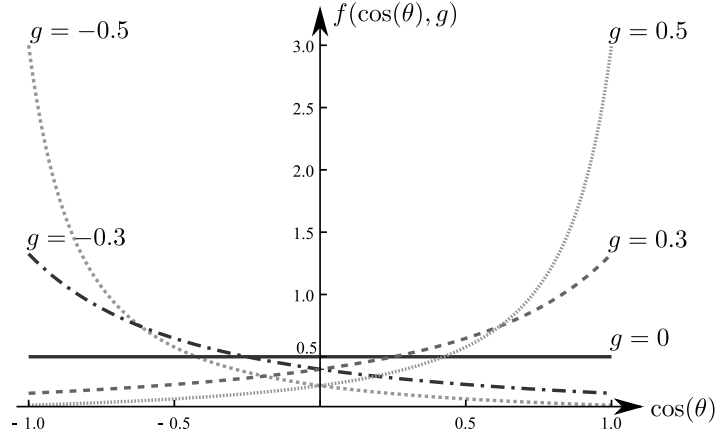}
\caption{Density function of $\cos(\theta)=\langle \omega,\hat{\omega}\rangle$ when $\hat{\omega}$ is scattered according to the Henyey-Greenstein function $f_{HG}(\omega,\cdot)$ for several values of the anisotropy parameter $g$.}\label{fig:phase_function}
\end{center}
\end{figure}


\subsection{Neumann series expansion of the solution}
In general, 
~(\ref{eq:ERT}) admits no analytical solution and a classical way to express its solution is to expand it 
in Neumann series. This method is based on the next classical and general result.

\begin{Thm}[\cite{Yoshida1980} p.69]\label{thm:Neumann}
Let $B$ be a Banach space equipped with a norm $\|\cdot\|$ and $A$ a linear operator on $B$. If  $\|A\|<1$, then the Neumann series $\sum_{n=0}^{\infty} A^n$ converges, the operator $\id - A$ is invertible and for any $x_0\in B$, the equation $x=Ax + x_0$ admits a unique solution given by
\begin{equation*}
x = (\id - A)^{-1} x_0 = \sum_{n=0}^{\infty} A^n x_0.
\end{equation*}
\end{Thm}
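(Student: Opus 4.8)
The plan is to follow the classical three-step scheme: first prove that the operator series converges, then identify its sum as a two-sided inverse of $\id-A$, and finally read off existence and uniqueness for the affine equation $x=Ax+x_0$.

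For the convergence step I would work in the space $\mathcal{L}(B)$ of bounded linear operators on $B$ with the operator norm, recalling that it is itself a Banach space because $B$ is complete. Submultiplicativity gives $\|A^n\|\leq\|A\|^n$, and since $\|A\|<1$ the numerical series $\sum_{n\geq 0}\|A\|^n=(1-\|A\|)^{-1}$ converges; hence the partial sums $S_N:=\sum_{n=0}^N A^n$ satisfy $\|S_M-S_N\|\leq\sum_{n=N+1}^M\|A\|^n\to 0$ as $N\to\infty$, so $(S_N)$ is Cauchy and converges to some $S\in\mathcal{L}(B)$, which is by definition $\sum_{n=0}^\infty A^n$. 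If one prefers not to invoke completeness of $\mathcal{L}(B)$, the same estimate shows that for each fixed $x\in B$ the series $\sum_n A^n x$ is absolutely convergent in $B$, hence convergent by completeness of $B$, and the uniform bound $\|S_N\|\leq(1-\|A\|)^{-1}$ shows the limit operator is bounded.

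For the inversion step I would use the telescoping identity $(\id-A)S_N=S_N(\id-A)=\id-A^{N+1}$, which is immediate by expanding the product. Since $\|A^{N+1}\|\leq\|A\|^{N+1}\to 0$, letting $N\to\infty$ and using continuity of composition with the fixed operator $\id-A$ gives $(\id-A)S=S(\id-A)=\id$, so $\id-A$ is invertible with inverse $S=\sum_{n=0}^\infty A^n$. Finally, the equation $x=Ax+x_0$ is exactly $(\id-A)x=x_0$: existence follows by setting $x:=Sx_0=\sum_{n=0}^\infty A^n x_0$ and checking $(\id-A)x=(\id-A)Sx_0=x_0$, while uniqueness follows since any two solutions $x,x'$ satisfy $(\id-A)(x-x')=0$, whence $x-x'=S(\id-A)(x-x')=0$.

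The argument is entirely routine, and I do not expect a genuine obstacle: the only points needing a moment's care are the appeal to completeness (of $\mathcal{L}(B)$, or equivalently of $B$ applied termwise) to make sense of the operator series, and the interchange of the limit $N\to\infty$ with multiplication by $\id-A$, which is justified by continuity of the product in a Banach algebra. This is why the statement is quoted directly from Yoshida; the proof merely assembles these standard facts.
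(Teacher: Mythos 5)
Your proof is correct and complete; it is the standard Neumann series argument (geometric bound via submultiplicativity, completeness of $\mathcal{L}(B)$, the telescoping identity $(\id-A)S_N=\id-A^{N+1}$, and the two-sided inverse giving existence and uniqueness). The paper itself offers no proof of this statement --- it is quoted directly from Yoshida --- so your argument simply supplies the classical demonstration the authors are implicitly relying on.
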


In order to apply Theorem \ref{thm:Neumann} in our context, let us now bound the norm of the operator $T$ defined above by \eqref{eq:def:operator:T}.
\begin{Lem}\label{lem:norm:T}
The operator $T$ defined in~\eqref{eq:def:operator:T}, with $f$ given by \eqref{eq:def:fGH},  
satisfies $\|T\|=\rho <1$, where we recall that we have set $\rho:=\frac{\mu_s}{\mu} <1$.
\end{Lem}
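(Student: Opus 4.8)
The plan is to verify the two inequalities $\|T\|\leq\rho$ and $\|T\|\geq\rho$ separately, using as the only structural input the fact recorded after \eqref{eq:def:fGH} that $\hat\omega\mapsto f_{HG}(\omega,\hat\omega)$ is a probability density on $\Ss^2$ with respect to $\sigma$, together with the elementary integral $\int_{\R_+}e^{-\mu r}\,dr=1/\mu$.

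For the upper bound, I would fix $\ell\in L^\infty(\R^3\times\Ss^2;\R)$ with $\|\ell\|_\infty\leq 1$ and estimate, for almost every $(x,\omega)$,
\begin{equation*}
|T\ell(x,\omega)|\leq \mu_s\int_{\R_+}dr\,e^{-\mu r}\int_{\Ss^2}d\sigma(\hat\omega)\,f_{HG}(\omega,\hat\omega)\,|\ell(x-\omega r,\hat\omega)|
\leq \mu_s\int_{\R_+}dr\,e^{-\mu r}\int_{\Ss^2}d\sigma(\hat\omega)\,f_{HG}(\omega,\hat\omega),
\end{equation*}
where I used $f\geq 0$, $|\ell|\leq 1$, and in \eqref{eq:def:operator:T} the fact that $f(\omega,x-\omega r,\hat\omega)=f_{HG}(\omega,\hat\omega)$ does not depend on the spatial variable for the homogeneous Henyey--Greenstein medium. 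Since $\int_{\Ss^2}f_{HG}(\omega,\hat\omega)\,d\sigma(\hat\omega)=1$ and $\mu_s\int_{\R_+}e^{-\mu r}\,dr=\mu_s/\mu=\rho$, this yields $|T\ell(x,\omega)|\leq\rho$ for a.e.\ $(x,\omega)$, hence $\|T\ell\|_\infty\leq\rho$; the same computation (now without the factor bounding $|\ell|$) shows $T\ell$ is essentially bounded and measurable, so $T$ genuinely maps $L^\infty$ into $L^\infty$ and $\|T\|\leq\rho$.

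For the lower bound, I would simply test $T$ on the constant function $\ell\equiv\mathbf{1}$. The same normalizations give $T\mathbf{1}(x,\omega)=\mu_s\int_{\R_+}e^{-\mu r}\,dr\int_{\Ss^2}f_{HG}(\omega,\hat\omega)\,d\sigma(\hat\omega)=\rho$ for every $(x,\omega)$, so $\|T\mathbf{1}\|_\infty=\rho=\rho\,\|\mathbf{1}\|_\infty$, whence $\|T\|\geq\rho$. Combining the two bounds gives $\|T\|=\rho$, and $\rho=\mu_s/(\mu_s+\mu_a)<1$ since $\mu_a>0$.

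I do not anticipate a genuine obstacle here: the argument is a direct application of Tonelli/Fubini together with the probabilistic normalization of $f_{HG}$. The only point requiring a line of care is the measurability and essential boundedness of $T\ell$ (so that $T$ is well defined as an operator on $L^\infty$), which follows from the same domination bound; one may invoke Tonelli to justify interchanging the order of integration in the estimate above.
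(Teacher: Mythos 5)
Your proof is correct and follows essentially the same route as the paper: the upper bound via the normalization of $f_{HG}$ as a probability density on $\Ss^2$ together with $\int_{\R_+}e^{-\mu r}\,dr=1/\mu$, and the lower bound by evaluating $T$ on the constant function $\mathbf{1}$, which gives $T\mathbf{1}\equiv\rho$. The extra remarks on measurability and Tonelli are a harmless refinement the paper leaves implicit.
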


\begin{proof} 
Let $\ell\in L^{\infty}(\R^3 \times \Ss^2;\R)$. We have
\begin{eqnarray*}
|T\ell(x,\omega)|
&\leq& \mu_s
\int_{\R_{+}} dr \, \exp(-\mu r) \int_{\Ss^2  } d\sigma(\hat{\omega}) \, \left| f_{HG}(\omega,\hat{\omega}) \, \ell(x-\omega r,\hat{\omega}) \right|  \\
&\leq&\mu_s \|\ell\|_{\infty} \int_{\R_{+}} dr \, \exp(-\mu r)
= \frac{\mu_s}{\mu} \|\ell\|_{\infty},
\end{eqnarray*}
since $f_{HG}$ is a density function on $\Ss^2$. Thus, $\|T\|\leq\frac{\mu_s}{\mu}$ and since $T\bf1\equiv \frac{\mu_s}{\mu}$, we obtain $\|T\|=\frac{\mu_s}{\mu}$ and the proof is complete.
\end{proof}

As a corollary of the previous considerations, we are able to derive an analytic expansion for the solution to equation~\eqref{eq:ERT}:
\begin{Cor}\label{cor:ERT:Neumann}
If $L_e \in L^{\infty}(\R^3 \times \Ss^2;\R)$, then the radiative transfer equation \eqref{eq:ERT} with a phase function given by \eqref{eq:def:fGH} admits a unique solution $L$ in $L^{\infty}(\R^3  \times \Ss^2;\R)$. Moreover, $L$ can be decomposed as $L=\sum_{n=0}^{\infty} T^n L_i=\sum_{n=0}^{\infty} [T^n\circ T_{i}] \, L_{e}$ 
where  
$T^0\equiv\id$ and where for $n\ge 1$, 
 the linear operator $T^n\circ T_{i}$ on $L^{\infty}(\R^3  \times \Ss^2;\R)$ is given by:
\begin{multline}\label{eq:def:Tn}
[T^n\circ T_{i}] \, \ell 
(x,\omega_0)= 
\mu_s^n \int_{\R_{+}^{n+1}} dr_0\cdots dr_{n} \, \exp\left( -\mu \sum_{j=0}^{n} r_j\right)
\int_{(\Ss^{2})^{n}} d\sigma^{\otimes n}(\omega_1,\ldots,\omega_n) \\
\prod_{j=0}^{n-1}f_{HG}\left( \omega_{j},\omega_{j+1}\right) \,
\ell\left( x - \sum_{k=0}^{n} \omega_{k} r_{k} ,\omega_n\right) .
\end{multline}
\end{Cor}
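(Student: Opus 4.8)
The plan is to obtain the first two assertions directly from Theorem~\ref{thm:Neumann} together with Lemma~\ref{lem:norm:T}, and then to prove the explicit expression~\eqref{eq:def:Tn} for $T^{n}\circ T_{i}$ by induction on~$n$.

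First I would record that $T_{i}$ maps $L^{\infty}(\R^{3}\times\Ss^{2};\R)$ into itself, since $|T_{i}\ell(x,\omega)|\leq\|\ell\|_{\infty}\int_{0}^{+\infty}e^{-\mu r}\,dr=\|\ell\|_{\infty}/\mu$; hence $L_{i}=T_{i}L_{e}\in L^{\infty}(\R^{3}\times\Ss^{2};\R)$ whenever $L_{e}$ is. By Lemma~\ref{lem:norm:T} we have $\|T\|=\rho<1$, so Theorem~\ref{thm:Neumann} applied with $B=L^{\infty}(\R^{3}\times\Ss^{2};\R)$, $A=T$ and $x_{0}=L_{i}$ shows that $\id-T$ is invertible and that equation~\eqref{eq:ERT}, namely $L=TL+L_{i}$, has the unique solution $L=\sum_{n\geq0}T^{n}L_{i}$. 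Substituting $L_{i}=T_{i}L_{e}$ and using associativity of composition gives $L=\sum_{n\geq0}[T^{n}\circ T_{i}]L_{e}$, the series converging in $L^{\infty}$ because $\|[T^{n}\circ T_{i}]L_{e}\|_{\infty}\leq\rho^{n}\|L_{e}\|_{\infty}/\mu$. It then remains only to identify $T^{n}\circ T_{i}$.

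For that I would argue by induction on $n\geq1$ through the identity $T^{n}\circ T_{i}=T\circ(T^{n-1}\circ T_{i})$ (with $T^{0}\circ T_{i}=T_{i}$, which is~\eqref{eq:def:Tn} at $n=0$ under the empty-product convention $\prod_{j=0}^{-1}=1$). The base case $n=1$ is obtained by inserting the definition~\eqref{eq:def-Li} of $T_{i}\ell$ into the definition~\eqref{eq:def:operator:T} of $T$ and merging the two radial integrals. For the inductive step, I would take~\eqref{eq:def:Tn} at order $n-1$, evaluate it at the point $(x-\omega_{0}r,\hat{\omega})$ so that its external direction slot is specialised to $\hat{\omega}$, and plug the result into $T\ell'(x,\omega_{0})=\mu_{s}\int_{\R_{+}}dr\,e^{-\mu r}\int_{\Ss^{2}}d\sigma(\hat{\omega})\,f_{HG}(\omega_{0},\hat{\omega})\,\ell'(x-\omega_{0}r,\hat{\omega})$. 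Renaming the outer radial variable $r$ as $r_{0}$ and shifting the old $r_{0},\dots,r_{n-1}$ to $r_{1},\dots,r_{n}$, and likewise renaming $\hat{\omega}$ as $\omega_{1}$ and shifting the old $\omega_{1},\dots,\omega_{n-1}$ to $\omega_{2},\dots,\omega_{n}$, the constant becomes $\mu_{s}^{n}$, the weight becomes $\exp(-\mu\sum_{j=0}^{n}r_{j})$, the spatial argument of $\ell$ becomes $x-\sum_{k=0}^{n}\omega_{k}r_{k}$, and the prepended factor $f_{HG}(\omega_{0},\omega_{1})$ combines with $\prod_{j=1}^{n-1}f_{HG}(\omega_{j},\omega_{j+1})$ to give $\prod_{j=0}^{n-1}f_{HG}(\omega_{j},\omega_{j+1})$; this is exactly~\eqref{eq:def:Tn}.

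The only point requiring care is the repeated appeal to Fubini--Tonelli used to interchange and merge the iterated integrals in the inductive step. This is legitimate because the integrand is dominated in modulus by $\mu_{s}^{n}\|\ell\|_{\infty}$ times a product of the densities $f_{HG}(\omega_{j},\cdot)$ (each of unit mass on $\Ss^{2}$ with respect to $\sigma$) and of the weights $e^{-\mu r_{j}}$, so the $(n+1)$-fold integral converges absolutely and, in fact, $\|[T^{n}\circ T_{i}]\ell\|_{\infty}\leq\rho^{n}\|\ell\|_{\infty}/\mu$; joint continuity of $f_{HG}$ and measurability of $\ell$ make the successive integrands measurable. Apart from this, the proof is just careful bookkeeping of the index shifts, which is the step most exposed to clerical slips and which I would therefore write out in full.
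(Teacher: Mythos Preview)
Your proposal is correct and follows essentially the same route as the paper: first bound $\|L_i\|_\infty\le\|L_e\|_\infty/\mu$, then invoke Theorem~\ref{thm:Neumann} and Lemma~\ref{lem:norm:T} for existence, uniqueness and the Neumann expansion, and finally obtain~\eqref{eq:def:Tn} by induction on~$n$. The paper's proof is terse---it simply states that ``Formula~\eqref{eq:def:Tn} is then found by induction'' without writing out the index shifts or the Fubini justification---so your version is a more detailed execution of the same argument rather than a different approach.
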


\begin{proof} Assume that $L_e\in L^{\infty}(\R^3 \times \Ss^2;\R)$. It is readily checked from the definition \eqref{eq:def-Li} of $L_i$ that we also have $L_i\in L^{\infty}(\R^3 \times \Ss^2;\R)$. Indeed, 
$$
\left\|L_i\right\|_{\infty}\le \frac{\|L_e\|_{\infty}}{\mu}<+\infty.
$$
Hence, Theorem~\ref{thm:Neumann} and Lemma~\ref{lem:norm:T} provide the existence and uniqueness of the solution, as well as its expansion in Neumann series. Formula~\eqref{eq:def:Tn} is then found by induction.
\end{proof}

Our next step is now to recast representation \eqref{eq:def:Tn} into a probabilistic formula.

\subsection{Probabilistic representation}\label{sec:proba:repr}	
The Neumann expansion of $T$  enables us to express $L=\sum_{n=0}^{\infty}T^nL_i$ as an expectation. To this aim, we now introduce some notation. 
Let us define \begin{equation}\label{eq:def-A}
\mathcal{A}=\bigcup_{n=0}^{\infty}\mathcal{M}_{n},
\quad\mbox{ with }\quad
\mathcal{M}_n = \R_+^{n+1} \times \left(\Ss^2\right)^{n+1}.
\end{equation}
We denote by $( {\bf r},\boldsymbol{\omega})$ a generic element of $\mathcal{A}$ and by 
$({\bf r}_n,\boldsymbol{\omega}_n)$ a generic element of $\mathcal{M}_n$ for $n\in\mathbb{N}$ with ${\bf{r}}_n=( r_0,\ldots,r_n)$ and $\boldsymbol{\omega}_n=(\omega_0,\ldots,\omega_n)$. If $({\bf r},\boldsymbol{\omega})\in\mathcal{A}$, we set 
\begin{equation}
\label{rlength}
|{\bf r}|=\sum_{n=1}^{\infty} n \, \mathbf{1}_{\mathcal{M}_{n}}({\bf r},\boldsymbol{\omega})
\end{equation}
and call it \emph{size} or \emph{length} of the path.
For $n\in \mathbb{N}$, let\begin{equation}\label{eq:def:Gxn}
G_x^{(n)}({\bf r}_n,\boldsymbol{\omega}_n)=L_e\left( x - \sum_{k=0}^{n} \omega_{k} r_{k} ,\omega_n\right)
\end{equation}
be defined on $\mathcal{M}_n$, and let
\begin{equation}
\label{eq:def:Gx}
G_x({\bf r},\boldsymbol{\omega})= \sum_{n=0}^{\infty} G_x^{(n)}({\bf r}_n,\boldsymbol{\omega}_n) \, {\bf1}_{\mathcal{M}_n}({\bf r},\boldsymbol{\omega})
\end{equation}
be a function on $\mathcal{A}$. Let $Y=({\bf R},\boldsymbol{W})$ be a $\mathcal{A}$-valued random variable defined on a probability space $(\Omega,\mathcal{F},\mathbb{P})$, whose law $\nu$ is given by
\begin{equation}\label{eq:def:nu}
\nu(F) = \sum_{n=0}^{\infty} (1-\rho) \rho^{n} \nu_n\lp F\cap \mathcal{M}_n\rp,
\end{equation}
where we recall that $\rho=\frac{\mu_{s}}{\mu}$ and 
where $\nu_n$ is the probability measure on $\mathcal{M}_n$ defined by
\begin{equation}
\nu_n
(d{\bf r}_n, d\boldsymbol{\omega}_n)
= \mu^{n+1} \, e^{ -\mu \sum_{j=0}^{n} r_j} \prod_{j=0}^{n-1}\, f_{HG}\left( \omega_{j},\omega_{j+1}\right)
d{\bf r}_n
\, \sigma^{\otimes {(n+1)}}(d\boldsymbol{\omega}_n)\label{eq:def:nun}
\end{equation}
with  $\prod_{j=0}^{-1}a_j=1$ by convention.
Before we express $L$ as an expectation involving $G_x$ and $Y$, let us state some properties of $Y=({\bf R},\boldsymbol{W})$. 
\begin{Prop}\label{cor:mun:backward}
Let $n\in\N$ and let $\pi$ be a permutation of $\{0,\ldots,n\} $. 
Let us recall that $\nu$ defined by~\eqref{eq:def:nu} is the distribution of 
$Y=({\bf R},\boldsymbol{W})$. Then, conditionally on the event $\left\{ Y\in\mathcal{M}_n\right\}$, 
the distribution of the variable $(R_0,\ldots,R_n,W_0,\ldots,W_n)$ is the probability measure $\nu_n$ defined by \eqref{eq:def:nun}, which satisfies 
 \begin{equation}
 \label{symnun}
\nu_n(dr_0,\ldots,dr_n,d\omega_0,\ldots,d\omega_n)=
\nu_n(dr_{\pi(0)},\ldots,dr_{\pi(n)},d\omega_n,d\omega_{n-1},\ldots,d\omega_0).
\end{equation}
In other words, on the event 
$\left\{ Y\in\mathcal{M}_n\right\}$, 
the random variables 
 $(R_{\pi(0)},\ldots,R_{\pi(n)},W_{n},W_{n-1},\ldots,W_0)$ and $(R_0,\ldots,R_n,W_0,\ldots,W_n)$ have the same distribution $\nu_n$. 
\end{Prop}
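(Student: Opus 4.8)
The plan is to prove the measure identity \eqref{symnun} directly from the explicit density in \eqref{eq:def:nun}, and then to read off the distributional consequence for $(R_{\pi(0)},\ldots,R_{\pi(n)},W_n,\ldots,W_0)$ together with the conditioning statement, which is essentially a tautology given the form of $\nu$ in \eqref{eq:def:nu}.

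First I would dispose of the conditioning claim. Since $\nu = \sum_{n\ge 0}(1-\rho)\rho^n\,\nu_n(\cdot\cap\mathcal{M}_n)$ and the sets $\mathcal{M}_n$ are pairwise disjoint, we have $\Prob(Y\in\mathcal{M}_n)=(1-\rho)\rho^n$, and for any measurable $F\subset\mathcal{M}_n$, $\Prob(Y\in F\mid Y\in\mathcal{M}_n)=\nu_n(F)$. So conditionally on $\{Y\in\mathcal{M}_n\}$, the vector $(R_0,\ldots,R_n,W_0,\ldots,W_n)$ has law exactly $\nu_n$; that is the first assertion with no work beyond unpacking definitions.

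The core of the argument is the symmetry \eqref{symnun}. Here I would split the density in \eqref{eq:def:nun} into its three factors: the exponential weight $\mu^{n+1} e^{-\mu\sum_{j=0}^n r_j}$, the Lebesgue measure $d{\bf r}_n$ on the radii, and the angular part $\prod_{j=0}^{n-1} f_{HG}(\omega_j,\omega_{j+1})\,\sigma^{\otimes(n+1)}(d\boldsymbol{\omega}_n)$. The radial part is trivially invariant under the permutation $\pi$ of the indices $\{0,\ldots,n\}$, because $\sum_{j=0}^n r_j=\sum_{j=0}^n r_{\pi(j)}$ and Lebesgue measure $dr_0\cdots dr_n$ is symmetric under any permutation of coordinates. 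For the angular part, the key observation is that $f_{HG}(\omega,\hat\omega)$ depends only on $\langle\omega,\hat\omega\rangle$, hence $f_{HG}(\omega_j,\omega_{j+1})=f_{HG}(\omega_{j+1},\omega_j)$; reversing the order of the angles, $\omega_j\mapsto\omega_{n-j}$, turns the product $\prod_{j=0}^{n-1} f_{HG}(\omega_j,\omega_{j+1})$ into $\prod_{j=0}^{n-1} f_{HG}(\omega_{n-j},\omega_{n-j-1})=\prod_{i=0}^{n-1} f_{HG}(\omega_{i+1},\omega_i)=\prod_{i=0}^{n-1} f_{HG}(\omega_i,\omega_{i+1})$, and $\sigma^{\otimes(n+1)}$ is invariant under reversing the order of its factors. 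Combining these two invariances — $\pi$ on the radii, order-reversal on the angles — gives \eqref{symnun}. The distributional statement about $(R_{\pi(0)},\ldots,R_{\pi(n)},W_n,\ldots,W_0)$ is then immediate: the pushforward of $\nu_n$ under the map $({\bf r}_n,\boldsymbol{\omega}_n)\mapsto(r_{\pi(0)},\ldots,r_{\pi(n)},\omega_n,\ldots,\omega_0)$ equals $\nu_n$ by \eqref{symnun}.

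I do not expect any genuine obstacle here; the statement is a bookkeeping lemma. The only mild subtlety is to handle the radial permutation $\pi$ and the angular reversal as two independent symmetries acting on disjoint blocks of variables (note that on the left of \eqref{symnun} the angles are not permuted by $\pi$ but merely reversed, and the radii are permuted by $\pi$ but not reversed), so one should be careful not to conflate the two index maps. A secondary point worth a sentence is the boundary convention $\prod_{j=0}^{-1}=1$ for the case $n=0$, where $\mathcal{M}_0=\R_+\times\Ss^2$, $\nu_0(dr_0,d\omega_0)=\mu e^{-\mu r_0}\,dr_0\,\sigma(d\omega_0)$, and the claimed identity is trivial since the only permutation of $\{0\}$ is the identity and reversing a single angle does nothing.
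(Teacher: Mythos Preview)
Your proof is correct and follows essentially the same approach as the paper: first use the disjointness of the $\mathcal{M}_n$ to read off the conditional law $\nu_n$, then obtain \eqref{symnun} from the symmetry $f_{HG}(\omega,\hat\omega)=f_{HG}(\hat\omega,\omega)$ together with the permutation invariance of the radial density. Your version is simply more explicit about the separate invariances of the radial and angular blocks and about the $n=0$ edge case, which the paper leaves implicit.
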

\begin{proof}
By definition of $\nu=\mathcal{L}(Y)$ and since $\left(\mathcal{M}_p\right)_{p\in\mathbb{N}}$ is a collection of pairwise disjoint sets, it is straightforward that 
 $\nu_n$ is the distribution of the variable $(R_0,\ldots,R_n,W_0,\ldots,W_n)$ on the set $\left\{ Y\in\mathcal{M}_n\right\}$. 
Let us now emphasize that the phase function is symmetric, i.e.  $f_{HG}(\omega,\cdot)=f_{HG}(\cdot,\omega)$. 
Then replacing 
$f_{HG}(\omega_{j},\omega_{j+1})$ by $f_{HG}(\omega_{j+1},\omega_{j})$ and using the invariance of the definition of $\nu_n$  by  permutations of the variables $(r_{0},\ldots,r_{n})$, we obtain Equation \eqref{symnun}. 
\end{proof}
\begin{Cor}\label{cor:unif:angles} 
For all $n\in\N$, conditionally on the events $\{  Y\in \mathcal{M}_{n}
\}$ and $\{  Y\in \bigcup_{p\ge n}\mathcal{M}_{p}
\}$,    for any $j=0,\ldots,n$, 
the marginal distribution $\gamma_{W_{j}}$ of the direction $W_{j}$ is the uniform probability $\sigma$ on $\Ss^{2}$. In particular, $W_0$ is uniformly distributed on $\Ss^2$. 
\end{Cor}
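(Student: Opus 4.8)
The plan is to identify, conditionally on $\{Y\in\mathcal{M}_n\}$, the joint law of $(W_0,\ldots,W_n)$ as that of a Markov chain on $\Ss^2$ whose initial law $\sigma$ is also an invariant measure, and then to obtain the two announced conditional statements by a straightforward mixing argument.

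First I would condition on $\{Y\in\mathcal{M}_n\}$, so that by Proposition~\ref{cor:mun:backward} the vector $(R_0,\ldots,R_n,W_0,\ldots,W_n)$ has law $\nu_n$ as in~\eqref{eq:def:nun}. Integrating out the radial variables contributes only the constant $\mu^{n+1}\int_{\R_+^{n+1}}e^{-\mu(r_0+\cdots+r_n)}\,dr_0\cdots dr_n=1$, hence the conditional law of $(W_0,\ldots,W_n)$ has density $\prod_{j=0}^{n-1}f_{HG}(\omega_j,\omega_{j+1})$ with respect to $\sigma^{\otimes(n+1)}$. Since $\hat\omega\mapsto f_{HG}(\omega,\hat\omega)$ is a probability density on $\Ss^2$ with respect to $\sigma$, this is exactly the law of the Markov chain with state space $\Ss^2$, initial distribution $\sigma$, and transition kernel $K(\omega,d\hat\omega)=f_{HG}(\omega,\hat\omega)\,\sigma(d\hat\omega)$; integrating the $\omega_j$'s successively from $j=n$ down to $j=1$ also re-checks that this density has total mass $1$, consistently with $\nu_n$ being a probability measure.

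Then I would use that $\sigma$ is invariant for $K$: because $f_{HG}$ is symmetric --- directly from~\eqref{eq:def:fGH}, and as already noted in the proof of Proposition~\ref{cor:mun:backward} --- one has $\int_{\Ss^2}f_{HG}(\omega,\hat\omega)\,\sigma(d\omega)=\int_{\Ss^2}f_{HG}(\hat\omega,\omega)\,\sigma(d\omega)=1$ for every $\hat\omega\in\Ss^2$, i.e. $\sigma K=\sigma$. Since the chain starts at $\sigma$, induction gives that every marginal $\gamma_{W_j}$ with $j=0,\ldots,n$ equals $\sigma$; for $j=0$ the product defining the density is possibly empty (e.g. on $\mathcal{M}_0$), which still gives $\gamma_{W_0}=\sigma$ on each $\mathcal{M}_n$. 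Finally, the event $\{Y\in\bigcup_{p\ge n}\mathcal{M}_p\}$ is the disjoint union of the $\{Y\in\mathcal{M}_p\}$, $p\ge n$, and for each such $p$ and each $j\le n\le p$ the conditional law of $W_j$ given $\{Y\in\mathcal{M}_p\}$ is $\sigma$ by the first part; mixing over $p\ge n$ yields the second assertion, and the particular choice $n=0$, for which $\bigcup_{p\ge0}\mathcal{M}_p$ has full probability, shows that $W_0$ is unconditionally uniform on $\Ss^2$.

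This is mostly bookkeeping; the only substantive ingredient is the invariance $\sigma K=\sigma$, which relies on the symmetry of the Henyey--Greenstein function together with the fact that it integrates to $1$ in each variable against $\sigma$. I do not anticipate any genuine obstacle.
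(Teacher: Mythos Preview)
Your proof is correct and follows essentially the same route as the paper: both rely on the symmetry $f_{HG}(\omega,\hat\omega)=f_{HG}(\hat\omega,\omega)$ together with the density property of $f_{HG}$ to integrate out all angular variables except $\omega_j$, and both conclude the second assertion by mixing over the disjoint events $\{Y\in\mathcal{M}_p\}$, $p\ge n$. The only difference is cosmetic: the paper phrases the computation as a direct marginalization of $\nu_n$, whereas you organize the same integration via the Markov chain language (initial law $\sigma$, invariance $\sigma K=\sigma$), which is a pleasant but equivalent repackaging.
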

\begin{proof} Let $n,j\in\N$ such that $j\le n$. 
 By proposition \ref{cor:mun:backward}, on the event $\left\{  Y\in \mathcal{M}_{n}\right\}$, the probability measure $\nu_n$ defined by~\eqref{eq:def:nun} is the  distribution of $({\bf R}_n,{\boldsymbol{W}}_n)$. Then, 
integrating the law $\nu_{n}$ with respect to all variables except  
$w_{j}$ and  using that $f_{HG}(\omega,\cdot)=f_{HG}(\cdot,\omega)$ is a density function on $\Ss^{2}$, one obtains  that   on the event $\left\{  Y\in \mathcal{M}_{n}\right\},$ $W_j$ is uniformly distributed on $\Ss^2$.  Since this also holds replacing $n $ by  any $p\ge n$ and since $\mathcal{M}_p$, $p\ge n,$ are pairwise disjoint sets, this implies that  on the event $\{  Y\in \bigcup_{p\ge n}\mathcal{M}_{n}\}$, $W_j$ is also uniformly distributed on $\Ss^2$.   
\end{proof}

\begin{Prop}\label{prop:expectation}
The series $\sum_{n=0}^{\infty} T^nL_i$ can be expressed as
\begin{equation}\label{eq:expectation:representation}
L(x,\omega)=\sum_{n=0}^{\infty} T^nL_i(x,\omega) = \frac{1}{\mu_a} \E \left[ G_x(Y) \mid W_{0}=\omega\right].
\end{equation}
\end{Prop}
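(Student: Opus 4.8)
The plan is to match the right-hand side of \eqref{eq:expectation:representation} term by term with the Neumann expansion $L(x,\omega)=\sum_{n\ge 0}[T^{n}\circ T_{i}]L_{e}(x,\omega)$ furnished by Corollary~\ref{cor:ERT:Neumann}, after decomposing the expectation along the partition $\mathcal{A}=\bigcup_{n}\mathcal{M}_{n}$ built into the law $\nu$ of $Y$. Since $|G_{x}^{(n)}|\le\|L_{e}\|_{\infty}$ on $\mathcal{M}_{n}$, the variable $G_{x}(Y)$ is bounded, so all the integrals and series below are absolutely convergent (the geometric weights $(1-\rho)\rho^{n}$ with $\rho<1$ taking care of the sum over $n$), and Fubini--Tonelli may be applied freely.

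I would first dispose of the conditioning on the null event $\{W_{0}=\omega\}$ by the usual test-function characterisation. By Corollary~\ref{cor:unif:angles}, $W_{0}$ is $\sigma$-distributed, so it suffices to show that for every bounded measurable $h\colon\Ss^{2}\to\R$,
\[
\E\!\left[h(W_{0})\,G_{x}(Y)\right]=\mu_{a}\int_{\Ss^{2}}h(\omega)\,L(x,\omega)\,\sigma(d\omega),
\]
which identifies the ($\sigma$-a.e.\ defined) version $\omega\mapsto\E[G_{x}(Y)\mid W_{0}=\omega]$ with $\mu_{a}L(x,\cdot)$, and hence yields \eqref{eq:expectation:representation}.

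To compute the left-hand side, I would use the definition \eqref{eq:def:nu}--\eqref{eq:def:nun} of $\nu$ together with $G_{x}^{(n)}({\bf r}_{n},\boldsymbol{\omega}_{n})=L_{e}\big(x-\sum_{k=0}^{n}\omega_{k}r_{k},\omega_{n}\big)$ to write
\[
\E\!\left[h(W_{0})\,G_{x}(Y)\right]
=\sum_{n=0}^{\infty}(1-\rho)\rho^{n}\,\mu^{n+1}\!\int_{\Ss^{2}}\!\!h(\omega_{0})\!\int_{\R_{+}^{n+1}}\!\!\!e^{-\mu\sum_{j=0}^{n}r_{j}}\!\int_{(\Ss^{2})^{n}}\!\prod_{j=0}^{n-1}f_{HG}(\omega_{j},\omega_{j+1})\,L_{e}\Big(x-\sum_{k=0}^{n}\omega_{k}r_{k},\omega_{n}\Big)\,\sigma^{\otimes n}\,d{\bf r}_{n}\,\sigma(d\omega_{0}).
\]
Comparing the inner multiple integral with \eqref{eq:def:Tn}, it equals $\mu_{s}^{-n}\,[T^{n}\circ T_{i}]L_{e}(x,\omega_{0})$, so the $n$-th term is $(1-\rho)\rho^{n}\mu^{n+1}\mu_{s}^{-n}\int_{\Ss^{2}}h(\omega_{0})[T^{n}\circ T_{i}]L_{e}(x,\omega_{0})\,\sigma(d\omega_{0})$. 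The algebraic identity $1-\rho=\mu_{a}/\mu$ and $\rho=\mu_{s}/\mu$ collapses the prefactor to $(1-\rho)\rho^{n}\mu^{n+1}\mu_{s}^{-n}=\mu_{a}$, uniformly in $n$. Summing over $n$ and pulling $h$ out of the absolutely convergent series reproduces $\mu_{a}\int_{\Ss^{2}}h(\omega)\sum_{n\ge0}[T^{n}\circ T_{i}]L_{e}(x,\omega)\,\sigma(d\omega)=\mu_{a}\int_{\Ss^{2}}h(\omega)L(x,\omega)\,\sigma(d\omega)$ by Corollary~\ref{cor:ERT:Neumann}, as desired.

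The only genuinely delicate point is the bookkeeping of the null-set conditioning, i.e.\ making precise in what sense \eqref{eq:expectation:representation} is to be read pointwise in $\omega$; the rest is the exchange of sum and integral (controlled by $\|L_{e}\|_{\infty}$ and the factor $\rho^{n}$) and the cancellation of constants. Alternatively, one may avoid test functions altogether and work with the regular conditional distribution of $({\bf R}_{n},\boldsymbol{W}_{n})$ given $\{Y\in\mathcal{M}_{n}\}\cap\{W_{0}=\omega\}$: by Proposition~\ref{cor:mun:backward} this is the measure with density $\mu^{n+1}e^{-\mu\sum_{j=0}^{n}r_{j}}\prod_{j=0}^{n-1}f_{HG}(\omega_{j},\omega_{j+1})$ with $\omega_{0}$ frozen at $\omega$, and feeding it into $\E[G_{x}^{(n)}\mid Y\in\mathcal{M}_{n},W_{0}=\omega]$ leads to exactly the same computation and the same constant $\mu_{a}$.
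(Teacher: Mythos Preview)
Your proof is correct and follows essentially the same route as the paper: a term-by-term identification of $[T^{n}\circ T_{i}]L_{e}(x,\omega)$ with the $\mathcal{M}_{n}$-contribution to the expectation, hinging on the same constant cancellation $(1-\rho)\rho^{n}\mu^{n+1}=\mu_{a}\mu_{s}^{n}$. The only difference is cosmetic: the paper freezes $\omega_{0}=\hat{\omega}_{0}$ directly via a Dirac mass (your ``alternative'' paragraph is precisely their argument), whereas you wrap the same computation in a test-function characterisation of the conditional expectation---a touch more careful about the null-set conditioning, but not a different idea.
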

\begin{proof} 
We shall relate $[T_{n}\circ T_{i}]L_{e}$ with the measure $\nu_{n}$ defined above, which will be sufficient for our purposes. To this aim, consider $\hat{\omega}_{0}\in\Ss^{2}$, and write a somehow more cumbersome version of formula~\eqref{eq:def:Tn} with $\ell=L_e$:
\begin{multline*}
[T^n\circ T_{i}] \, L_{e}(x,\hat{\omega}_0)= 
\mu_s^n \int_{\R_{+}^{n+1}} dr_0\cdots dr_{n} \, \exp\left( -\mu \sum_{j=0}^{n} r_j\right)
\int_{(\Ss^{2})^{n}} d\sigma^{\otimes n}(\omega_1,\ldots,\omega_n) \\
f_{HG}\left( \hat{\omega}_{0},\omega_{1}\right) 
\prod_{j=1}^{n-1}f_{HG}\left( \omega_{j},\omega_{j+1}\right) \,
L_e\left( x - \hat{\omega}_{0} r_{0} - \sum_{k=0}^{n} \omega_{k} r_{k} ,\omega_n\right) .
\end{multline*}
Noting that 
$\mu_{s}^{n} = \frac{(1-\rho) \, \rho^{n} \, \mu^{n+1}}{\mu_{a}}$,  
we thus get the following identity:
\begin{multline*}
[T^n\circ T_{i}] \, L_{e}(x,\hat{\omega}_0)= 
\frac{(1-\rho) \, \rho^{n} \, \mu^{n+1}}{\mu_{a}}
 \int_{\R_{+}^{n+1}} dr_0\cdots dr_{n} \, \exp\left( -\mu \sum_{j=0}^{n} r_j\right)
\int_{(\Ss^{2})^{n+1}} d\hat{\sigma}_{n,\hat{\omega}_0}(\omega_{0},\ldots,\omega_n) \\
\prod_{j=0}^{n-1}f_{HG}\left( \omega_{j},\omega_{j+1}\right) \,
L_e\left( x - \sum_{k=0}^{n} \omega_{k} r_{k} ,\omega_n\right) ,
\end{multline*}
where the measure $\hat{\sigma}_{n,\hat{\omega}_0}$ on $\Ss^{n+1}$ is given by 
$\hat{\sigma}_{n,\hat{\omega}_0}(d\omega_{0},\ldots,d\omega_{n})=\delta_{\hat{\omega}_{0}}(d\omega_{0}) \otimes \sigma^{\otimes n}(d\omega_{1},\ldots,d\omega_{n})$.

Finally set $\varphi_{n}({\bf r}_n, \boldsymbol{\omega}_n)= L_e( x - \sum_{k=0}^{n} \omega_{k} r_{k} ,\omega_n)$. Taking into account the identity above and \eqref{eq:def:nun}, we easily get
\begin{equation*}
[T^n\circ T_{i}] \, L_{e}(x,\hat{\omega}_0)= 
\frac{(1-\rho) \, \rho^{n}}{\mu_{a}}  \, 
\nu_{n}\lp \varphi_{n} | \,  \omega_{0}=\hat{\omega}_{0}\rp,
\end{equation*}
from which our claim is straightforward.  
\end{proof}

\begin{Rem}\label{rmk:def-Y}
In the following, we will call the random variable $Y=({\bf R},\boldsymbol{W})$ a \emph{ray}. Notice that it does not correspond exactly to a ray of light in the physics sense, since $Y$ has a finite length (though random) and since a given realization of $Y$ does not carry the information due to light absorption. Also notice that $Y$ owns a complete probabilistic description which allows to exactly simulate it (see Proposition~\ref{prop:simulation:and:RW} for simulation considerations). 
\end{Rem}


\subsection{Model for light propagation}

Observe that our formula \eqref{eq:expectation:representation} induces a Monte Carlo procedure to estimate $L(x,\omega)$ for each $(x,\omega)\in\mathbb{R}^{3}\times\mathbb{S}^{2}$ based on the simulation of independent copies of $Y$. Nevertheless this procedure is time consuming. Indeed, assuming that the light is only emitted by an optical fiber, many realizations $y$ of $Y$ lead to a null contribution in the estimation of $L(x,\omega)$.  Our aim is now to  accelerate our simulation by means of a coupling between random variables corresponding to different $(x,\omega)$. Towards this aim, we now focus on an averaged model for light propagation.

Let thus $V\subset\R^{3}$ be a cube whose center coincides with the origin. We discretize it into a partition of $K$ smaller cubes (voxels in the image processing terminology) $\{V_{k}, k=0,\ldots, K-1\}$, whose volume equals $h^{3}$, $h\in\R_{+}$ and such that the origin is the center of $V_{0}$.  Let us denote by $x_{k}$ the center of the voxel $V_{k}$. We work under the following simplified assumption for the form of the light source:
\begin{Hyp}\label{hyp:L-emitted}
We assume that the only emission of light in the domain $V$ comes from the optical fiber. Let $C^{2\alpha}\subset\Ss^{2}$ denote the cone with opening angle $2\alpha$, whose summit is placed at the origin and whose axis follows $-\vec{e}_3$.  The light source is defined by $S=\{(x,\omega): x\in V_{0},\,\omega\in C^{2\alpha} \}$. We assume that the emission of light satisfies
\begin{equation}\label{eq:def-L-emitted}
L_e(x,\omega)
=c \, \mathbf{1}_{V_{0}\times C^{2\alpha}}(x,\omega)
:=\left\{
                    \begin{array}{ll}
                      c, & \text{ if } (x,\omega)\in V_{0}\times C^{2\alpha}, \\
                      0, & \text{ otherwise},
                    \end{array}
                  \right.
\end{equation}
where $c>0$ is a given constant. 
\end{Hyp}
This model remains close to reality and it is possible to refine it by weighting the light directions of the source in order to stick better to the shape of the fiber. With Hypothesis~\ref{hyp:L-emitted} in mind, we are interested in estimating the \emph{fluence rate} at the center of the voxels $V_{k}$, $k\neq0$, that is the mean light intensity averaged in all directions 
\begin{equation}\label{eq:def-L-xk}
L(x_{k}):=\int_{\Ss^{2}} L(x_{k}, \omega_{0}) \, \sigma(d\omega_{0}).
\end{equation}
This quantity admits a nice probabilistic representation.

\begin{Prop}
Let $k\in \{0,\ldots, K-1\}$ and let $Y=\left({\boldsymbol R},\boldsymbol{W}\right)$ be a random variable with distribution $\nu$ defined by \eqref{eq:def:nu}. Then, the fluence rate $L(x_{k})$ at the center $x_k$ of the voxel $V_k$, which is defined by~\eqref{eq:def-L-xk}, can also be expressed as
\begin{equation}\label{eq:luminance:proba:repres}
L(x_{k}) =
\frac{c}{\mu_a}\Prob\left(x_{k}-\sum_{j=0}^{\left|{\boldsymbol{R}}\right|}R_{j}W_j\in V_{0} ,\,W_{\left|{\boldsymbol{R}}\right|}\in C^{2\alpha}\right)
\end{equation}
where we recall that  the length $\left|{\boldsymbol{R}}\right|$ of the ray $Y$ is defined by \eqref{rlength}. 
\end{Prop}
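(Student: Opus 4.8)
The plan is to combine the pointwise probabilistic representation of Proposition~\ref{prop:expectation} with the fact, recorded in Corollary~\ref{cor:unif:angles}, that the marginal law of $W_0$ is the uniform measure $\sigma$, and then to substitute the explicit light source prescribed by Hypothesis~\ref{hyp:L-emitted}.

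First I would start from the definition~\eqref{eq:def-L-xk} of the fluence rate and insert the identity $L(x_k,\omega_0)=\frac{1}{\mu_a}\E\!\left[G_{x_k}(Y)\mid W_0=\omega_0\right]$ provided by~\eqref{eq:expectation:representation} (valid for $\sigma$-almost every $\omega_0$, which is enough under the integral sign). This gives
\[
L(x_k)=\frac{1}{\mu_a}\int_{\Ss^2}\E\!\left[G_{x_k}(Y)\mid W_0=\omega_0\right]\sigma(d\omega_0).
\]
The crucial step is then to recognize that, since $\mathcal{L}(W_0)=\sigma$ by Corollary~\ref{cor:unif:angles}, the right-hand side equals $\frac{1}{\mu_a}\E\!\left[G_{x_k}(Y)\right]$ by the tower property (i.e.\ the defining property of conditional expectation with respect to the random variable $W_0$). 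Here one should note that $G_{x_k}(Y)$ is integrable: for each realization, $Y$ lies in exactly one $\mathcal{M}_n$, so the series~\eqref{eq:def:Gx} reduces to the single term $G_{x_k}^{(n)}$, which is bounded by $\|L_e\|_\infty=c$; in particular no convergence issue arises.

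It then remains to evaluate $\E\!\left[G_{x_k}(Y)\right]$ for the source of Hypothesis~\ref{hyp:L-emitted}. Substituting $L_e=c\,\mathbf{1}_{V_0\times C^{2\alpha}}$ into~\eqref{eq:def:Gxn} yields $G_{x_k}^{(n)}({\bf r}_n,\boldsymbol{\omega}_n)=c\,\mathbf{1}_{V_0}\!\big(x_k-\sum_{j=0}^{n}\omega_j r_j\big)\,\mathbf{1}_{C^{2\alpha}}(\omega_n)$, and hence, using that $|{\boldsymbol R}|=n$ on $\{Y\in\mathcal{M}_n\}$,
\[
G_{x_k}(Y)=c\,\mathbf{1}_{\left\{x_k-\sum_{j=0}^{|{\boldsymbol R}|}R_jW_j\in V_0\right\}}\,\mathbf{1}_{\left\{W_{|{\boldsymbol R}|}\in C^{2\alpha}\right\}}.
\]
Taking expectations gives $\E\!\left[G_{x_k}(Y)\right]=c\,\Prob\!\big(x_k-\sum_{j=0}^{|{\boldsymbol R}|}R_jW_j\in V_0,\ W_{|{\boldsymbol R}|}\in C^{2\alpha}\big)$, which is precisely~\eqref{eq:luminance:proba:repres} after dividing by $\mu_a$.

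The only genuinely delicate point is the passage from the $\sigma$-average of conditional expectations to the plain expectation $\E[G_{x_k}(Y)]$; it rests on having a regular conditional distribution of $Y$ given $W_0$ and on $\mathcal{L}(W_0)=\sigma$, both of which are available from the construction of $\nu$ in~\eqref{eq:def:nu}--\eqref{eq:def:nun} and from Corollary~\ref{cor:unif:angles}. The rest is routine bookkeeping with the definitions of $G_x$, the path length $|{\bf r}|$, and the source term.
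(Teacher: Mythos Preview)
Your proof is correct and follows essentially the same route as the paper: start from the definition~\eqref{eq:def-L-xk}, plug in the conditional representation of Proposition~\ref{prop:expectation}, use Corollary~\ref{cor:unif:angles} to remove the conditioning on $W_0$ via the tower property, and finally identify $G_{x_k}(Y)$ as $c$ times the desired indicator using Hypothesis~\ref{hyp:L-emitted}. Your additional remarks on integrability and on the existence of a regular conditional distribution are welcome but not strictly needed here, and the paper's version omits them.
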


\begin{proof}
Invoking Proposition~\ref{prop:expectation} and by definition of $L$,  we get 
\begin{equation*}
L(x_{k})
= \frac{1}{\mu_a}\int_{\Ss^2}  \E\left[G_{x_{k}}(Y)|W_0=\omega_0\right]\, \sigma(d\omega_{0})
\end{equation*}
 where $G_{x_k}$ is defined by \eqref{eq:def:Gx}. Since by Corollary \ref{cor:unif:angles}, $\sigma$ is the distribution of $W_0$, the previous equation can be written as
$$
 L(x_{k})
=\frac{1}{\mu_a}\,\E\left[G_{x_{k}}(Y)\right].
$$ 
 Then using equations \eqref{eq:def:Gxn} and \eqref{rlength}, we get 
 $$
 G_{x_k} (Y)=L_e\left(x_k-\sum_{j=0}^{\left|{\boldsymbol{R}}\right|}R_{j}W_j,\,W_{\left|{\boldsymbol{R}}\right|}\right).
 $$
Now applying Hypothesis \ref{hyp:L-emitted}, the random variable $G_{x_{k}}(Y)$ can be expressed as 
$$
G_{x_{k}}(Y)=c \, \mathbf{1}_{\lac x_{k}-\sum_{j=0}^{\left|{\boldsymbol{R}}\right|}R_{j}W_j\in V_{0} ,\,W_{\lb\boldsymbol{R}\rb}\in C^{2\alpha}\rac},
$$
which finishes our proof.
\end{proof}


Now, instead of seeing $Y$ as a ray starting at $x_{k}$ which possibly hits the light source $V_{0}\times C^{2\alpha}$, we can imagine that it starts at the center of the light source and it possibly hits the voxel $V_{k}$ in any direction.  This possibility stems from the invariance of $\nu$ stated in Proposition \ref{cor:mun:backward}, and is exploited in the next result.

\begin{Prop}
\label{prop:simulation:and:RW} 
For any $0\leq k\leq K-1$ we have :  
\begin{equation}\label{eq:Lxk:RW}
L(x_{k})=\frac{c( 1-\cos\alpha)}{2\mu_a}\,\Prob\left(S_{N}\in V_{k}\right),
\end{equation}
where 
$N\sim {\rm NB}(1,\rho)$ is a negative binomial random variable with parameter $(1,\rho)$, that is a random variable whose law is given by $\mathbb{P}(N=n)=(1-\rho)\rho^{n}$ for  all $ n\in \N,$ and where for $n\in\N$
$$ S_n=\sum_{i=0}^n R_i W_i$$
with  $(R_i)_{i\ge 0}$ and $(W_i)_{i\ge 0}$ satisfying the following assertions:
\begin{itemize}
\item $(R_i)_{i\ge 0}$ is a sequence of independent identically distributed (i.i.d.) exponentially random variables of parameter $\mu$ (i.e such that $\mathbb{E}(R_i)=\mu^{-1}$).
\item $W_0$ is uniformly distributed on the cone $C^{2\alpha}$.
\item for any $i\ge 1$, the conditional distribution of $W_i$ given $(W_0,\ldots,W_{i-1})=(\omega_0,\ldots,\omega_{i-1})$ is $f_{HG}(\omega_{i-1},\omega_i) \sigma(d\omega_i)$.
\item $N$, $(R_i)_{i\geq 0}$ and $(W_i)_{i\geq 0}$ are independent. 
\end{itemize}
\end{Prop}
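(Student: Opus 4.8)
The plan is to convert the ``backward'' representation of $L(x_k)$ recorded in \eqref{eq:luminance:proba:repres} --- a ray issued from $x_k$ that must hit the source $V_0\times C^{2\alpha}$ --- into a ``forward'' one, where a ray issued from the source must hit the voxel $V_k$. The two levers are the time-reversal invariance \eqref{symnun} of $\nu_n$ and the geometric symmetry $V_0=-V_0$ of the cube centered at the origin.

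First I would disintegrate \eqref{eq:luminance:proba:repres} along the length of the ray. Writing $\{|Y|=n\}=\mathcal{M}_n$, using $\nu(\mathcal{M}_n)=(1-\rho)\rho^n$ and the fact that, conditionally on $\mathcal{M}_n$, the vector $(R_0,\dots,R_n,W_0,\dots,W_n)$ has law $\nu_n$ (Proposition~\ref{cor:mun:backward}), one gets
\[
L(x_k)=\frac{c}{\mu_a}\sum_{n\ge0}(1-\rho)\rho^n\,
\Prob_{\nu_n}\!\Big(x_k-\textstyle\sum_{j=0}^n R_jW_j\in V_0,\ W_n\in C^{2\alpha}\Big),
\]
and one recognizes $(1-\rho)\rho^n=\Prob(N=n)$ for $N\sim\mathrm{NB}(1,\rho)$.

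Next, for fixed $n$ I would apply \eqref{symnun} with $\pi$ the order-reversing permutation $\pi(j)=n-j$: under $\nu_n$, the vectors $(R_0,\dots,R_n,W_0,\dots,W_n)$ and $(R_n,\dots,R_0,W_n,\dots,W_0)$ have the same law. Since the displacement $\sum_{j=0}^n R_jW_j$ is unchanged when the $R$'s and the $W$'s are simultaneously reversed (reindex $i=n-j$), whereas the terminal direction $W_n$ becomes the initial one $W_0$, the cone constraint moves from $W_n$ to $W_0$:
\[
\Prob_{\nu_n}\!\Big(x_k-\textstyle\sum_j R_jW_j\in V_0,\ W_n\in C^{2\alpha}\Big)
=\Prob_{\nu_n}\!\Big(x_k-\textstyle\sum_j R_jW_j\in V_0,\ W_0\in C^{2\alpha}\Big).
\]
Then I read off from \eqref{eq:def:nun} that, under $\nu_n$, the $R_j$ are i.i.d.\ $\mathrm{Exp}(\mu)$ and independent of $(W_0,\dots,W_n)$, which is a Markov chain with $W_0\sim\sigma$ and transition $f_{HG}(\omega_{j-1},\cdot)\,\sigma$ --- here the symmetry $f_{HG}(\omega,\cdot)=f_{HG}(\cdot,\omega)$ is precisely what makes the reversed direction process have the same dynamics, so it matches the chain described in the statement except for the law of $W_0$. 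Conditioning on $W_0$ and comparing with the statement, whose $W_0$ is uniform on $C^{2\alpha}$ (that is, $\sigma$ restricted to $C^{2\alpha}$ and renormalized), I factor out $\sigma(C^{2\alpha})$, and an elementary spherical-cap computation gives $\sigma(C^{2\alpha})=(1-\cos\alpha)/2$. Using $V_0=-V_0$ and $V_k=x_k+V_0$, the constraint $x_k-S_n\in V_0$ becomes $S_n\in V_k$, so the last probability equals $\tfrac{1-\cos\alpha}{2}\,\Prob(S_n\in V_k)$ with $S_n$ as in the statement. Plugging back and using the independence of $N$ from $(S_n)_{n\ge0}$ yields $L(x_k)=\frac{c(1-\cos\alpha)}{2\mu_a}\sum_{n}\Prob(N=n)\,\Prob(S_n\in V_k)=\frac{c(1-\cos\alpha)}{2\mu_a}\,\Prob(S_N\in V_k)$.

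\textbf{Main obstacle.} The delicate part is the bookkeeping in the time-reversal step: checking that the reversed-and-permuted ray still has law $\nu_n$ (granted by Proposition~\ref{cor:mun:backward}), that the displacement vector is genuinely invariant, and that the reversed direction process still has the Markov transitions claimed in the statement. The remaining ingredients --- the normalization $\sigma(C^{2\alpha})=(1-\cos\alpha)/2$, the identity $V_0=-V_0$, and the exchange of $\sum_n$ with the conditioning --- are routine.
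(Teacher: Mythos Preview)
Your proposal is correct and follows essentially the same route as the paper: disintegrate \eqref{eq:luminance:proba:repres} over the path length, apply the time-reversal invariance of Proposition~\ref{cor:mun:backward} to move the cone constraint from $W_n$ to $W_0$, factor out $\sigma(C^{2\alpha})=(1-\cos\alpha)/2$, and use the translation/symmetry $x_k-V_0=V_k$ together with the independence of $N$. The only cosmetic difference is that the paper performs the geometric step $x_k-\sum R_jW_j\in V_0\Leftrightarrow\sum R_jW_j\in V_k$ at the very beginning rather than at the end, and is slightly less explicit than you are about which permutation $\pi$ is used and about the symmetry $V_0=-V_0$.
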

\begin{proof} Let $k\in \lac 0,\ldots, K-1\rac$. Notice that, if $V_{k}+x$ denotes the translation of the voxel $V_{k}$ by the vector $x\in\R^{3}$, then it is clear that $V_{0}+x_{k}=V_{k}$. 
 Therefore,  we can rewrite~\eqref{eq:luminance:proba:repres} as
\begin{equation}
L(x_k)=\frac{c}{\mu_a} \mathbb{P}\left(\sum_{j=0}^{\left|{\boldsymbol R}\right|}R_{j}W_j \in V_k,
W_{\left|{\boldsymbol R}\right|}\in C^{2\alpha}
\right)\,,\label{eq:fluence:as:proba}
\end{equation}
where the distribution of  $Y=({\boldsymbol R}, {\boldsymbol{W}})$ is the probability measure $\nu$ defined by \eqref{eq:def:nu}. 
Then, 
$$L(x_k)=\frac{c}{\mu_a} \sum_{n=0}^{+\infty}(1-\rho) \rho^{n} \mathbb{P}\left(\sum_{j=0}^{n} R_{j}W_j \in V_k, W_n\in C^{2\alpha}\right)$$
where the distribution of $(R_0,\ldots, R_n, W_0,\ldots,W_n)$  is the probability $\nu_n$ defined by \eqref{eq:def:nun}. Therefore, applying Proposition \ref{cor:mun:backward}, we get
$$
L(x_k)=\frac{c}{\mu_a} \sum_{n=0}^{+\infty}(1-\rho) \rho^{n} \mathbb{P}\left(\sum_{j=0}^{n} R_{j}W_j \in V_k, W_0\in C^{2\alpha}\right).
$$
By definition of $\nu_n$, we easily see that 
$$\mathbb{P}\left(\sum_{i=0}^{n} R_{i}W_i \in V_k,\, W_0\in C^{2\alpha}\right)=\sigma(C^{2\alpha}) \mathbb{P}\left(\sum_{i=0}^n R_i'W_i' \in V_k\right), 
$$
where $(R_0',\ldots,R_n', W_0',\ldots,W_n')$ is a random variable with distribution 
$
\nu_n'
$, defined by replacing in~\eqref{eq:def:nun} the uniform probability $\sigma(d\omega_0)$ on the sphere $\Ss^2$  by the uniform probability  $\ind_{\{\omega_0'\in C^{2\alpha}\}}\sigma(d\omega_0')/\sigma(C^{2\alpha})$ on the cone $C^{2\alpha}$. By definition of $\nu_n'$, this  means that 
$$
\mathbb{P}\left(\sum_{i=0}^{n} R_{i}W_i \in V_k,\, W_0\in C^{2\alpha}\right)=\sigma(C^{2\alpha}) \mathbb{P}(S_n \in V_k)
$$
where $S=(S_p)_{p\ge 0}$ is the random walk defined in the statement of the proposition.  
Thanks to the fact that $N$ is a ${\rm NB}(1,\rho)$  random variable independent of the random walk $S$, we easily get relation~\eqref{eq:Lxk:RW}. 
\end{proof}

\begin{Rem} 
The random variables $N,R_{i}$ and $W_{0}$ in Proposition \ref{prop:simulation:and:RW} are simulated in a straightforward way.
The simulation of the sequence $(W_{i})_{i\geq1}$ is obtained as follows. The direction $W_{i}$ of the $i$-th step of the random walk is sampled relatively to the direction $W_{i-1}$. We sample the spherical angles $(\Theta_{i},\Phi_{i})$ \emph{between} the two directions according to the Henyey-Greenstein phase function. Namely, $\cos\lp\Theta_{i}\rp=\langle W_{i-1}, W_{i}\rangle$ owns the following invertible cumulative distribution function:
$$F^{-1}(y)=\frac{1}{2g}\,\lp 1+g^{2}-\lp\frac{1-g^{2}}{1-g+2gy}\rp^{2}  \rp, \qquad y\in[0,1]$$
and $\Phi_{i}$, the azimuth angle of $W_{i}$ in the frame linked to $W_{i-1}$, is uniformly distributed on $[0,2\pi]$, see~\eqref{eq:def:fGH}. To recover the cartesian coordinates of the directions, we inductively apply appropriate changes of frame. The corresponding formulas can be found in~\cite[p. 37]{Prahl1988}.
\end{Rem}

\begin{Rem}\label{rem:simultaneous:estimates}
Notice that $S_{N}$ does not depend on the voxel $x_{k}$ under consideration. This permits to use a single sample of realizations of this random variable in order to estimate the right hand side~of~\eqref{eq:Lxk:RW} for all $k\in\lac0,\ldots, K-1 \rac$ simultaneously. {We call this improvement coupling, meaning that the random variables related to the Monte Carlo evaluations at different voxels are completely correlated.}
\end{Rem}

%


\section{Monte Carlo approach with variance reduction}\label{sec:montecarlo:approach}

In the last section, we have derived a probabilistic representation of $L(x_{k})$ for every voxel $V_{k}$ by means of the arrival position of a random walk $(S_{n})_{n\geq1}$ stopped at a negative binomial time. This classically means that $L(x_{k})$ can be approximated by MC methods. We first derive the expression of the approximate fluence rate by means of the basic MC method and then describe the variance reduction method that we implemented.

\begin{Prop} Let  us consider a random walk $S=(S_n)_{n\ge 0}$ and a negative binomial random variable $N$ as defined in Proposition \ref{prop:simulation:and:RW}. Let $(S^{i}, N_i)_{1\le i \le M}$ be  $M$ independent copies of $(S,N)$. Then, for  $k=0,\ldots,K-1$, 
\begin{equation}\label{eq:def:thetak:pk}
\widehat{L}_{\text{\emph{MC}}}(x_{k}):=\frac{c \left(1-\cos\alpha\right)}{2\mu_a M}\,\sum_{i=1}^M \ind_{\left\{S^{i}_{N_i}\in V_{k}\right\}}
\end{equation}
is an unbiased and strongly consistent estimator of $L(x_k)$.
\end{Prop}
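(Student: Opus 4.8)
The plan is to read $\widehat{L}_{\text{MC}}(x_k)$ as a rescaled empirical average of i.i.d.\ Bernoulli variables and then combine the elementary properties of such averages with the probabilistic representation~\eqref{eq:Lxk:RW} established in Proposition~\ref{prop:simulation:and:RW}.

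First I would introduce, for $i=1,\dots,M$, the random variables $Z_i:=\ind_{\{S^i_{N_i}\in V_k\}}$. Since the pairs $(S^i,N_i)$ are independent copies of $(S,N)$, the $Z_i$ are i.i.d., take values in $\{0,1\}$, and are therefore integrable with common mean $p_k:=\E[Z_1]=\Prob(S_N\in V_k)$. With this notation $\widehat{L}_{\text{MC}}(x_k)=\frac{c(1-\cos\alpha)}{2\mu_a}\,\frac1M\sum_{i=1}^M Z_i$. Unbiasedness then follows from linearity of the expectation: $\E[\widehat{L}_{\text{MC}}(x_k)]=\frac{c(1-\cos\alpha)}{2\mu_a}\,p_k$, which equals $L(x_k)$ by formula~\eqref{eq:Lxk:RW} of Proposition~\ref{prop:simulation:and:RW}.

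For strong consistency I would invoke the strong law of large numbers for the i.i.d.\ integrable sequence $(Z_i)_{i\ge1}$: $\frac1M\sum_{i=1}^M Z_i\to p_k$ almost surely as $M\to\infty$. Multiplying by the deterministic constant $\frac{c(1-\cos\alpha)}{2\mu_a}$ and applying~\eqref{eq:Lxk:RW} once more gives $\widehat{L}_{\text{MC}}(x_k)\to L(x_k)$ almost surely, i.e.\ the announced strong consistency. There is no genuine obstacle in this argument; the only point requiring care is to make sure that the normalising constant appearing in~\eqref{eq:def:thetak:pk} is exactly the one featuring in the representation of Proposition~\ref{prop:simulation:and:RW}, so that both the mean and the almost sure limit coincide with $L(x_k)$ itself rather than with a multiple of it.
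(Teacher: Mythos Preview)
Your proof is correct and follows the same approach as the paper, which simply notes that the statement is an immediate consequence of the representation~\eqref{eq:Lxk:RW} and the strong law of large numbers. You have merely spelled out in more detail the i.i.d.\ Bernoulli structure that the paper leaves implicit.
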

\begin{proof}
This statement follows simply from the discussion of the previous section and the law of large numbers (LNN).
\end{proof}

In addition to this proposition, let us highlight the fact that central limit theorem provides confidence intervals for the estimators $\widehat{L}_{\text{MC}}(x_{k})$.  Furthermore, owing to Remark~\ref{rem:simultaneous:estimates},  the family $(S^{i}, N_i)_{1\le i \le M}$ enables to estimate the fluence rate $L(x_k)$ for all $k=0,\dots,K-1$ at once. \\

The reader should be aware of the fact that the quantity $\rho$ is large in general in biological tissues, which means that the size of the ray, given by $N\sim{\rm NB}(1,\rho)$, will often be large. Typical values of the parameters are provided in Table~\ref{table:parameters:value}. Therefore, sampling a ray is relatively time consuming and it is necessary to improve the basic Monte Carlo algorithm in order to reduce the variance of the estimates. 

\begin{table}[H]
\begin{center}
  \begin{tabular}{|l|c|c|c|c|c|c|}
    \hline
    &$\mu_{s}$ & $\mu_{a}$ & $\mu$ & $\rho$ & $g$ \\ 
    \hline
   healthy tissue &$280\text{ cm}^{-1}$ & $0.57\text{ cm}^{-1}$ & $280.57\text{ cm}^{-1}$ & $0.998$ & $0.9$ \\ 
    \hline
  tumor &$73\text{ cm}^{-1}$ & $1.39\text{ cm}^{-1}$ & $74.39\text{ cm}^{-1}$ & $0.981$ & $0.9$ \\ 
    \hline
    \end{tabular}
  \caption{Values given by~\cite{AngellPetersen2007} for the optical parameters in the rat brain for a wavelength $\lambda=632$ nm (red light).}\label{table:parameters:value}
\end{center}
\end{table}

Furthermore, because of the formulation~\eqref{eq:fluence:as:proba}, only the last point of each whole ray is used in the estimation. It is however possible to take into account more points of the rays and still have unbiased estimators. Finally, the angular symmetry of the problem, allows us to replicate observed rays by applying rotation. We took these two considerations into account and named the resulting method \textbf{Monte Carlo with some points (MC-SOME)}. The idea is to firstly draw some random walks which share the same initial direction and to pick a given number of points of each walk. Then, we apply rotations to that set of points with respect to different initial directions. We finally count the number of points in each voxel. This artificially increase the size of the samples and thus reduce the variance of our estimation of $L(x_k)$. Specifically, the resulting estimator is given by :

\begin{Def}[MC-SOME]\label{def:MCSOME}
Let $M, M_{\text{points}}, M_{\text{rot}}\in\N^{*}$ be the parameters of the method. Let us assume that the following assertions hold:
\begin{itemize}
\item $(W_{0}^{j})_{1\le j\le M_{\text{rot}}}$ are i.i.d.~copies of $W_{0}\sim\mathcal{U}(C^{2\alpha})$.
\item $N^{\ell}_{i}$, $1\leq i\leq M$, $1\leq\ell\leq M_{\text{points}},$ are i.i.d. copies of a negative binomial random variable with parameter~$(1,\rho)$. 
\item  
$\lp S^{i}_{n}\rp_{n\ge 0}$, $1\le i\le M$, are i.i.d.~copies of the random walk $S$ defined in Proposition~\ref{prop:simulation:and:RW}, all sharing the same initial direction $W_{0}^{1}$.
\item The sequences $\lp N^{\ell}_{i}\rp_{i,\ell}$ and $ \lp W_{0}^{j}, S^{i}_{n}\rp_{i,j}$ are independent. 
\end{itemize}

\noindent
Let $S^{i,j}_{N^\ell_i}$ denotes the $N^\ell_i$-th point of the $i$-th random walk $\lp S^{i}_{n}\rp_{n\ge 0}$ after a rotation corresponding to the $j$-th initial direction $W_{0}=W_{0}^{j}$.

\noindent
Then, for $k\in\{0,\ldots,K-1\}$, the MC-SOME estimate of $L(x_k)$ is defined by
\begin{equation}\label{eq:def:estima:MCSome}
\widehat{L}_{\text{MC-SOME}}(x_k) =\frac{c \left(1-\cos\alpha\right)}{2\mu_a M_{\text{rot}}M_{\text{points}}M}\sum_{i=1}^{M}\sum_{\ell=1}^{M_{\text{points}}}\sum_{j=1}^{M_{\text{rot}}}
\ind_{\left\{S^{i,j}_{N^\ell_i}\in V_{k}\right\}}\,.
\end{equation}
\end{Def}
This estimator is unbiased and strongly consistent. Its construction is illustrated in Fig.~\ref{fig:path:MCSOME} and it will be compared to Wang's algorithm estimator in Section~\ref{sec:comparison}.

\begin{figure}[H]
\begin{center}
\includegraphics[width=0.85\textwidth]{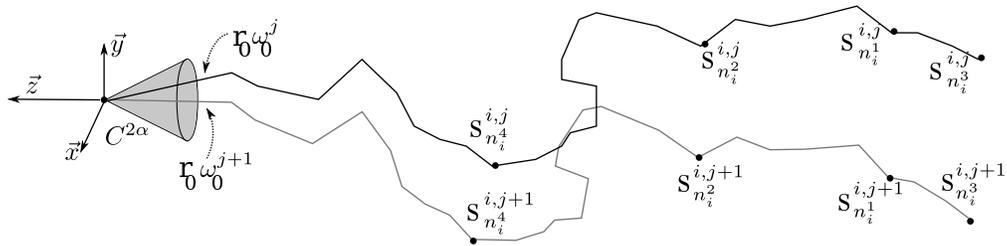}
\caption{Description of  MC-SOME method. In this example, the grey path is a rotation of the black one with respect to its initial direction $\omega_{0}^{j+1}$ and $M=4$.}\label{fig:path:MCSOME}
\end{center}
\end{figure}
\begin{Rem}
Choosing only some points $\lp S_{N^\ell}, 1\leq\ell\leq M_{\text{points}}\rp$ of the path instead of all $\lp S_i, 1\leq i\leq n\rp$ provides a more efficient estimation. Indeed, it would take a lot of time to run over all voxels $V_k$ in order to evaluate the indicator functions $\ind_{\left\{S_i\in V_{k}\right\}}$. Moreover, the information brought by close points is in a sense redundant. Choosing the points according to a negative binomial law maintain the estimator unbiased, while speeding up the estimation.
\end{Rem}


\section{A Metropolis-Hastings algorithm for light pro\-pagation}\label{sec:MH}
Inspired by results in computer graphics, see~\cite{Jensen2003, Shirley2008,Veach1997_thesis}, we implemented a  Metropolis Hastings algorithm which is a \emph{Markov chain Monte Carlo method} (MCMC) by random walk~\cite{Hastings1970, Roberts2004}. We shall first discuss general principles and then practical implementation issues.

For simplicity reasons,  by slightly abusing the notations, we identify the stopped random walk $S=(S_{n})_{0\leq n\leq N}$ of Proposition~\ref{prop:simulation:and:RW} with the ray $(R_0,\ldots, R_N,W_0,\ldots,W_N)$ which defines it. The law of this ray, which will still be denoted by $\nu$,  is given by replacing in~\eqref{eq:def:nu} the uniform measure $\sigma(d\omega_0)$ on the sphere by the uniform measure on the cone $C^{2\alpha}$.

A realization of the walk $S$ stopped at $N=n$ will be indifferently referred to as $(S_{0},\ldots, S_{n})$, as $\lp r_{0},\ldots,r_{n},\omega_0, \ldots, \omega_{n}\rp$ or as $\lp r_{0},\theta_0,\varphi_0, \ldots,r_{n},\theta_{n},\varphi_{n}\rp$ where $(\theta_{0}, \varphi_{0})$ are the spherical coordinates of $\omega_{0}$ and for $1\leq i\leq n$, $\cos(\theta_{i})=\langle \omega_{i-1},\omega_{i}\rangle$ and $\varphi_{i}$ is the azimuth angle of $\omega_{i}$ in the frame linked to $\omega_{i-1}$.

\subsection{General principle}
For a given $\omega_{0}\in C^{2\alpha}$ and for $0\leq k\leq K-1$, we are willing to estimate the conditional probability $\mathbb{P}\left(S_N\in V_k \mid W_{0}=\omega_{0}\right)$ by generating a Markov chain whose steady-state measure is the conditional distribution $\nu |_{W_{0}=\omega_{0}}$ and by applying LNN for ergodic Markov chains.  We then combined this estimation with the classical LNN sampling the initial direction $W_{0}$ in $C^{2\alpha}$ to obtain an estimate of $L(x_{k})$ viewed as in~\eqref{eq:Lxk:RW}.

An overview of the MCMC dynamics in this context is the following:  Let $\omega_{0}\in C^{2\alpha}$ be a fixed initial direction. The Markov chain starts at time $t=1$ in the state $S(1)\in \mathcal{A}$ with $W_{0}=\omega_{0}$. At each time $t\in\N^{*}$, a move (mutation) is proposed from the current state $S(t)$ to the state $S^{\prime}(t)$ according to a \emph{proposal density} $q(S(t),\cdot)$ and such that the initial direction of $S^{\prime}(t)$ is still $\omega_{0}$. The chain then jumps to $S^{\prime}(t)$ with \emph{acceptance probability} $\alpha(S(t),S^{\prime}(t))$ or stays in $S(t)$ with probability  $1-\alpha(S(t),S^{\prime}(t))$. This is described in pseudo-code in Algorithm~\ref{algo:MH}. The MCMC simulation generates a Markov chain $\lac S(t); t \geq 1\rac$ on the space of rays $\mathcal{A}$ whose steady-state measure is the desired distribution $\nu|_{W_{0}=\omega_{0}}$ (see~\cite[\S  2.3.1]{Tierney1994}).  

\begin{algorithm}[H]
\caption{Metropolis-Hastings algorithm for light propagation}\label{algo:MH}
\begin{algorithmic}
\STATE \textbf{Initialization:} 
\STATE draw $\omega_{0}$ uniformly on $\mathcal{C}^{2\alpha}$,
\STATE  draw $S(1)$ according to $\nu|_{W_{0}=\omega_{0}}$
\FOR {$t=1$ to $T-1$}
   \STATE  $S^{\prime}(t) \sim q(S(t),\cdot)$
   \STATE  $\alpha( S(t), S^{\prime}(t)) \leftarrow \min \lac 1, \dfrac{\nu|_{W_{0}=\omega_{0}}\lp S^{\prime}(t)\rp\, q(S^{\prime}(t) , S(t))}{\nu|_{W_{0}=\omega_{0}}\lp S(t)\rp\, q(S(t) , S^{\prime}(t))} \rac$
   \IF {Rand()$<\alpha( S(t), S^{\prime}(t))$}
   \STATE $S(t+1) \leftarrow S^{\prime}(t)$
   \ELSE
   \STATE $S(t+1)\leftarrow S(t)$
   \ENDIF
\ENDFOR
\end{algorithmic}\end{algorithm}

If the ray $S(t)=(S_{0}(t),\ldots,S_{N_{t}}(t))$ denotes the position of the chain at a given time $t$, then, for $0\leq k\leq K-1$, 
\begin{equation}\label{eq:estim:pk:conditional}
\lim_{T\to \infty}\,\frac{1}{T} \sum_{t=1}^{T} \ind_{\lac S_{N_{t}}(t)\in V_{k}\rac}=\mathbb{P}\left(S_{N}\in V_k \mid W_{0}=\omega_{0}\right) \qquad \text{ almost surely,}
\end{equation}
on condition that the chain $\lac S(t); t \geq 1\rac$ is Harris positive with respect to $\nu|_{W_{0}=\omega_{0}}$. Indeed, this statement relies on LLN for Harris recurrent ergodic chains, see~\cite[Theorem 17.0.1]{Meyn2009}.

We can then sample the law of $W_{0}$ to recover an estimate of $\mathbb{P}\left(S_{N}\in V_k \right)$. Let $M_{\text{rot}}\in\N^{*}$ and let $(\omega_{0}^{1}, \ldots, \omega_{0}^{M_{\text{rot}}})$ be a sample of i.i.d. initial directions drawn according to $\mathcal{U}\lp C^{2\alpha}\rp$. For $i=1,\ldots, M_{\text{rot}}$ and $t=1, \ldots, T$, let $S^{(i)}(t)$ denote the rotation of the random walk $S(t)$ with respect to the initial direction $\omega_{0}^{i}$, see Fig.~\ref{fig:path:MCSOME}. For $k\in\{0,\ldots,K-1\}$, our Metropolis-Hastings estimator of $L(x_k)$ is defined by
\begin{equation}\label{eq:def:estima:MH}
\widehat{L}_{\text{MH}}(x_k) =\frac{ c \left(1-\cos\alpha\right)}{2\mu_a M_{\text{rot}}\,T}\sum_{i=1}^{M_{\text{rot}}}\sum_{t=1}^{T}
\ind_{\left\{S^{(i)}_{N_t}(t)\in V_{k}\right\}}\,.
\end{equation}
This estimator is strongly consistent on condition that the proposal density $q(\cdot,\cdot)$ of Algorithm~\ref{algo:MH} provides a Harris recurrent chain.


\subsection{Mutation strategy}
One of the delicate issues in the implementation of MCMC methods is the choice of a convenient proposal density $q$. In our case, we have tested a mixture of mutations of two types for the Metropolis-Hastings algorithm: \emph{rotation-translation} and \emph{deletion-addition}. In order to describe them, let us introduce some notations. The method uses a \emph{perturbed phase function} $f^{\varepsilon g}_{HG}$, $\varepsilon\in [-1,1]$ which has an anisotropic coefficient $\varepsilon g$  instead of $g$. It also uses two coprime integers $1\leq j< J$, which denote respectively the small and the big size of a length change. 

\begin{Rem}
The need for these two numbers to be coprime will be made clearer in the proof of Proposition~\ref{prop:convergence:MH}, as it will ensure that the mutations produce paths of arbitrary length.
\end{Rem}

\begin{Def}[Mutation rule]\label{def:mutations:rule}
Let us assume that, at time $t\in\N^{*}$, the current ray is given by  $S(t)=(r_0,\theta_{0},\varphi_{0},\ldots,r_{n_{t}},\theta_{n_{t}},\varphi_{n_{t}} )$. Our proposition for the next move from $S(t)$ to $S(t+1)$ is the following.

\smallskip

\noindent
\emph{(i)}
With probability $\frac{1}{2}$, the mutation is of type \emph{deletion-addition}. Draw $\Delta(n_{t})$ according to the following law that depends on the size of the current ray
$$ \Delta(n)\sim\begin{cases} 
\mathcal{U}\lp\lac -J,-j,j,J\rac\rp, &\text{if } n\geq J,\\
\mathcal{U}\lp\lac -j,j,J\rac\rp, &\text{if } j\leq n< J,\\
\mathcal{U}\lp\lac j,J\rac\rp, &\text{if } 0\leq n < j.\\
\end{cases}$$

\begin{itemize}
\item If $\Delta(n_{t})<0$, then delete the last $\lb \Delta(n_{t}) \rb$ edges of $S(t)$. The proposed path is 
$$S^{\prime}(t)=\lp r_0,\theta_{0},\varphi_{0}, \ldots,r_{n_{t}-\Delta(n_{t})},\theta_{n_{t}-\Delta(n_{t})},\varphi_{n_{t}-\Delta(n_{t})}\rp.$$

\item If $\Delta(n_{t})>0$, then add $\Delta(n_{t})$ new edges at the end of $S(t)$:
  	\begin{itemize}
     	\item[-] Draw $\lp r_{n_{t}+1}^{\text{new}},\ldots, r_{n_{t}+\Delta(n_{t})}^{\text{new}}\rp$  i.i.d. according to {$\mathcal{E}(\mu)$}.	
	\item[-] Draw $\lp \theta_{n_{t}+1}^{\text{new}},\ldots, \theta_{n_{t}+\Delta(n_{t})}^{\text{new}}\rp$  i.i.d. according to $f^{\varepsilon g}_{HG}$.
	\item[-] Draw $\lp \varphi_{n_{t}+1}^{\text{new}},\ldots, \varphi_{n_{t}+\Delta(n_{t})}^{\text{new}}\rp$ i.i.d. uniformly on $[0,2\pi]$.
	\end{itemize}
The proposed path is 
$$S^{\prime}(t)=\lp r_0,\theta_{0},\varphi_{0},\ldots, r_{n_{t}},\theta_{n_{t}},\varphi_{n_{t}}, r_{n_{t}+1}^{\text{new}}, \theta_{n_{t}+1}^{\text{new}}, \varphi_{n_{t}+1}^{\text{new}}, \ldots, r_{n_{t}+\Delta(n_{t})}^{\text{new}}, \theta_{n_{t}+\Delta(n_{t})}^{\text{new}},\varphi_{n_{t}+\Delta(n_{t})}^{\text{new}}\rp.$$
\end{itemize}

\smallskip

\noindent
\emph{(ii)}
With probability $\frac{1}{2}$, the mutation is of type \emph{rotation-translation}. Choose an index $i$ uniformly over $\lac 0,\ldots, n_{t}\rac$.

\begin{itemize}
\item If $i\neq0$, then make a rotation of the path at the $i$-th edge:
	\begin{itemize}
	\item[-] Draw a new angle $\theta_{i}^{\text{new}}$ according to $f^{\varepsilon g}_{HG}$.
	\item[-] Draw a new angle $\varphi_{i}^{\text{new}}$ uniformly on $[0,2\pi]$.
	\end{itemize}
The proposed path is (see Fig.~\ref{fig:mutation:rotation})
$$S^{\prime}(t)=\lp r_0,\theta_{0},\varphi_{0}, \ldots, r_{i}, \theta^{\text{new}}_{i},\varphi^{\text{new}}_{i},\ldots,r_{n_{t}},\theta_{n_{t}},\varphi_{n_{t}} \rp.$$

\item If $i=0$, then translate from the initial edge:
	\begin{itemize}
	\item[-] Draw a new edge length $r_{0}^{\text{new}}$ according to $\mathcal{E}(\mu)$.
	\end{itemize}
The proposed path is 
$$S^{\prime}(t)=\lp r_{0}^{\text{new}}, \theta_{0},\varphi_{0},  r_1, \theta_{1}, \varphi_{1},\ldots,r_{n_{t}},\theta_{n_{t}},\varphi_{n_{t}}\rp.$$
\end{itemize}
\end{Def}

\begin{Rem}
The initial direction is fixed, thus mutations of $(\theta_0,\varphi_0)$ are forbidden. However, the length $r_0$ has to change from time to time. This is ensured by the translations and this is why the initial edge need to be considered separately in the mutation rule.
\end{Rem}

\begin{figure}[h]
\begin{center}
\includegraphics[width=0.55\textwidth]{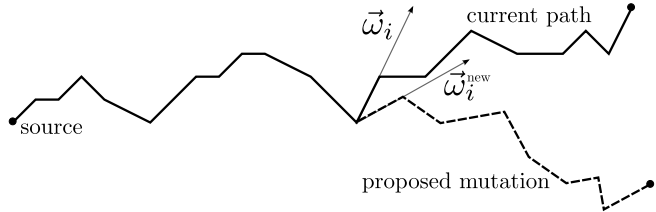}\hspace{8ex}
\caption{Metropolis-Hastings algorithm. Example of a mutation by rotation.}\label{fig:mutation:rotation}
\end{center}
\end{figure}

Let us now compute the proposal density $q(S,\cdot)$ of this mutation rule.  For $m\in\N^{*}$, let 
$$\zeta(m)=\begin{cases}
\frac{1}{4}\,, &\text{if } m \geq J,\\
\frac{1}{3}\,, &\text{if } j\leq m< J,\\
\frac{1}{2}\,, &\text{if } 0\leq m < j.\\
\end{cases}$$
Assume that $S^{\prime}$ is a mutation of $S$, denote by $i$ the first index where there is a difference between them, denote by $n^{\prime}$ and $n$ their respective length and set $\Delta=n^{\prime}-n$. We have
\begin{itemize}\setlength{\itemindent}{-1em}
\item if $\Delta=0$ and $i\geq1$, then $q(S,S^{\prime})=\frac{1}{2}\,\frac{1}{n+1}\,\frac{1}{2\pi}\,f^{\varepsilon g}_{HG}\lp\theta_{i}^{\prime}\rp$;

\item if $\Delta=0$ and $i=0$, then $q(S,S^{\prime})=\frac{1}{2}\,\frac{1}{n+1}\,\mu e^{-\mu r_{0}^{\prime}}$;

\item if $\Delta<0$, then  $q(S,S^{\prime})=\frac{1}{2}\,\zeta(n)$;

\item if $\Delta>0$, then 
$ q(S,S^{\prime})=\frac{1}{2}\,\zeta(n)  e^{-\mu \sum_{k=1}^{\Delta} r_{n+k}^{\prime}}\,\lp\frac{\mu}{2\pi}\rp^{\Delta}  \prod_{k=1}^{\Delta} f^{\varepsilon g}_{HG}\lp\theta_{n+k}^{\prime}\rp$.
\end{itemize}
From these formulas, it is straightforward to recover the acceptance probability. \\

The idea behind this mixture of mutations is to find a compromise between large jumping size of the Markov chain which implies a lot of ``burnt'' samples, and smaller jumps which provide more correlated samples, hence a worse convergence. The rotations lead to a good exploration of the domain at low cost, whereas the addition-deletion mutations ensure the visit of the whole state space $\mathcal{A}$ with $W_{0}=\omega_{0}$. The use of a perturbed phase function decreases the acceptance probability of the mutations and thus, increases the number of samples needed in order to converge to the invariant measure. But, it allows a better exploration of the domain and this why the parameter $\varepsilon$, as well as the sizes $j, J$, need to be adapted on a case by case basis. Finally, we can prove that, with this rule of mutations, Algorithm~\ref{algo:MH} produces a Markov chain that satisfies the LLN. This guarantees the convergence of the estimator defined in~\eqref{eq:def:estima:MH}.

\begin{Prop}\label{prop:convergence:MH}
If the chain $(S(t))_{t\in\N^{*}}$ is obtained by Algorithm~\ref{algo:MH} with the mutation rule given in Definition~\ref{def:mutations:rule}, then it is Harris positive with respect to the measure $\nu|_{W_{0}=\omega_{0}}$ and the estimator  $\widehat{L}_{\text{\emph{MH}}}(x_k)$ defined in~\eqref{eq:def:estima:MH} is strongly consistent for all $0\leq k\leq K-1$.
\end{Prop}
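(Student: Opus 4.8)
The plan is to show that the Markov chain $\lac S(t);\,t\geq 1\rac$ generated by Algorithm~\ref{algo:MH} with the mutations of Definition~\ref{def:mutations:rule} is positive Harris recurrent with invariant probability measure $\nu|_{W_{0}=\omega_{0}}$, and then to read off the strong consistency of $\widehat{L}_{\text{MH}}(x_{k})$ from the ergodic theorem for such chains combined with the strong law of large numbers (SLLN) over the i.i.d.\ initial directions $\omega_{0}^{1},\dots,\omega_{0}^{M_{\text{rot}}}$. Invariance of $\nu|_{W_{0}=\omega_{0}}$ is built into the Metropolis--Hastings acceptance rule (detailed balance), as already recalled in Section~\ref{sec:MH}. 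Since a $\psi$-irreducible chain carrying an invariant probability measure is automatically positive recurrent, and since a $\psi$-irreducible Metropolis--Hastings chain is Harris recurrent (general theory of such chains, see \cite{Tierney1994}), everything reduces to proving that the chain is $\psi$-irreducible with irreducibility measure $\psi=\nu|_{W_{0}=\omega_{0}}$, that is: for every ray $S$ with initial direction $\omega_{0}$ and every measurable set $U$ of such rays with $\nu|_{W_{0}=\omega_{0}}(U)>0$, there is some $m$ with $\Prob^{m}(S,U)>0$.

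For this step, observe first that $|\varepsilon g|\leq g<1$, so $f^{\varepsilon g}_{HG}$ is a strictly positive, bounded density on $\Ss^{2}$; together with the laws $\mathcal{E}(\mu)$ on $\R_{+}$ and the uniform law on $[0,2\pi]$ this shows that any single edge $(r,\theta,\varphi)$ freshly produced by an addition or a rotation/translation step can be placed, with positive probability, inside any prescribed open box. It is thus enough to exhibit a finite sequence of proposed moves — each of positive proposal density and positive acceptance probability, all the densities entering $q$ and $\nu$ being products of strictly positive continuous factors — that drives $S$ into a ray of the target length $n^{*}$ whose every coordinate has just been redrawn inside the box defining a positive-measure slice of $U$. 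Such a sequence is: (i) from $S$, append blocks of $J$ edges until the length exceeds the Frobenius number $F(j,J)$ (hence lies in the numerical semigroup $\langle j,J\rangle$), then peel off all edges — first the $J$-blocks, then the $j$-blocks — down to length $0$, each deletion being permitted by the law of $\Delta$; (ii) reset $r_{0}$ by a translation; (iii) regrow the ray by additions, block by block, up to a length $L\in\langle j,J\rangle$ with $L-n^{*}>F(j,J)$, drawing the first $n^{*}$ of the new edges inside the prescribed box; (iv) delete the last $L-n^{*}$ edges (admissible since $L-n^{*}\in\langle j,J\rangle$, peeling $J$-blocks then $j$-blocks so that each intermediate length stays in the support of $\Delta$). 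The surviving ray has length $n^{*}$ with $r_{0}$ and edges $1,\dots,n^{*}$ inside the box, so it lies in $U$, and the composed event has positive probability; this yields $\psi$-irreducibility.

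Once positive Harris recurrence is granted, the law of large numbers for positive Harris chains, \cite[Theorem~17.0.1]{Meyn2009}, applied to the bounded functional $S\mapsto\ind_{\lac S_{N}\in V_{k}\rac}$, gives~\eqref{eq:estim:pk:conditional} for every initial law; hence for each rotated chain $\lp S^{(i)}(t)\rp_{t}$, $\frac{1}{T}\sum_{t=1}^{T}\ind_{\lac S^{(i)}_{N_{t}}(t)\in V_{k}\rac}\to\Prob\lp S_{N}\in V_{k}\mid W_{0}=\omega_{0}^{i}\rp$ almost surely as $T\to\infty$. Summing over $i$ and letting $T\to\infty$ gives $\widehat{L}_{\text{MH}}(x_{k})\to\frac{c(1-\cos\alpha)}{2\mu_{a}M_{\text{rot}}}\sum_{i=1}^{M_{\text{rot}}}\Prob\lp S_{N}\in V_{k}\mid W_{0}=\omega_{0}^{i}\rp$ a.s.; since $\omega_{0}^{1},\dots,\omega_{0}^{M_{\text{rot}}}$ are i.i.d.\ $\mathcal{U}(C^{2\alpha})$, the SLLN lets the right-hand side converge a.s.\ as $M_{\text{rot}}\to\infty$ to $\frac{c(1-\cos\alpha)}{2\mu_{a}}\,\Prob(S_{N}\in V_{k})=L(x_{k})$ by~\eqref{eq:Lxk:RW}, which is the asserted strong consistency. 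The crux — and the main obstacle — is the irreducibility step (i)--(iv): it is exactly here that the coprimality of $j$ and $J$ is used, through the Frobenius/numerical-semigroup property announced in the remark before Definition~\ref{def:mutations:rule}, to reach all lengths, and one must organize the deletions so that the edges surviving at the end are precisely those freshly drawn inside the target box; the remaining verifications (positivity of all Metropolis--Hastings densities, invariance, and the two limit passages) are routine.
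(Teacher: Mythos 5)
Your proof is correct and follows essentially the same route as the paper: invariance of $\nu|_{W_{0}=\omega_{0}}$ from the Metropolis--Hastings construction, Harris recurrence via Tierney's result once irreducibility is established by exhibiting a positive-probability, positive-acceptance sequence of mutations (shrink to the base edge, reset $r_{0}$ by a translation, regrow into the target cylinder set), and then the ergodic law of large numbers of Meyn--Tweedie combined with the strong law over the i.i.d.\ initial directions. Your overshoot-and-peel bookkeeping with the Frobenius number of $\langle j,J\rangle$ is just a more explicit rendering of the paper's appeal to Bezout's lemma for reaching arbitrary ray lengths, so the two arguments coincide in substance.
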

\begin{proof}
The fact that $\nu|_{W_{0}=\omega_{0}}$  is an invariant measure of $(S(t))_{t\in\N^{*}}$ is an inherent property of Metropolis-Hastings algorithm (\cite{RobertsRosenthal2006, Tierney1994}). The Harris recurrence is then obtained by checking that the chain is irreducible with respect to $\nu|_{W_{0}=\omega_{0}}$, see~\cite[Corollary 2]{Tierney1994}.  Let $\tau_{A}=\inf\lac t\in\N^{*}: S(t)\in A\rac$ denote the hitting time of any $A\subset\mathcal{A}$ such that $\nu|_{W_{0}=\omega_{0}}(A)>0$. We must demonstrate that $(S(t))_{t\in\N^{*}}$ is irreducible with respect to $\nu|_{W_{0}=\omega_{0}}$, that is,
\begin{equation}\label{eq:cdt-nu-iredducible}
\Prob_{s}\lp\tau_{A}<+\infty\rp>0, \qquad \text{ for all } s\in\mathcal{A},
\end{equation}
where $\Prob_{s}(S(1)=s)=1$. Furthermore, notice that it is sufficient to check this property for subsets $A$ of the type
\begin{equation}\label{eq:A-cylindrical}
A=\{\omega_{0}\}\times I_0 \times \prod_{i=1}^{n} \lp  I_{i}\times E_{i}  \rp,
\end{equation}
where $n\in \N^{*}$ and where for all $1\leq i\leq n$, the sets $I_{i}\subset \R_{+}$ and $E_{i}\subset \Ss^{2}$ are all sets of positive Lebesgue measure. 
 
In order to prove relation \eqref{eq:cdt-nu-iredducible} for sets of the form \eqref{eq:A-cylindrical}, consider the conditional measure $\nu_{n}|_{W_{0}=\omega_{0}}$ using~\eqref{eq:def:nun}. 
By assumptions on $A$, we see easily that $\nu|_{W_{0}=\omega_{0}}(A)>0$. 
Now, notice that by the Markov property, if $\tau_{A}$ and $\tau_{\{\omega_{0}\}\times I_{0}}$ denote respectively the time for the chain to be in $A$, resp.~in $\{\omega_{0}\}\times I_{0}$, then we have
$$\Prob_{s}\lp\tau_{A}<+\infty\rp\geq \Prob_{s}\lp\tau_{\{\omega_{0}\}\times I_{0}}<+\infty\rp  \Prob_{\{\omega_{0}\}\times I_{0}}\lp\tau_{A}<+\infty\rp, \qquad \text{ for all } s\in\mathcal{A}.$$
Now we can lower bound the right hand side of this relation in the following way:

\noindent
\emph{(i)}
We have that $\Prob_{s}\lp\tau_{\{\omega_{0}\}\times I_{0}}<+\infty\rp$ is greater than the probability of deleting all the edges of $s$ except $(r_{0},\omega_{0})$ and of modifying its length so that $r_{0}^{\prime}\in I_{0}$. This probability is strictly positive, as well as its acceptance. Indeed, we use here the fact that $j$ and $J$ are coprime (through Bezout's lemma) plus elementary relations for uniform distributions to assert that the probability of deleting all the edges is strictly positive. The positivity of acceptance is due to absolute continuity properties of $q(s,\cdot)$.

\noindent
\emph{(ii)}
The same kind of argument works in order to lower bound $\Prob_{\{\omega_{0}\}\times I_{0}}\lp\tau_{A}<+\infty\rp$.  Namely, this quantity is greater than the probability to construct directly a ray $s\in A$, which is itself strictly positive. Indeed, since $j$ and $J$ are mutually prime, it is possible to construct a ray of any desired length. Moreover, at each step, the probability of adding an edge $(r_{i}^\prime, \omega_{i}^\prime)\in I_{i}\times E_{i}$, as well as its acceptance are always strictly positive. 

\noindent
We have thus obtained that $ \Prob_{s}\lp\tau_{\{\omega_{0}\}\times I_{0}}<+\infty\rp  \Prob_{\{\omega_{0}\}\times I_{0}}\lp\tau_{A}<+\infty\rp>0$,
which concludes the proof.
\end{proof}

\begin{Rem}
The process $\lp N_{t}\rp_{ t\geq1}$  that gives the length of the ray $S(t)$ at time $t$ behaves like a birth-death process with time inhomogeneous rates. If there exists an invariant measure of the process $\lp N_{t}\rp_{ t\geq1}$, then it must coincide with the negative binomial distribution ${\rm NB}(1,\rho)$ that drives the length of a path $S\sim\nu$. This provides an easy criterium in order to check that the chain has already mixed, for example with a chi-squared test on the empirical distribution of $\lp N_{t}\rp_{ t\geq1}$.
\end{Rem}


\section{Simulation and comparison of the methods}\label{sec:comparison}
In this section, we compare the estimates of the fluence rate $L(x_{k})$ provided by three methods: Monte Carlo with Wang-Prahl algorithm (denoted by WANG, see~\cite{Prahl1988, Wang1992_rep}),   MC-SOME (see~\eqref{eq:def:estima:MCSome}) and  the Metropolis-Hastings (MH) (see~\eqref{eq:def:estima:MH}) with the mutation rule given in Definition~\ref{def:mutations:rule}. We tested the methods in different settings. Here, we present results in a framework corresponding to a healthy homogeneous rat brain tissue. We chose to follow~\cite{AngellPetersen2007} for the values of the optical parameters (see Table \ref{table:parameters:value}). Other values for rat or human brain can be found in~\cite{Bevilacqua1999, CheongPrahl1990, Johns2005}. The volume of the cube $V$ equals $8 \text{ cm}^{3}$, that is $V=[-1,1]^{3}$. It is discretized into voxels whose volume is $\lp 0.04\rp^{3}\,\text{cm}^{3}$. The half-opening angle of the optical fiber was set to $\alpha=\frac{\pi}{10}$ and the constant $c$ in~\eqref{eq:luminance:proba:repres} was set $c=1$.

We chose the following simulation parameters for the three methods so that they need the same amount of computational time. Those are
\begin{description}
\item[WANG:] $M=6000$ photons trajectory.
\item[MC-SOME:] $M=30000$ rays, $M_{\text{points}}=40$ points chosen in each ray and $M_{\text{rot}}=30$ rotations with respect to the initial direction. 
\item[MH] $j=10$, $J=21$, $\varepsilon=0.9$, $T=250000$ steps of the chain and  $M_{\text{rot}}=30$ rotations with respect to the initial direction. \\
\end{description}

In Fig.~\ref{fig:comparison:contourplot}, we picture for each methods, a zoom of the contour plot of the estimates in the plane $x=-0.04$.  
%
%
%
%
%
%
In Fig.~\ref{fig:comparison:estimatesalongY}, we compare the estimates along several lines of voxels $(\ell_{i})_{i=1,\ldots,6}$ which are parallel to the $y$-axis and pass through the points $(0,0,-0.08)$, $(0,0,-0.12)$, $(0,0,-0.16)$, $(0,0,-0.4)$, $(0,0,-0.48)$ and $(0,0,-0.6)$ respectively (see Fig. \ref{fig:comparison:choicevoxels}).  
We notice that  MC-SOME gives more consistant estimates than the two other algorithms whose estimates are more noisy. Moreover, it seems that the MH-estimates are not very symmetric. Perhaps because the algorithm had not converged yet.

\begin{figure}[h]
\centering
   \includegraphics[scale=0.2]{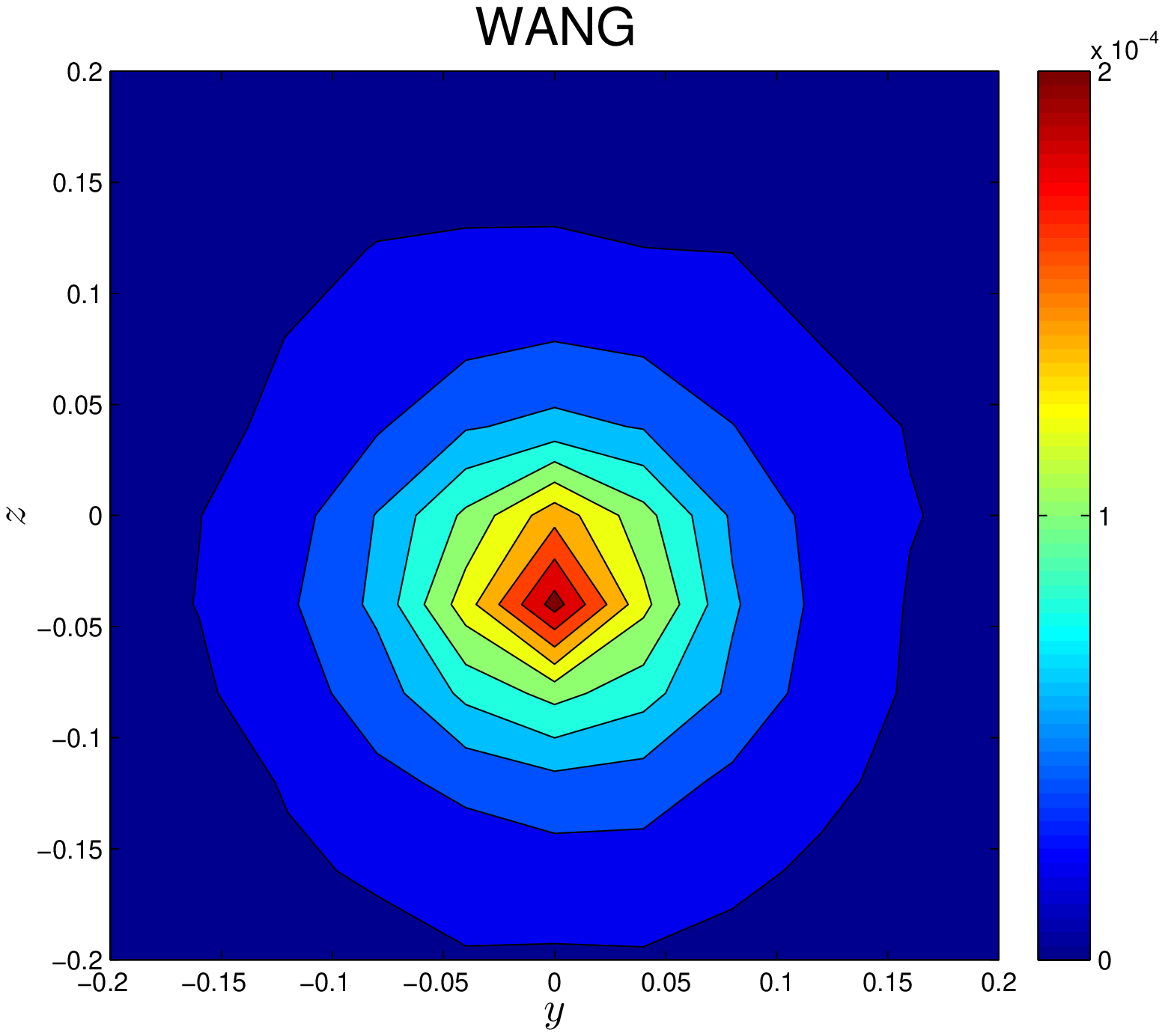}
 \includegraphics[scale=0.2]{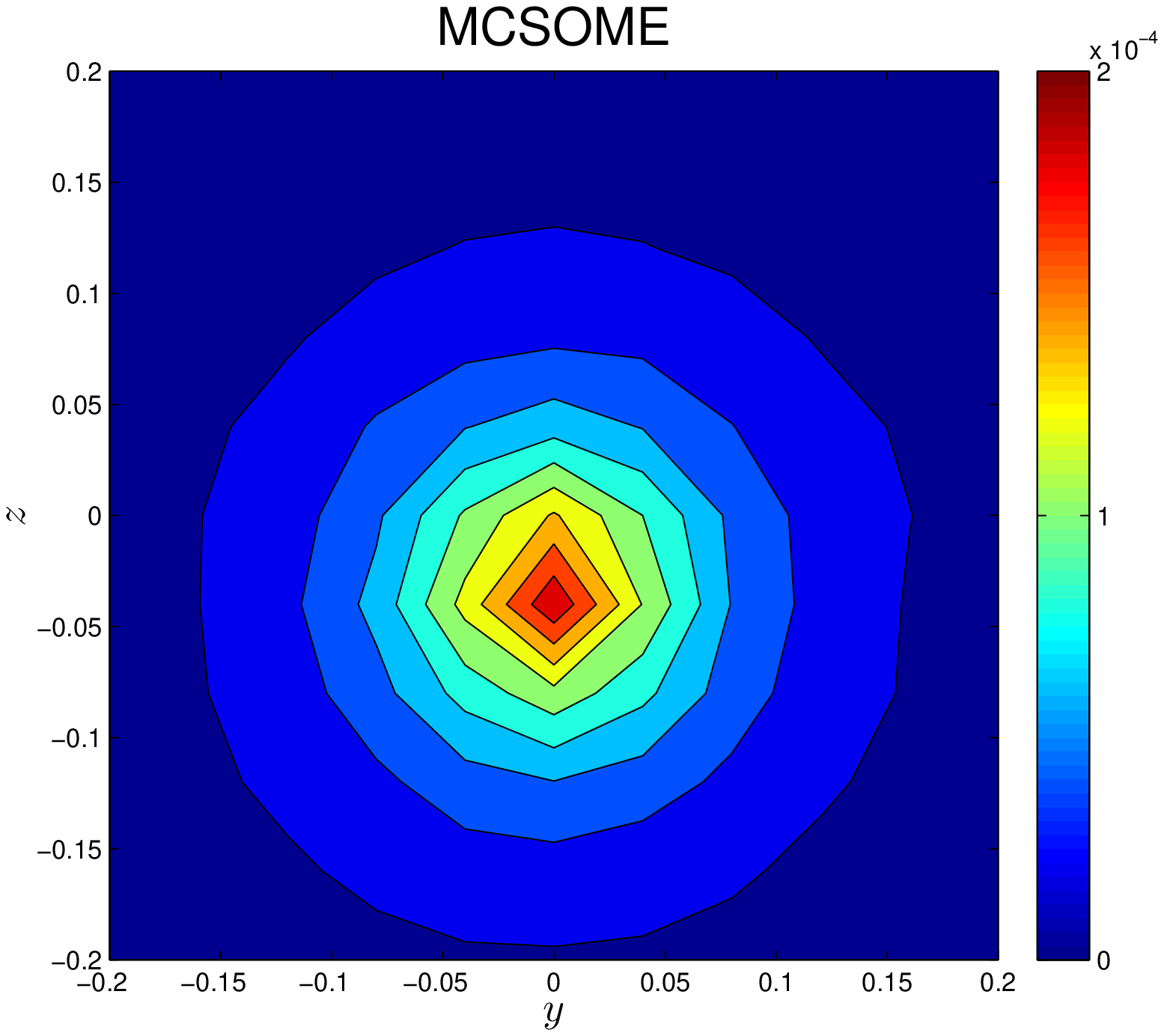}
 \includegraphics[scale=0.2]{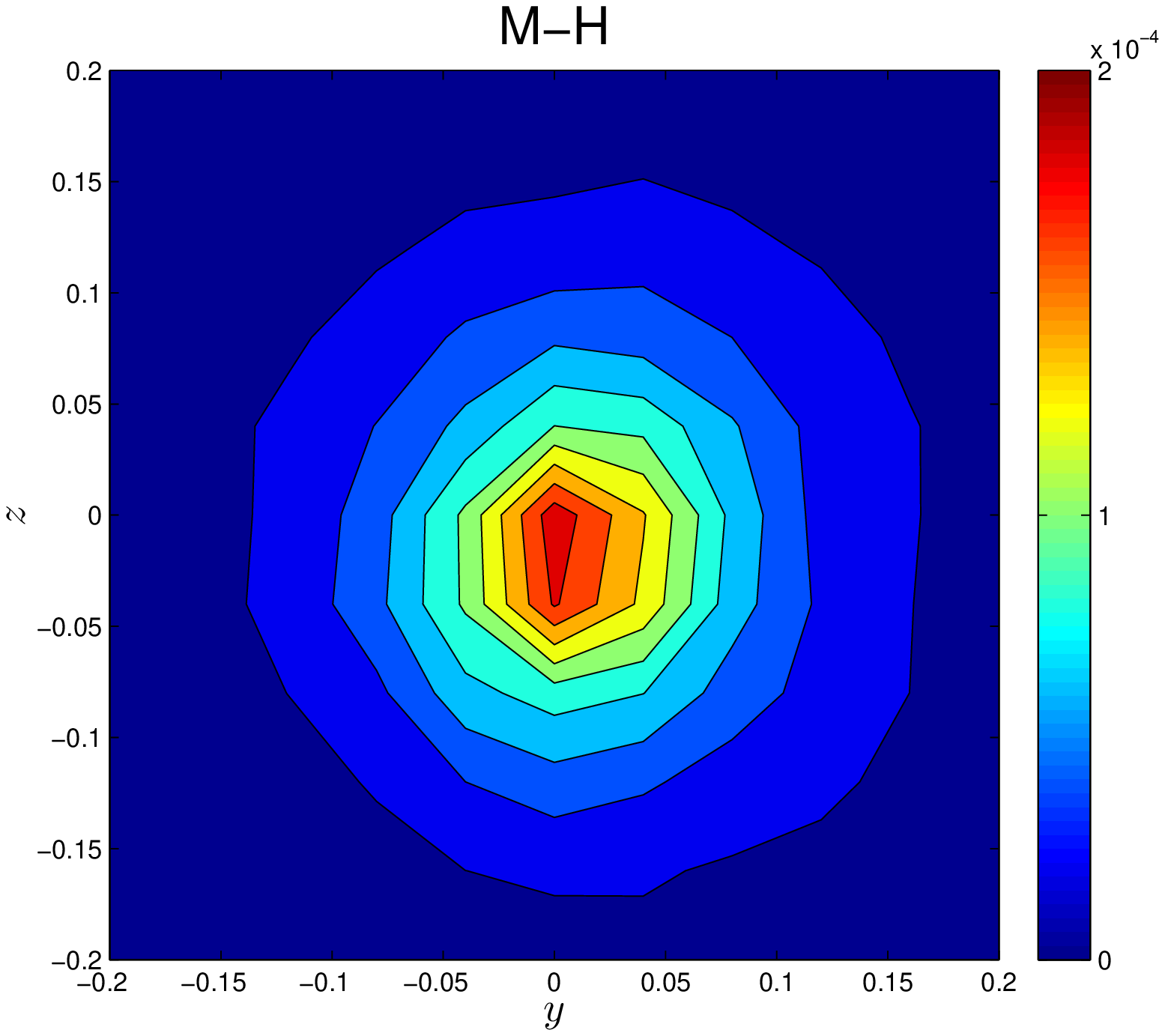}
\caption{Contour plots of the fluence rate estimates in the plane $x=-0.04$ for WANG,  MC-SOME and MH.}
\label{fig:comparison:contourplot}
\end{figure}

\begin{figure}[h!]
\begin{center}
\includegraphics[width=.45\textwidth]{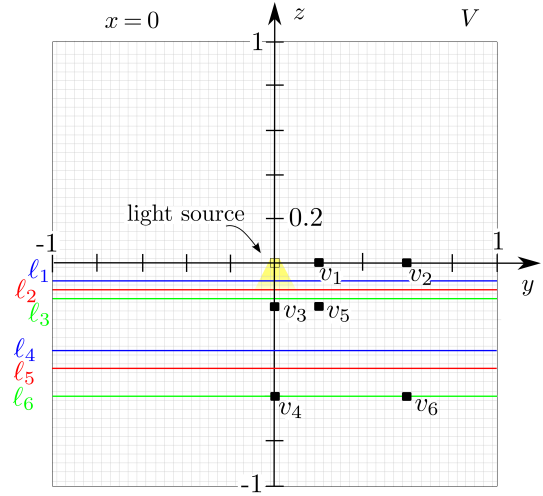} 
\caption{Choice of six particular voxels and position of the lines $(\ell_{i})_{i=1,\ldots, 6}$ in the cube $V$.}
\label{fig:comparison:choicevoxels}
\end{center}
\end{figure}

\vspace{0.2cm}

\begin{figure}[h!]
\begin{center}
\includegraphics[width=\textwidth]{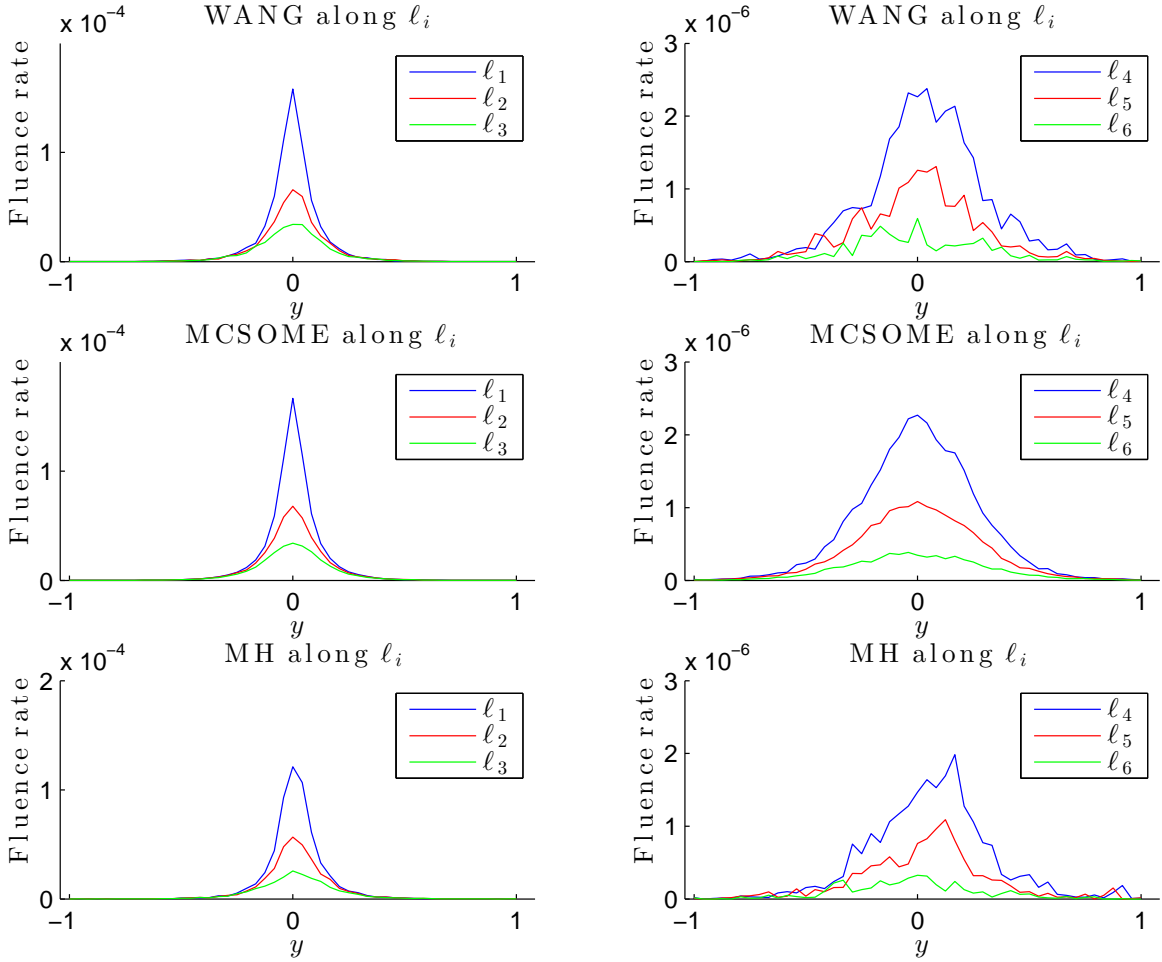} 
\caption{Fluence rate estimates along the lines $(\ell_{i})_{i=1,\ldots, 6}$ with WANG,  MC-SOME and MH.}
\label{fig:comparison:estimatesalongY}
\end{center}
\end{figure}

\newpage
{Let us conclude this section by studying the accuracy of the methods 
 by mean of $50$ independent replicates of these estimates.} In Fig.~\ref{figure:boxplot:comparison:WangMCSOME}, the boxplots compare the dispersion of the $50$ estimates of each method in six voxels $(v_{i})_{i=1,\ldots,6}$ such that (see Fig.~\ref{fig:comparison:choicevoxels})
\begin{align}\label{eq:voxelchoice}
&(0,0.2,0)\in v_{1},&& (0,0.6,0)\in v_{2},&&(0,0,-0.2)\in v_{3},\nonumber\\
&(0,0,-0.6)\in v_{4},&&(0,0.2,-0.2)\in v_{5},&& (0,0.6,-0.6)\in v_{6}.
\end{align}
 \noindent
On one hand, we see that  MC-SOME is much more consistent than WANG, because of the variance reduction that we use in MC-SOME. On the other hand, MH gives spread estimates. This is due to the simulations for which the Markov chain has not yet converged. In Table~\ref{table:comparison:WangMCSOME}, we have the mean of the estimates and their mean square error in each of the $6$ voxels.
\begin{figure}[h!]
\begin{center}
\includegraphics[width=0.45\textwidth]{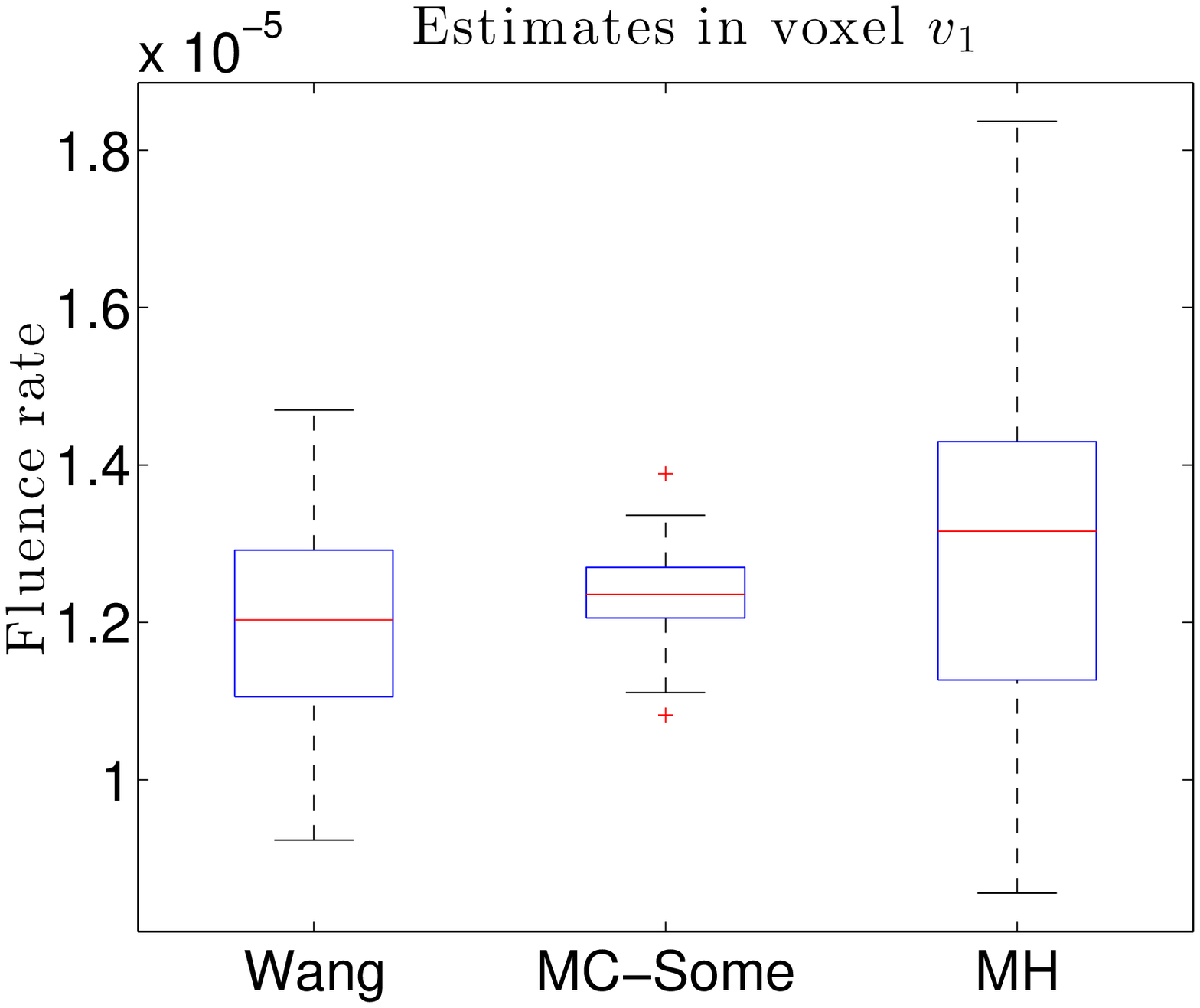}\hspace{3ex} \includegraphics[width=0.45\textwidth]{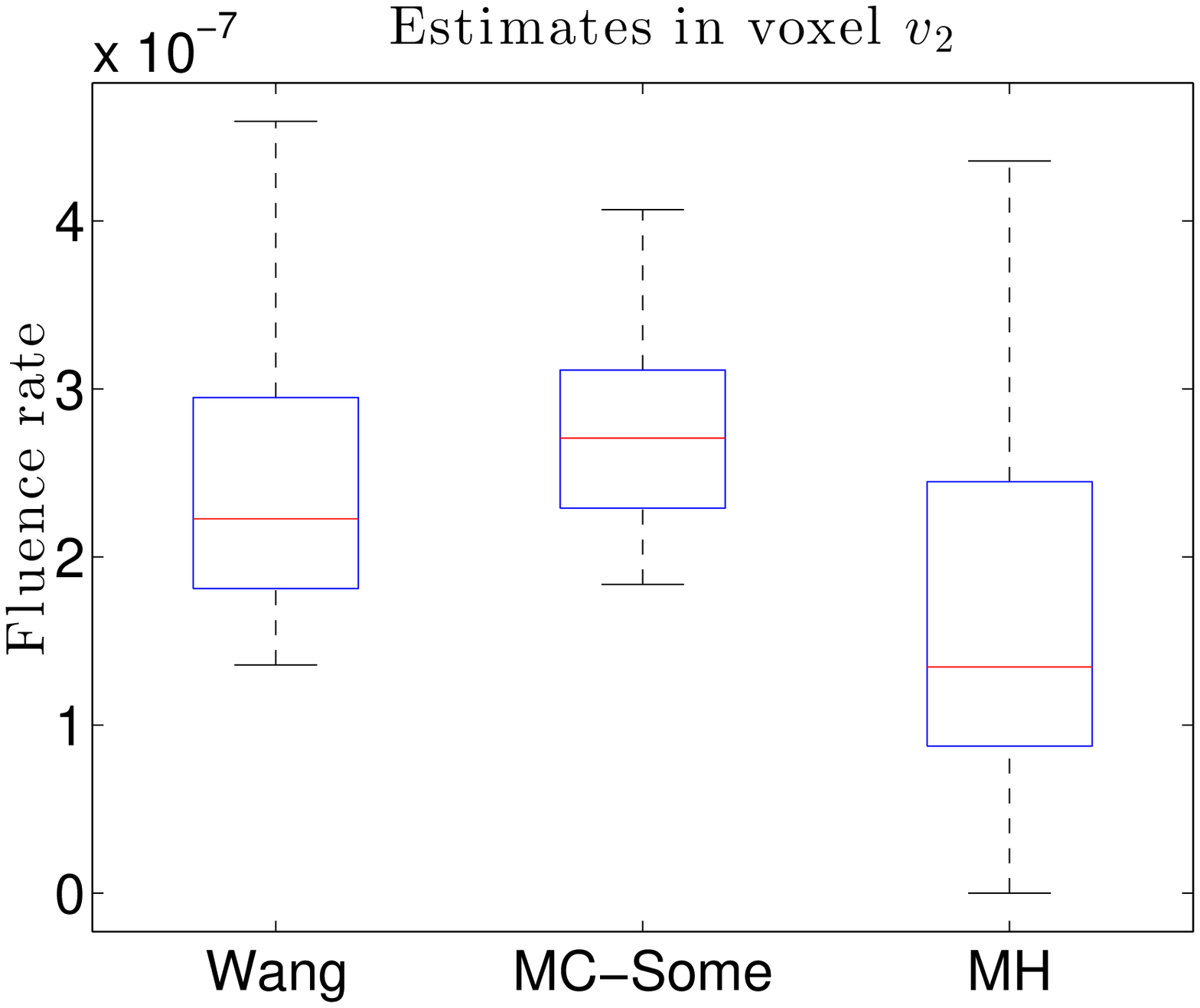}\\[6ex]
\includegraphics[width=0.45\textwidth]{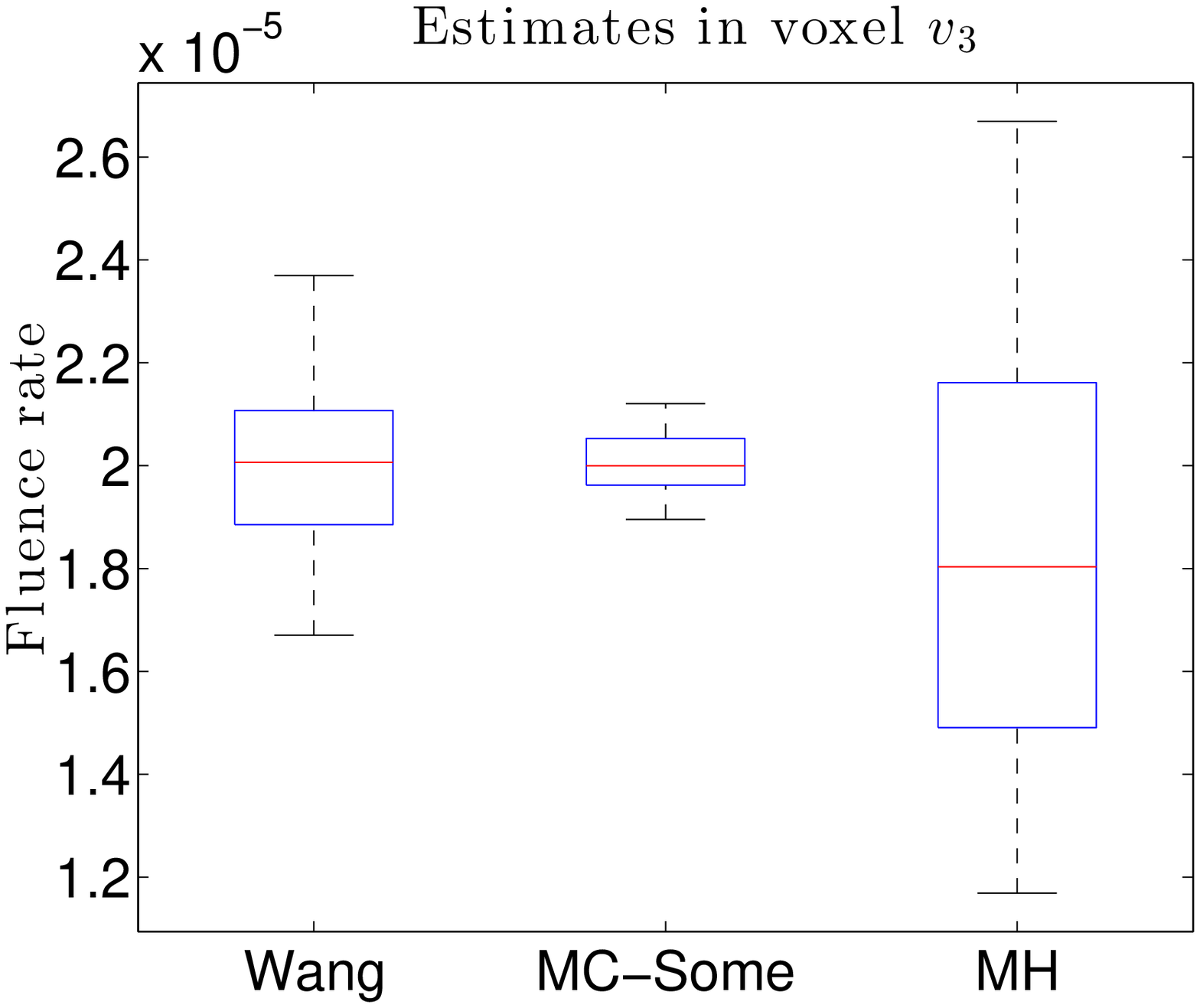}\hspace{3ex} \includegraphics[width=0.45\textwidth]{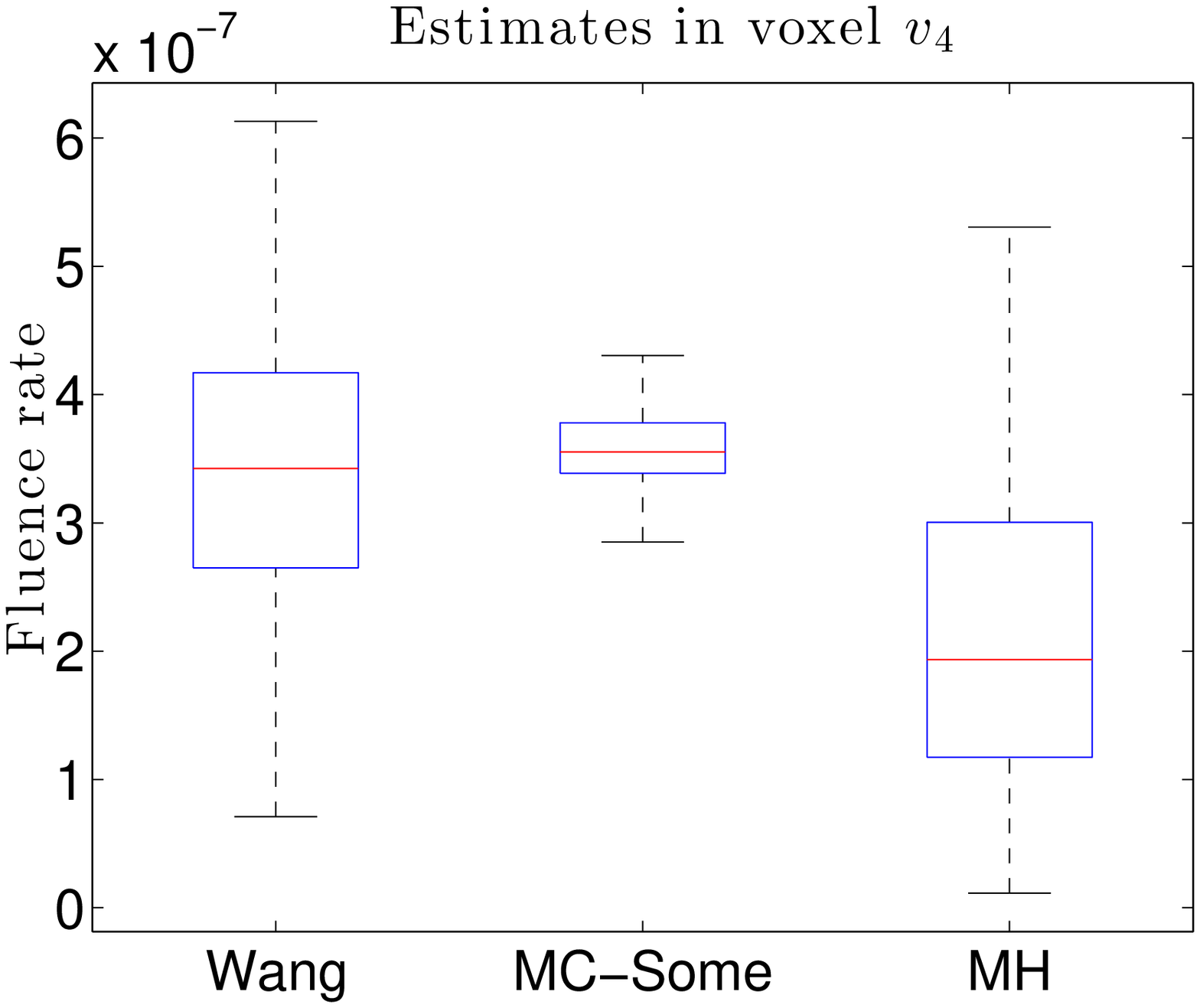}\\[6ex]
\includegraphics[width=0.45\textwidth]{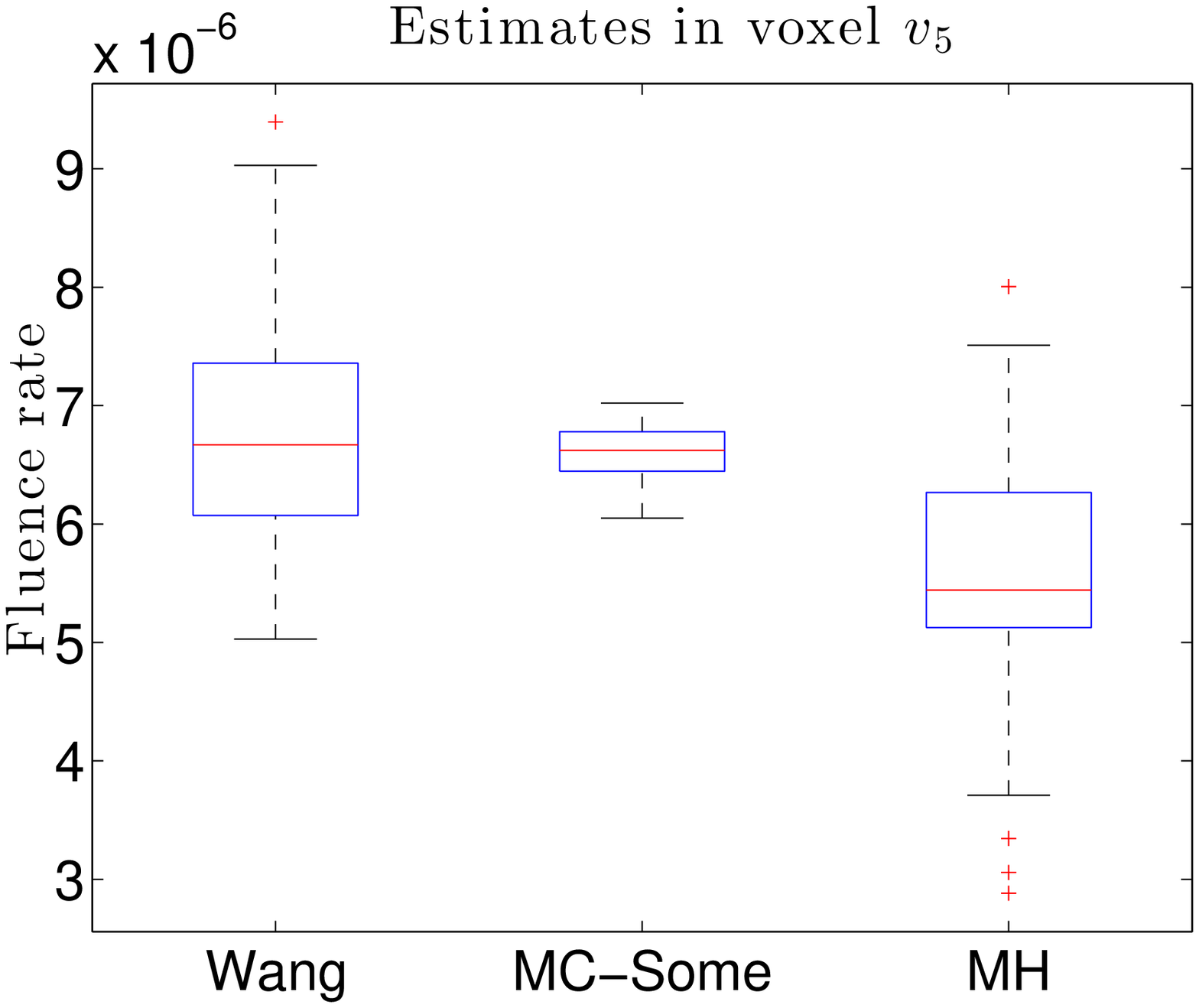}\hspace{3ex} \includegraphics[width=0.45\textwidth]{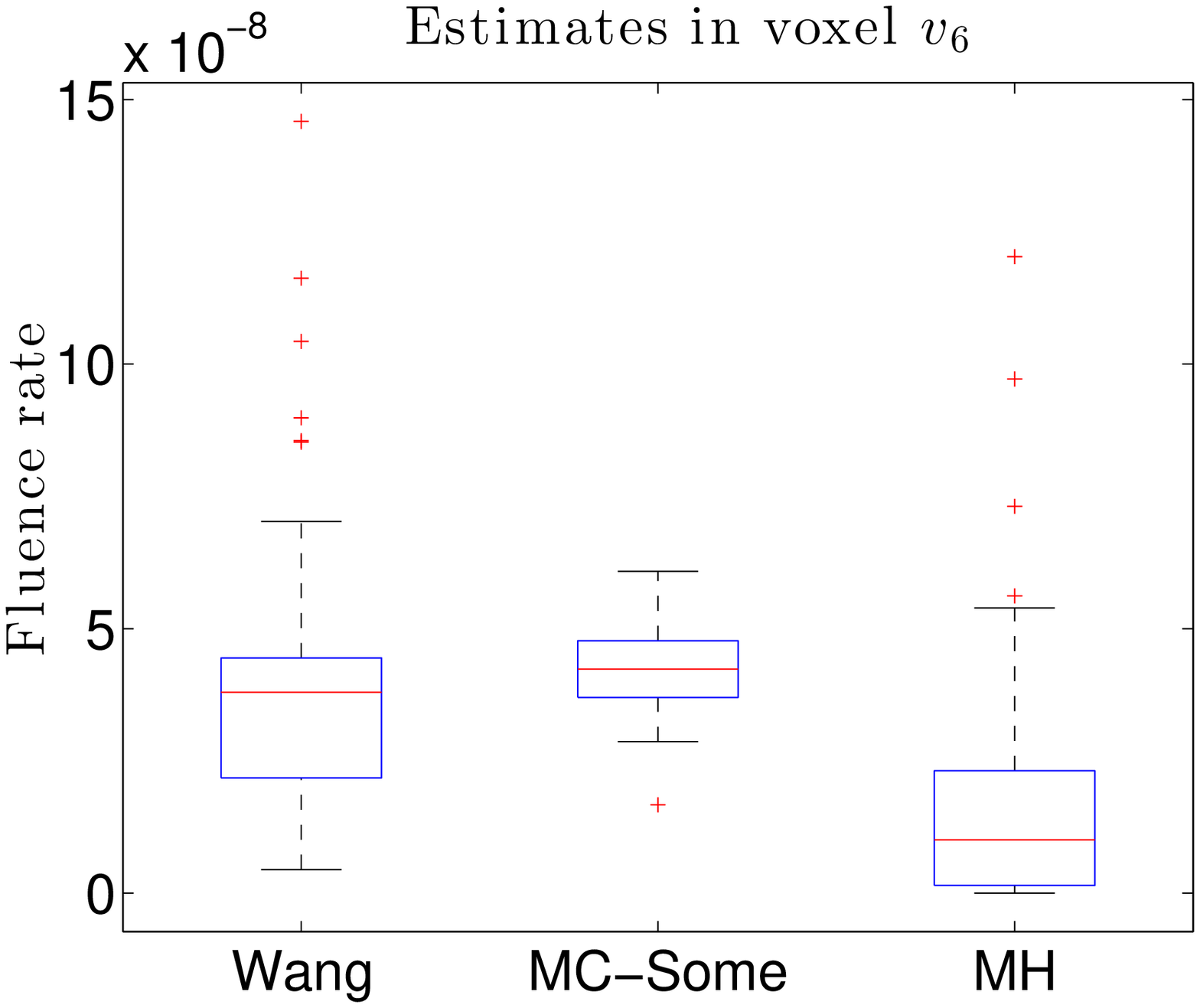}\\[4ex]
\caption{Boxplots of $50$ fluence rate estimates in the voxels $(v_{i})_{i=1,\ldots,6}$ with WANG, MC-SOME and MH.}\label{figure:boxplot:comparison:WangMCSOME}
\end{center}
\end{figure}
\begin{table}[h!]
\centering
\begin{center}
\begin{tabular}{|c|c|c|c|c|c|c|}
\hline
& \multicolumn{3}{c|}{Mean} & \multicolumn{3}{c|}{Mean Square Error} \\
  \cline{2-7}
	& Wang  &  MC-SOME & M-H  			& Wang 	& MC-SOME   	& M-H\\
   \hline
$v_{1}$ & $1.203\cdot 10^{-5}$ & $1.2366\cdot 10^{-5}$ & $1.3083\cdot 10^{-5}$ & $1.4484\cdot 10^{-12}$ & $3.9108\cdot 10^{-13}$ & $ 4.5158\cdot 10^{-12}$\\
$v_{2}$ & $2.4309\cdot 10^{-7}$ & $2.7177\cdot 10^{-7}$ & $1.6545\cdot 10^{-7}$ & $6.3788\cdot 10^{-15}$ & $ 2.5597\cdot 10^{-15}$ & $1.4411\cdot 10^{-14}$\\
$v_{3}$ & $2.0014\cdot 10^{-5}$ & $2.0033\cdot 10^{-5} $ & $1.8337\cdot 10^{-5}$ & $2.503\cdot 10^{-12}$ & $3.0408\cdot 10^{-13}$ & $1.6200\cdot 10^{-11}$\\
$v_{4}$ & $ 3.4947\cdot 10^{-7}$ & $3.5713\cdot 10^{-7}$ & $2.1497\cdot 10^{-7}$ & $1.4136\cdot 10^{-14}$ & $8.1737\cdot 10^{-16}$ & $1.5819\cdot 10^{-14}$\\
$v_{5}$ & $6.7519\cdot 10^{-6}$ & $6.6047\cdot 10^{-6}$ & $5.4786\cdot 10^{-6}$ & $8.0457\cdot 10^{-13}$ & $4.4781\cdot 10^{-14}$ & $1.1378\cdot 10^{-12}$\\
$v_{6}$ & $3.9952\cdot 10^{-8}$ & $4.217\cdot 10^{-8}$ & $1.8755\cdot 10^{-8}$ & $8.4058\cdot 10^{-16}$ & $6.6977\cdot 10^{-17}$ & $6.5413\cdot 10^{-16}$\\
\hline
\end{tabular}
\end{center}
\caption{Mean and mean square error of the $50$ fluence rate estimates in 6 voxels with WANG, MC-SOME and MH.}
\label{table:comparison:WangMCSOME}
\end{table}

\section{Inverse problem and sensitivity}\label{sec:inverse}

For biologists, it is of considerable practical importance to have good estimates of the optical coefficients of the tissue they consider. One way to do this estimate is to compare simulated data with measurements of the fluence rate in the tissue and adjust the optical parameters of the simulation until obtaining values close to the measurements. Thanks to the probabilistic representation \eqref{eq:luminance:proba:repres}, this problem can be numerically solved as we shall see. 

\subsection{Sensitivity of the measurements}\label{sec:sensitivity}

As a preliminary step towards a good resolution of the inverse problem, we first observe how fluence rate measurements vary with respect to the optical parameters $g$, $\mu_{s}$ and $\mu_{a}$. To this aim, we built a small database of simulations for different values of the parameters and then compared the estimated fluence rate. The estimates are computed by resorting to MC-SOME, which is the best performing method among the three we have implemented according to Section \ref{sec:comparison}. 

First, we choose a reference simulation obtained with \emph{reference parameters} $(g^{*}, \mu_{a}^{*},\mu_{s}^{*})$. Then, we pick $n\geq 1$ voxels and consider their respective fluence rate estimates as \emph{measurements}. That is, we choose $n$ voxel centers $(x_{k_{i}})_{i=1,\ldots,n}$ and define 
$$m_{i}=\widehat{L}(x_{k_{i}};g^{*},\mu_{a}^{*},\mu_{s}^{*}), \qquad i=1, \ldots,n,$$ 
where we recall that $L(x_{k_i})$ is defined by \eqref{eq:def-L-xk} with $\gamma_{W_0}=\ind_{\{\omega_0'\in C^{2\alpha}\}}\sigma(d\omega_0')/\sigma(C^{2\alpha})$ and where we stress the dependence on the optical coefficients by writing $\widehat{L}(x_{k_{i}};g^{*},\mu_{a}^{*},\mu_{s}^{*})\equiv \widehat{L}(x_{k_{i}})$.
Now for each possible triplet of parameters $(g, \mu_{a},\mu_{s})$, we compute the normalized quadratic error (or \emph{evaluation error})
\begin{equation}\label{eq:def:J}
J(\mu_{a},\mu_{s}, g )=\frac{1}{2}\sum_{i=1}^{n} \lp\frac{\widehat{L}(x_{k_{i}};g,\mu_{a},\mu_{s})-m_{i}}{m_{i}}\rp^{2}.
\end{equation}
For the dataset of simulations, we use the same settings as in Section~\ref{sec:comparison} ($|V|=8\text{ cm}^{3}$, the volume of a voxel is $\lp0.04\rp^{3}\,\text{cm}^{3}$ and  $\alpha=\frac{\pi}{10}$). The variable parameters are:  $g$,  $\mu_{a}$ and  $ \mu_{s}$. Their values are given in Table~\ref{table:sensi:param}. This choice is motivated by~\cite{AngellPetersen2007, Bevilacqua1999, CheongPrahl1990, Johns2005}. The anisotropy parameter $g$ does not vary a lot between tissue type (healthy or tumorous) and it is often even hidden in a reduction of the scattering coefficient $\mu_{s}^{\prime}=\mu_{s}(1-g)$. For this reason, we chose only three values in a small range of common values. Concerning the other parameters, we chose five values in intervals covering values corresponding to healthy and tumorous brain tissues according to~\cite{AngellPetersen2007, Johns2005}.
\begin{table}[h!]
\begin{center}
  \begin{tabular}{|l|l|l|l|l|l|}
    \hline
     $g$ 					& $0.85$ & $0.90$ & $0.95$ & \multicolumn{2}{c|}{} \\ 
    \hline
     $\mu_{a}$ in $cm^{-1}$    	& $0.5$ & $0.75$ & $1$ & $1.25$ & $1.5$\\ 
    \hline
     $\mu_{s}$ in $cm^{-1}$ 	& $75$ & $90$ & $105$ & $120$ & $135$\\ 
    \hline
  \end{tabular}\caption{Values of the optical parameters for the study of sensitivity.}\label{table:sensi:param}
\end{center}
\end{table}

Figures~\ref{figure:colormap:sensitivity} and~\ref{figure:sensitivity:3vox:mua} give different representation of the variation of the error $J(\mu_{a},\mu_{s}, g)$ with respect to the optical parameters. The real values are $(\mu_{a}^{*},\mu_{s}^{*},g^{*})=(0.75, 105, 0.9)$ and we set $n=3$, $x_{k_{1}}\in v_{2}$, $x_{k_{2}}\in v_{4}$, $x_{k_{3}}\in v_{6}$ respectively (see Fig.~\ref{fig:comparison:choicevoxels}). We see that the sensitivity in the parameters $\mu_{s}$ and $g$ is very low compared to the sensitivity in $\mu_{a}$.  In Fig.~\ref{figure:colormap:sensitivity}, we see that a wrong value of $\mu_{a}$ has strong effects on the error function and that it becomes then almost impossible to see any tendency for the anisotropy parameter $g$. Notice also that an undervaluation of $\mu_{a}$ is worse than an overvaluation in terms of the error.

\begin{figure}[h!]
\begin{center}
\includegraphics[width=0.31\textwidth]{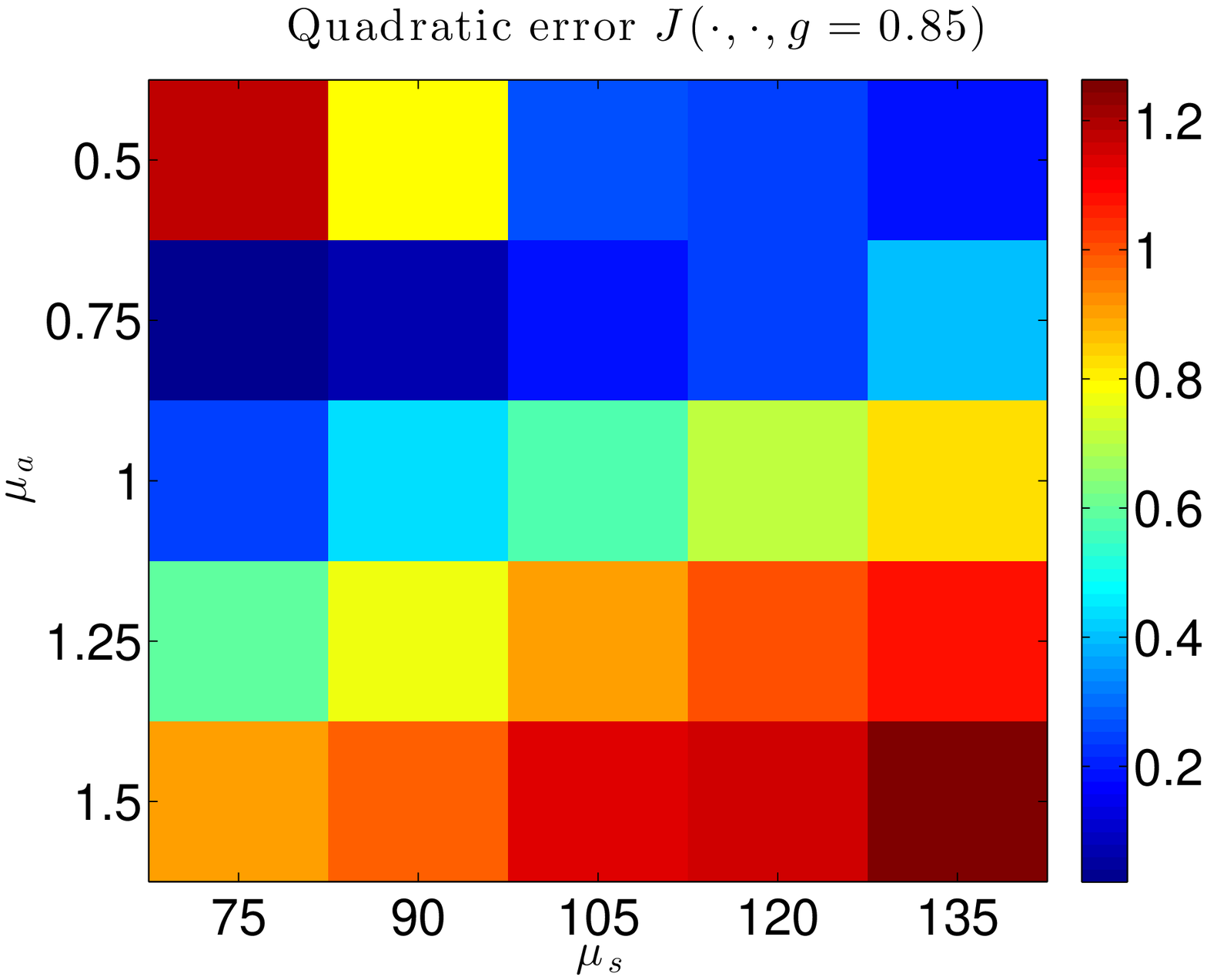}\hspace{1ex} 
\includegraphics[width=0.31\textwidth]{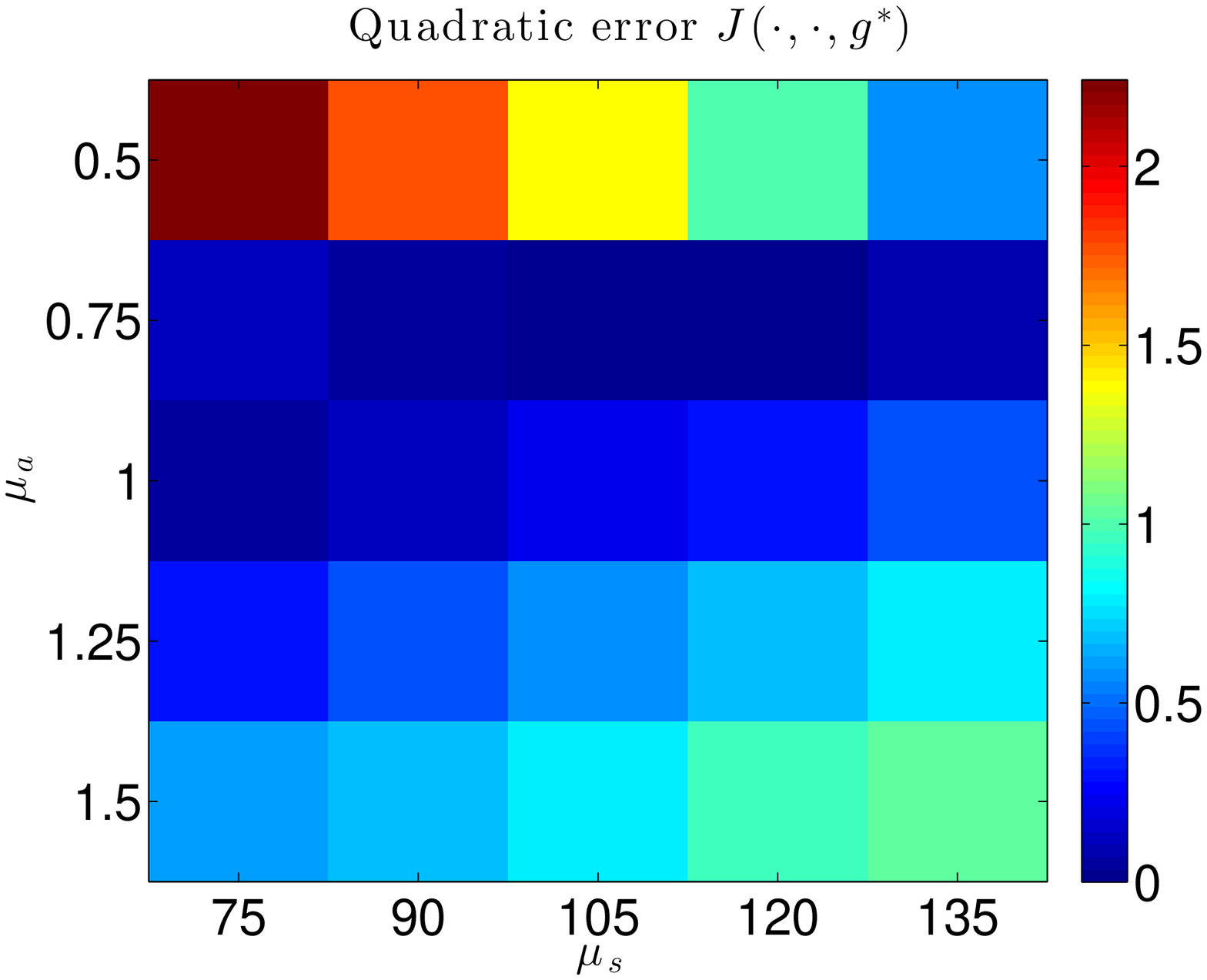}\hspace{1ex}
\includegraphics[width=0.31\textwidth]{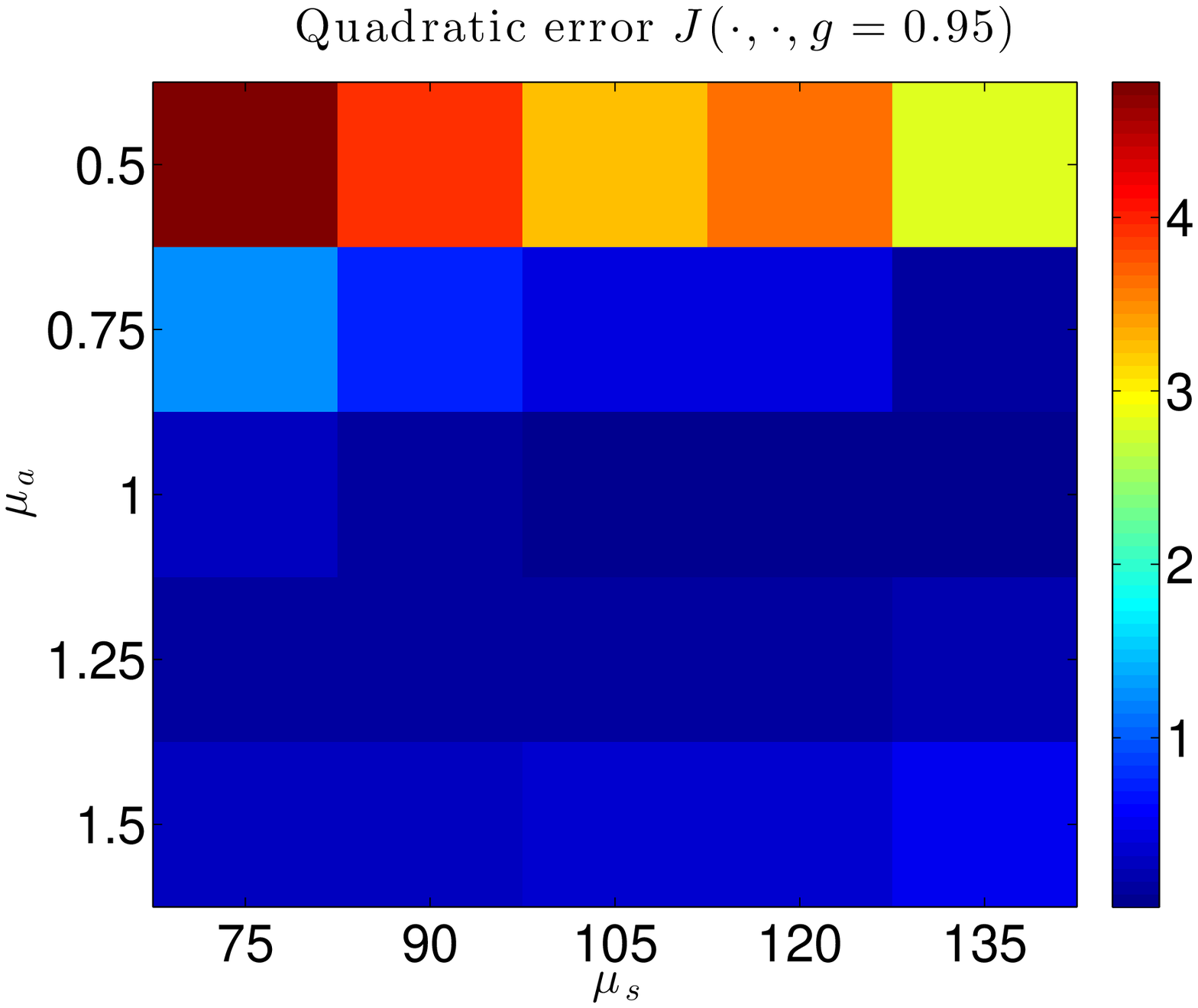}
\caption{Colormap of the quadratic error  $(\mu_{a},\mu_{s})\mapsto J(\mu_{a},\mu_{s}, g)$ for three values of $g${, where $\mu_{s}$ is displayed on the $x$ axis and $\mu_{a}$ on the $y$ axis}.}
\label{figure:colormap:sensitivity}
\end{center}
\end{figure}

\begin{figure}[h!]
\begin{center}
\includegraphics[width=0.31\textwidth]{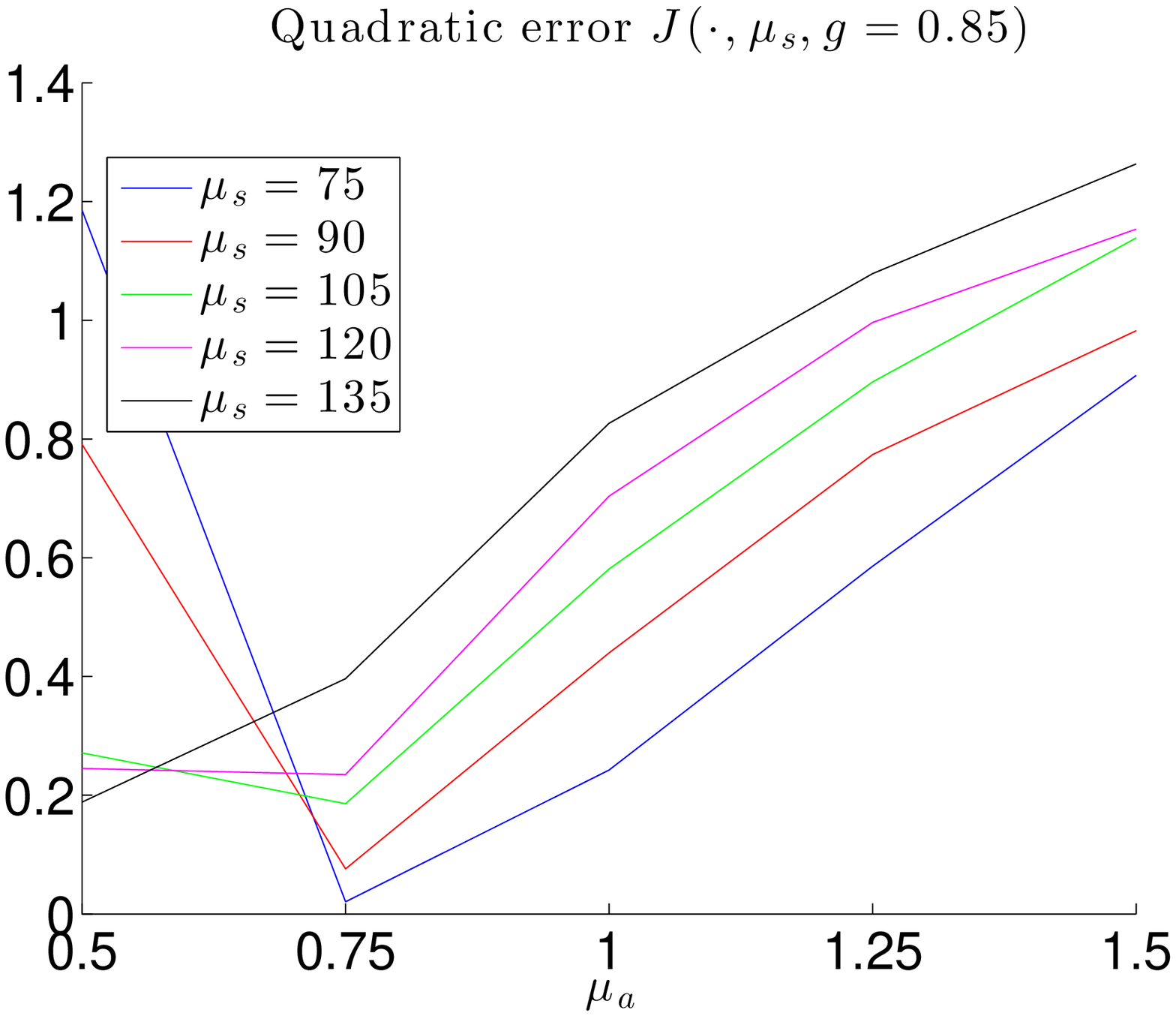}\hspace{1ex} 
\includegraphics[width=0.31\textwidth]{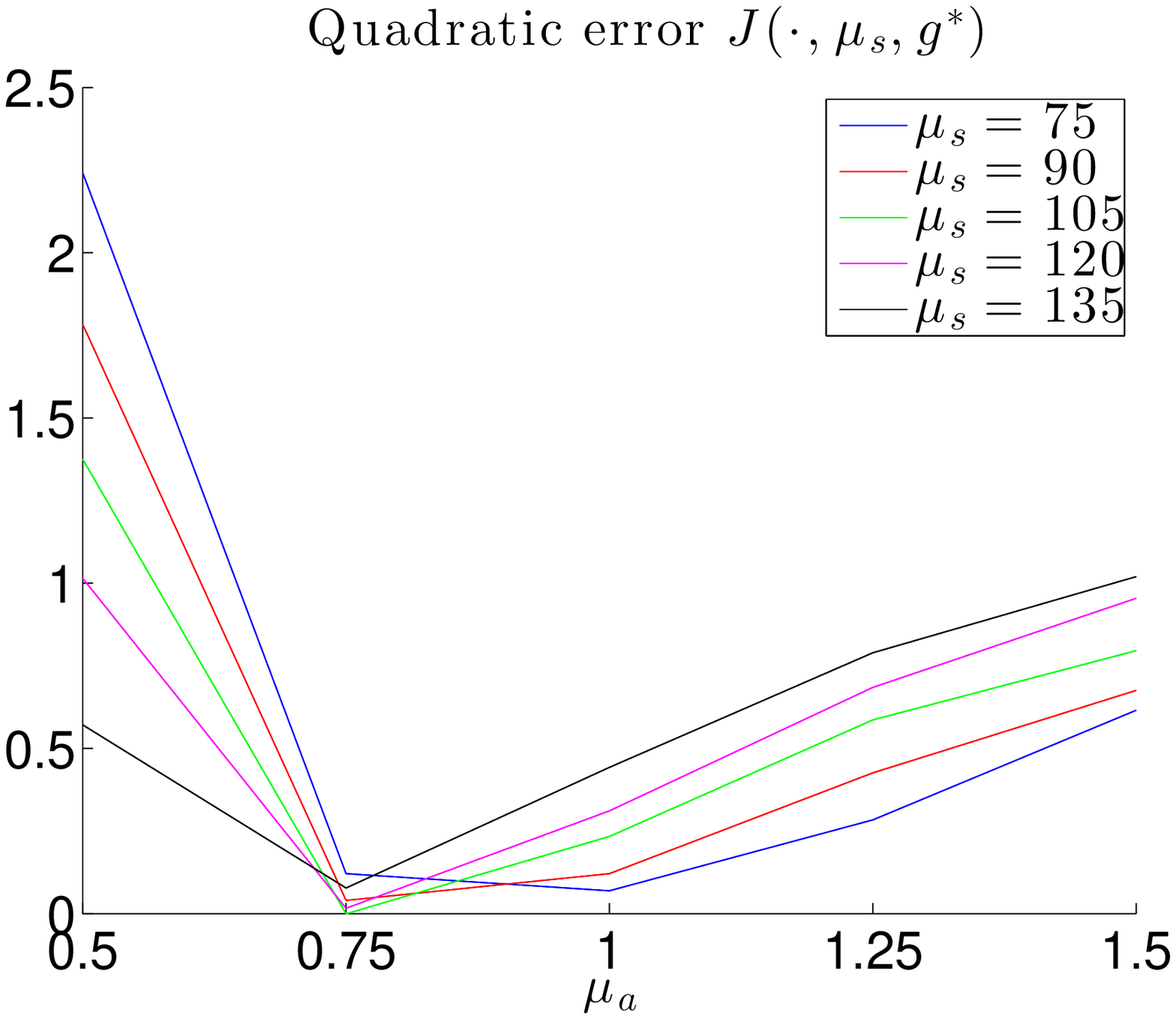}\hspace{1ex}
\includegraphics[width=0.31\textwidth]{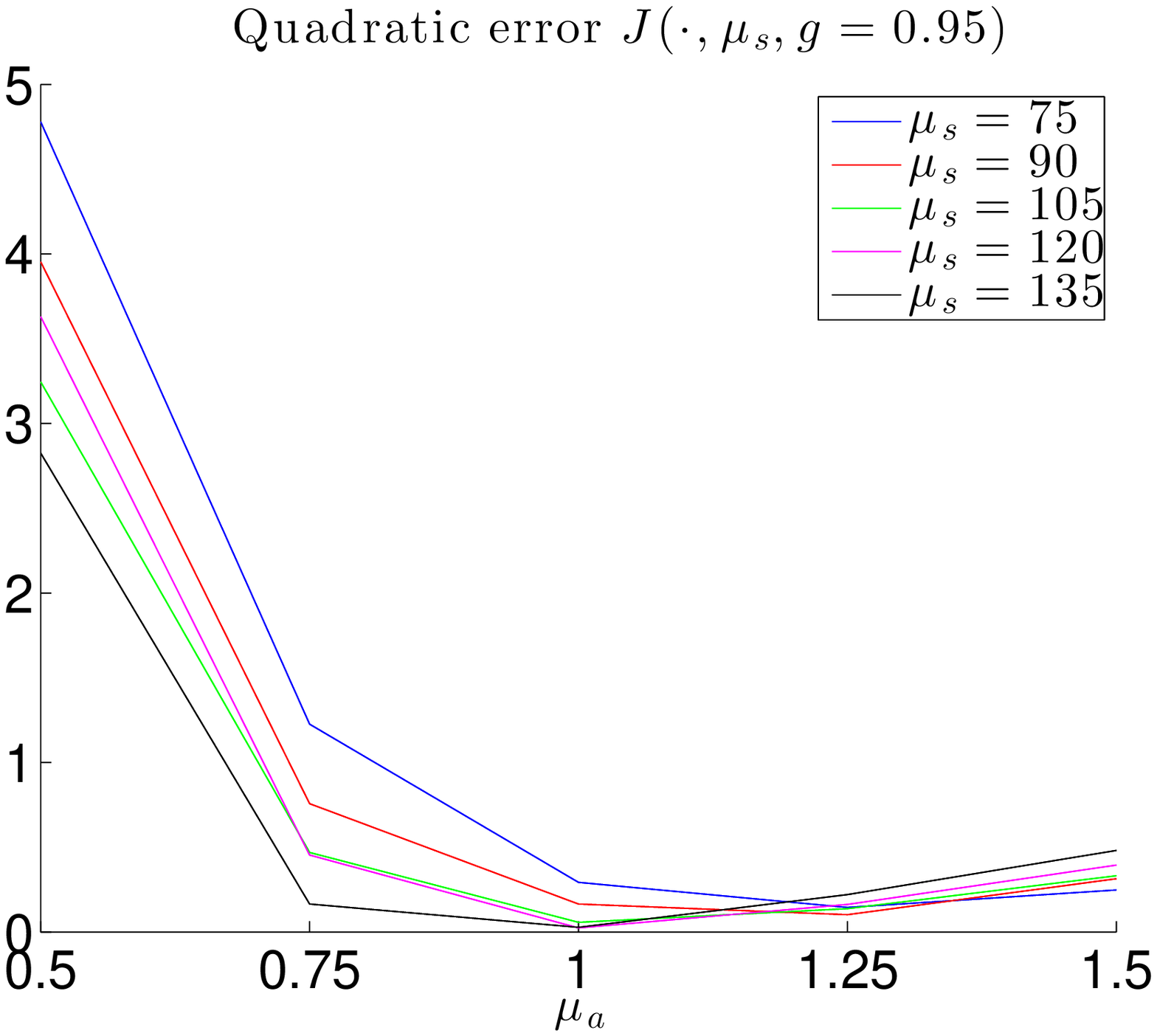}
\caption{Quadratic error $\mu_{a}\mapsto J(\mu_{a},\mu_{s}, g)$  for three values of $g$.}
\label{figure:sensitivity:3vox:mua}
\end{center}
\end{figure}


\subsection{Parameters estimation}

This section is devoted to the estimation of the parameters $\mu_{a}$ and $\mu_{s}$ only. Indeed, we  have seen in the last section that the sensitivity of the fluence rate with respect to the anisotropic parameter $g$ is low. Moreover, simulations do not show any monotonicity or tendency in the error for this parameter because, in our settings, the Monte Carlo error prevails over the evaluation error. In addition, for our purpose, the uncertainty about $g$ is small in front of the uncertainty of the two other parameters (see~\cite{AngellPetersen2007}). We shall thus suppose in the sequel that $g$ is known.

With these preliminary considerations in mind, our goal is to solve the following nonlinear least square minimization problem:
Find $(\mu_{s}, \mu_{a})$ in order to minimize
\begin{equation}\label{eq:J}
J(\mu_{s}, \mu_{a})=\frac{1}{2}\,\sum_{i=1}^{n}\lp \frac{ L(x_{k_{i}}; \mu_{s}, \mu_{a})-m_{i}}{m_{i}}\rp^{2} ,
\end{equation}
where $(m_{i})_{i=1,\ldots, n}$ are measurements in $n$ different voxels centered at $(x_{k_{i}})_{i=1,\ldots, n}$.

The optimization method that we implemented to solve this problem is based on the Levenberg-Marquardt algorithm (see~\cite{Fletcher2013}). This gradient descent algorithm involves the computation of the gradient, as well as the Hessian matrix of the \emph{score function} $J$. It is described in pseudo-code in Algorithm~\ref{algo:LM}. In this description, we have set $H_{k}=\Hess(J)(\mu_{s}^{k},\mu_{a}^{k})$ for $k\geq 0$. The term $\Diag(H_{k})$ is the diagonal matrix of $H_{k}$, $\lambda_{k}$ is the so-called \emph{damping factor} which may be either constant or corrected at each step, and $\tau_{k}\in\R_{+}$ controls the step size of each iteration.

\begin{algorithm}
\begin{algorithmic}[1]
\REQUIRE measurements $(m_{i})_{i=1,\ldots, n}$, initial couple $(\mu_{s}^{0},\mu_{a}^{0})$, precision $\varepsilon>0$.
\STATE $k\leftarrow0$
\WHILE{$J(\mu_{s}^{k},\mu_{a}^{k})>\varepsilon$}
\STATE \label{algo:line:eq:LM}
$(\mu_{s}^{k+1},\mu_{a}^{k+1})\leftarrow(\mu_{s}^{k},\mu_{a}^{k})-\tau_{k}\,\lc H_{k}+\lambda_{k}\Diag(H_{k}) \rc^{-1} \nabla J(\mu_{s}^{k},\mu_{a}^{k})$
\STATE $k\leftarrow k+1$
\ENDWHILE
\ENSURE an approximation $(\mu_{s}^{k},\mu_{a}^{k})$ of the real parameters $(\mu_{s}^{*},\mu_{a}^{*})$
\end{algorithmic}\caption{Gradient descent algorithm for the estimation of $\mu_{a}$ and $\mu_{s}$}\label{algo:LM}
\end{algorithm}

The simple form of the objective function in~\eqref{eq:J} allows to express the term on the right hand side of line~\ref{algo:line:eq:LM} in Algorithm~\ref{algo:LM} explicitly as a function of the partial derivatives of $L$. Indeed, the gradient of $J$ is given by
\begin{equation}\label{eq:grad:J}
\nabla J(\cdot)=\sum_{i=1}^{n} \frac{ L(x_{k_i}; \cdot)-m_{i}}{m_{i}^{2}}\,\nabla L(x_{k_i}; \cdot),
\end{equation}
and its Hessian matrix is given by
\begin{equation}\label{eq:Hess:J}
\Hess(J)(\cdot)=\sum_{i=1}^{n} \left(\frac{ L(x_{k_i}; \cdot)-m_{i}}{m_{i}^{2}}\,\Hess(L)(x_{i};\cdot)+\frac{1}{m_{i}^{2}}\,\nabla L(x_{k_i}; \cdot)\nabla^{t} L(x_{k_i}; \cdot)\right).
\end{equation}

Moreover, as stated in the following proposition, the formal representation in Proposition~\ref{prop:expectation} allows to also use the Monte Carlo method MC-SOME in order to estimate the first order and the second order partial derivatives of $L$  which can be expressed similarly to~\eqref{eq:fluence:as:proba}.

\begin{Prop}\label{prop:L:derivatives}
The partial derivatives of $L(x_{k_i}; \mu_{s}, \mu_{a})$ can be expressed as the expectation of fully simulable random variables. Using the same notations as in~\eqref{eq:Lxk:RW}, they are given by
\begin{align}
\frac{\partial L}{\partial \mu_{a}} (x_{k_{i}}; \mu_{s}, \mu_{a})&=-\frac{c(1-\cos(\alpha))}{2\mu_a}\E\left( \ind_{\lac S_{N} \in V_{k_i}\rac} \sum_{j=0}^{N}R_j\right)
\label{eq:derivL:mua}
\\
{\rm and \quad } \frac{\partial L}{\partial \mu_{s}} (x_{k_{i}}; \mu_{s}, \mu_{a})&
=\frac{c(1-\cos(\alpha))}{2\mu_a}\E\left( \ind_{\lac S_{N} \in V_{k_i} \rac} \left(\frac{N}{\mu_s}-\sum_{j=0}^{N}R_j\right)\right).
\label{eq:derivL:mus}
\end{align}
\end{Prop}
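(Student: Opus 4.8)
The plan is to start from the probabilistic representation of the fluence rate that was derived in Proposition~\ref{prop:simulation:and:RW}, namely
\[
L(x_{k_i};\mu_s,\mu_a) = \frac{c(1-\cos\alpha)}{2\mu_a}\,\mathbb{P}\!\left(S_N\in V_{k_i}\right),
\]
and differentiate both sides with respect to $\mu_a$ and $\mu_s$. The subtlety is that the right-hand side depends on the parameters in three places: through the prefactor $1/\mu_a$, through the negative binomial parameter $\rho=\mu_s/\mu$ governing the law of $N$, and through the rate $\mu$ of the i.i.d.\ exponential steps $R_i$. The clean way to organize the computation is to go back to the series representation, writing
\[
\mathbb{P}(S_N\in V_{k_i}) = \sum_{n=0}^\infty (1-\rho)\rho^n\,\mathbb{P}(S_n^{(\mu)}\in V_{k_i}),
\]
where I make explicit that $S_n^{(\mu)}=\sum_{j=0}^n R_j W_j$ has exponential steps of rate $\mu$ (the directions $W_j$ carry no $\mu_s,\mu_a$ dependence once the Henyey--Greenstein parameter $g$ is fixed). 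Then differentiate term by term, justifying the interchange of sum and derivative by dominated convergence since the series converges geometrically and the derivatives introduce at most polynomial factors in $n$.

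The concrete steps are: first, express $\mathbb{P}(S_n^{(\mu)}\in V_{k_i})$ as an integral against the joint exponential density $\mu^{n+1}e^{-\mu\sum_j r_j}\,dr_0\cdots dr_n$ times the angular density, and compute $\partial_\mu$ of this, which by the standard exponential-family identity produces the factor $\bigl(\frac{n+1}{\mu}-\sum_j r_j\bigr)$ inside the expectation. Second, handle the $\rho$-dependence: $\partial_\rho\bigl[(1-\rho)\rho^n\bigr]=(1-\rho)\rho^n\bigl(\frac{n}{\rho}-\frac{1}{1-\rho}\bigr)$, which after resumming is exactly the score function $\frac{N}{\rho}-\frac{1}{1-\rho}$ of the negative binomial law, so $\partial_\rho \mathbb{P}(S_N\in V_{k_i})=\mathbb{E}\bigl[\ind_{\{S_N\in V_{k_i}\}}\bigl(\tfrac{N}{\rho}-\tfrac{1}{1-\rho}\bigr)\bigr]$. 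Third, assemble via the chain rule: since $\mu=\mu_s+\mu_a$ and $\rho=\mu_s/\mu$, one has $\partial_{\mu_a}\mu=1$, $\partial_{\mu_s}\mu=1$, $\partial_{\mu_a}\rho=-\mu_s/\mu^2=-\rho/\mu$, and $\partial_{\mu_s}\rho=\mu_a/\mu^2=(1-\rho)/\mu$, and also differentiate the prefactor $1/\mu_a$ only in the $\mu_a$-derivative. Fourth, simplify: a number of terms should telescope. For instance, in $\partial_{\mu_a}L$ the contribution $-\frac{1}{\mu_a}L$ from the prefactor should cancel against pieces coming from $\partial_{\mu_a}\rho$ and the $\frac{n+1}{\mu}$ term, leaving precisely $-\frac{c(1-\cos\alpha)}{2\mu_a}\mathbb{E}\bigl[\ind_{\{S_N\in V_{k_i}\}}\sum_{j=0}^N R_j\bigr]$; similarly the $\mu_s$-derivative should collapse to the stated expression with $\frac{N}{\mu_s}-\sum_{j=0}^N R_j$. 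It is worth double-checking the algebra by an independent route: reparametrize directly in $(\mu_s,\mu_a)$ so that the joint density of $(N,R_0,\dots,R_N)$ on $\{N=n\}$ is $\mu_a\,\mu_s^n e^{-(\mu_s+\mu_a)\sum r_j}\prod dr_j\cdot(\text{angular part})$ up to the $1/\mu_a$ prefactor — wait, more precisely the density of the \emph{un-normalized} weight — and take $\partial_{\mu_s}\log$ and $\partial_{\mu_a}\log$ of $L$ times $L$; this should reproduce the same two formulas with less bookkeeping, since $\partial_{\mu_s}\log(\mu_a\mu_s^n e^{-(\mu_s+\mu_a)\sum r_j})=\frac{n}{\mu_s}-\sum r_j$ and $\partial_{\mu_a}$ gives $\frac{1}{\mu_a}-\sum r_j$, and the $\frac{1}{\mu_a}$ here is exactly killed by differentiating the external $\frac{1}{\mu_a}$ prefactor with the right sign.

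The main obstacle is not conceptual but the bookkeeping: making sure the differentiation under the integral/sum is rigorously justified (dominated convergence with a geometric-times-polynomial bound, uniformly for $(\mu_s,\mu_a)$ in a compact set bounded away from the degenerate values $\mu_a=0$ and $\rho=1$), and then being careful enough with the chain rule through the two reparametrizations $\mu=\mu_s+\mu_a$, $\rho=\mu_s/\mu$ that all the spurious terms genuinely cancel. I would present the $\log$-derivative route as the primary argument, since it isolates the cancellation of the $1/\mu_a$ prefactor most transparently, and relegate the dominated-convergence justification to a short remark. The angular variables play no role and can be integrated out or simply carried along untouched throughout.
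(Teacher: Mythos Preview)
Your proposal is correct, and your secondary ``log-derivative'' route is essentially the paper's own argument in different clothing. The paper goes back to the raw Neumann series of Corollary~\ref{cor:ERT:Neumann}, whose $n$-th term carries the factor $\mu_s^n\,e^{-\mu\sum_j r_j}$ with \emph{no} explicit $\mu_a$ outside the exponent; differentiating this integrand directly in $\mu_a$ brings down $-\sum_j r_j$ and in $\mu_s$ brings down $\frac{n}{\mu_s}-\sum_j r_j$, with no cancellation needed. This is exactly what you discover in your second pass when you observe that the joint weight on $\{N=n\}$ is $\mu_a\,\mu_s^n e^{-\mu\sum r_j}$ and that the $\mu_a$ here cancels the external $1/\mu_a$ prefactor---at which point you are literally differentiating the Neumann integrand.

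Your primary chain-rule route through the parametrization $(\mu,\rho)$ is also valid but buys you nothing except extra bookkeeping: the score of the negative binomial, the score of the exponential family, and the Jacobian of $(\mu_s,\mu_a)\mapsto(\mu,\rho)$ all conspire to reproduce the same two factors, and the ``telescoping'' you anticipate is precisely the undoing of the artificial split $\mu_s^n = (1-\rho)\rho^n\mu^{n+1}/\mu_a$. I would drop this detour and lead with the direct differentiation of the Neumann integrand (equivalently, your log-derivative computation), which makes the result a two-line calculation per parameter. The justification of the interchange is handled the same way in both: the paper invokes uniform convergence of the differentiated truncated sums, your dominated-convergence argument with a geometric-times-polynomial envelope on a compact parameter set is equally adequate.
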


\begin{proof}

We start by differentiating term-by-term the Neumann series of Corollary~\ref{cor:ERT:Neumann}. Let us first note that by definition of $L_i$ we have:
$$
\frac{\partial  L_{i} }{\partial \mu_{a}} (x,\omega_0;\mu_s,\mu_a)=-\int_{0}^{+\infty}r\exp(-(\mu_s+\mu_a) r )L_e(x-rw,w)dr.
$$ 
Moreover, for $n\geq1$, by definition of $T^{n}\circ T_i$ (see~\eqref{eq:def:Tn}), we also have:
\begin{multline*}
\frac{\partial \left[ T^n \circ T_i\right] L_e}{\partial \mu_{a}} (x,\omega_0;\mu_s,\mu_a)= \mu_s^n \int_{\R_{+}^{n}} dr_0\cdots dr_n \, \left( -\sum_{j=0}^{n} r_j\right)\exp\left( -(\mu_{s}+\mu_{a}) \sum_{j=0}^{n} r_j\right)
\\
\int_{(\Ss^{2})^{(n+1)}} d\sigma^{\otimes (n+1)}(\omega_0,\ldots,\omega_n) \prod_{j=0}^{n-1}f_{HG}\left( \omega_{j},\omega_{j+1}\right) \,
L_{e}\left( x - \sum_{k=0}^{n} r_k\omega_{k}  ,\omega_n\right).
\end{multline*}

Looking back at  Section~\ref{sec:proba:repr} and using the same notations, we deduce that 
$$
\sum_{n=0}^{\infty}\frac{\partial \left[ T^n \circ T_i\right] L_e}{\partial \mu_{a}} (x,\omega_0;\mu_s,\mu_a)= -\int_{\mathcal{A}} \nu(d\boldsymbol{r},d \boldsymbol{\omega})\,G_{x}(\boldsymbol{r}, \boldsymbol{\omega}) \sum_{i=0}^{|\boldsymbol{r}|} r_{i},$$
where we recall that $|\boldsymbol{r}|$ stands for the size of $\boldsymbol{r}$.
Assuming that the left-hand side coincides with the partial derivative $\frac{\partial L}{\partial \mu_{a}}$, then~\eqref{eq:derivL:mua} is found just like~\eqref{eq:Lxk:RW} and the same arguments provide~\eqref{eq:derivL:mus}, considering that
for all $n\ge 0$
$$\frac{\partial \left[ T^n \circ T_i\right] L_{e} }{\partial \mu_{s}}
=\frac{n}{\mu_{s}}\left[ T^n \circ T_i\right] L_{e}+\frac{\partial \left[ T^n \circ T_i\right] L_e}{\partial \mu_{a}}\,.
$$

To conclude the proof, notice that the match between the partial derivatives and the term-by-term differentiation of the Neumann series is ensured by the fact that the operator $T^n \circ T_i$ is infinitely continuously differentiable for all $n$ and by the uniform convergence of the corresponding sequences of truncated sums
$$s_{m}=\sum_{n=0}^{m} \frac{\partial [ T^n\circ T_i ]L_{e}}{\partial \mu_{s}} (x,\omega_0; \mu_s,\mu_a), \qquad m\geq 0.
$$
\end{proof}

\begin{Rem}\label{rem:second:derivatives} 
Similar formula to~\eqref{eq:derivL:mua} and~\eqref{eq:derivL:mus} can be easily found for the second order derivatives $\frac{\partial^{2} L}{\partial \mu_{s}^{2}}$, $\frac{\partial^{2} L}{\partial \mu_{s}\partial \mu_{a}}$ and $\frac{\partial^{2} L}{\partial \mu_{a}^{2}}$.
\end{Rem}

The probabilistic representation of $L(x_{k_i}; \mu_{s}, \mu_{a})$ in~\eqref{eq:fluence:as:proba} and its partial derivatives allows us to estimate the score $J(\mu_{s}, \mu_{a})$, its gradient and its Hessian matrix by Monte Carlo methods. A sole sample $(y_{1}, \ldots, y_{n})\in\mathcal{A}^{n}$ of $n$ observations of the random ray $Y$ can be used to estimate the expectations in $L$, $\nabla L(x_{k_i}; \cdot)$ and  $\Hess(L) (x_{i};\cdot)$ at the same time.  We denote these estimates by $\widehat L$, $\widehat\nabla L(x_{k_i}; \cdot)$ and  $\widehat\Hess(L) (x_{i};\cdot)$ and the corresponding score by $\widehat J$. The updating rule at line~\ref{algo:line:eq:LM} in Algorithm~\ref{algo:LM} becomes then
\begin{equation}\label{eq:LMalgo:estim}
(\mu_{s}^{k+1},\mu_{a}^{k+1})=(\mu_{s}^{k},\mu_{a}^{k})-\tau_{k}\,\lc \widehat H_{k}+\lambda_{k}\Diag(\widehat H_{k}) \rc^{-1} \widehat\nabla J(\mu_{s}^{k},\mu_{a}^{k}).
\end{equation}

\emph{Implementation and discussion.} The randomness coming from Monte Carlo estimation of the score $J$, of its gradient and of its Hessian matrix during the run of the algorithm, makes a precise estimation of the real values of $(\mu_{a}^{*}, \mu_{s}^{*})$ difficult. We observed that far from the real value $\mu_a^*$, the eigenvalues of the Hessian matrix $\hat H_{k}$ are very small and that their sign can vary a lot because of the volatility of the estimates. Conversely, near the real value $\mu_a^*$, the estimate $\hat H_{k}$ is more robust and its eigenvalues are almost always both positive, which legitimates the quadratic approximation of Levenberg-Marquardt algorithm. For these reasons, we implemented a \emph{hybrid} algorithm which chooses between the Levenberg-Marquardt descent and the classic steepest gradient descent depending on the sign of the eigenvalues of $H_{k}$. If they are both positive, one moves to the next point following~\eqref{eq:LMalgo:estim}, else one makes a move in the opposite direction of the gradient, $-\hat \nabla J$.

In Fig.~\ref{figure:descent:1} and~\ref{figure:descent:2}, we can see two examples of descent of our algorithm. The settings are the following: %
 \emph{(1)} we choose $n=3$ positions for the measurements: $x_{k_{1}}\in v_{2}$, $x_{k_{2}}\in v_{4}$, $x_{k_{3}}\in v_{6}$ (see Fig.~\ref{fig:comparison:choicevoxels}) and the values of the measurements $m_1,m_2$ and $m_3$ are taken from the database of simulations described in Sec.~\ref{sec:sensitivity} with the desired parameters, %
\emph{(2)} the anisotropy factor is set to $g=0.9$, %
\emph{(3)} the damping parameter is constant $\lambda=0.01$, %
\emph{(4)} the precision parameter (see Algo.~\ref{algo:LM}) is set to $\varepsilon=0.005$,  %
\emph{(5)} the sequence $(\tau_k)_{k\geq1}$ controlling the step size of each iteration is decreasing in $k$ and depends on the score of the iteration, as well as on the sign of the eigenvalues of $H_k$.

In Fig.~\ref{figure:descent:1}, the reference parameter are $(\mu_a^*=1, \mu_s^*=75)$. Notice the oscillations around $\mu_{a}^{*}$.  Those descent zigzags near the real value of $\mu_{a}$ are also apparent in the other descent in Fig.~\ref{figure:descent:2} for which $(\mu_a^*=1, \mu_s^*=105)$. They correspond to iteration where the descent is done according to the classic steepest descent. The iterations for which one moves more vertically correspond, as for them, to the case where the descent is done according to~\eqref{eq:LMalgo:estim}.  As we can see, a satisfying estimate of $\mu_{a}^{*}$ comes up rapidly, whereas $\mu_{s}^{*}$ is more difficult to approach. This is due to the low sensitivity of $J$ with respect to $\mu_s$ discussed in the previous section.

\begin{figure}[h!]
\begin{center}
\includegraphics[width=0.8\textwidth]{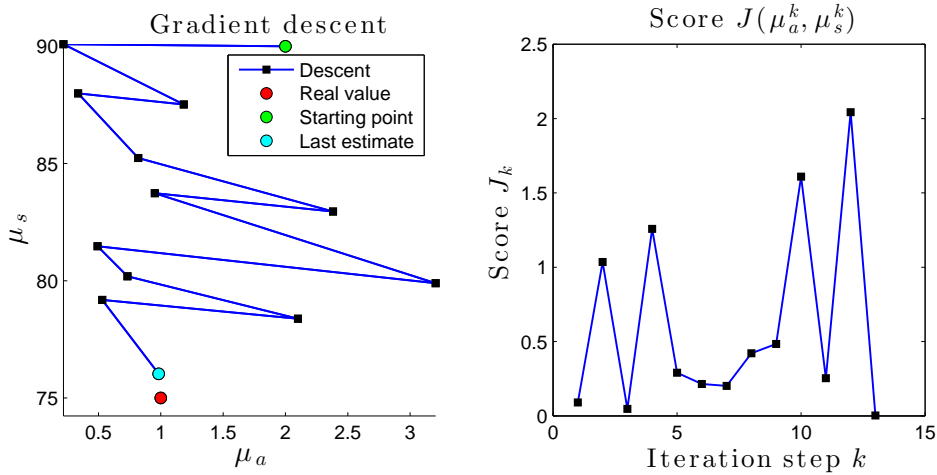}
\caption{Parameters estimation with an adaptation of Levenberg-Marquardt descent algorithm. The real value of $(\mu_{a}^{*}, \mu_{s}^{*})$ is $(1,75)$. The starting point is $(2,90)$. The final estimate is $(1.07, 76.02)$ with a score equal to $J=0.0035$.}
\label{figure:descent:1}
\end{center}
\end{figure}

\begin{figure}[h!]
\begin{center}
\includegraphics[width=0.8\textwidth]{}
\caption{Parameters estimation with an adaptation of Levenberg-Marquardt descent algorithm. The real value of $(\mu_{a}^{*}, \mu_{s}^{*})$ is $(1,105)$. The starting point is $(2,90)$. The final estimate is $(1.17, 107.04)$ with a score equal to $J=0.0048$.}
\label{figure:descent:2}
\end{center}
\end{figure}

\FloatBarrier

\addcontentsline{toc}{section}{References}
\small
\bibliographystyle{abbrv}
\bibliography{biblio_article_1}

\begin{thebibliography}{10}

\bibitem{AngellPetersen2007}
E.~Angell-Petersen, H.~Hirschberg, and S.~J. Madsen.
\newblock Determination of fluence rate and temperature distributions in the
  rat brain; implications for photodynamic therapy.
\newblock {\em Journal of Biomedical Optics}, 12(1), 2007.

\bibitem{Arridge1993}
S.~Arridge, M.~Schweiger, M.~Hiraoka, and D.~Delpy.
\newblock A finite element approach for modeling photon transport in tissue.
\newblock {\em Medical Physics}, 20, 1993.

\bibitem{Arvo1993}
J.~Arvo.
\newblock Transfer functions in global illumination.
\newblock In {\em ACM SIGGRAPH}, 1993.

\bibitem{BargoPrahlGoodel2005}
P.~R. Bargo, S.~A. Prahl, T.~T. Goodell, R.~A. Sleven, G.~Koval, G.~Blair, and
  S.~L. Jacques.
\newblock In vivo determination of optical properties of normal and tumor
  tissue with white light reflectance and an empirical light transport model
  during endoscopy.
\newblock {\em Journal of Biomedical Optics}, 10(3), 2005.

\bibitem{Bechet2014}
D.~Bechet, S.~R. Mordon, F.~Guillemin, and M.~A. Barberi-Heyob.
\newblock Photodynamic therapy of malignant brain tumours: A complementary
  approach to conventional therapies.
\newblock {\em Cancer Treatment Reviews}, 40(2), 2014.

\bibitem{Benachour2012}
H.~Benachour, A.~S{\`e}ve, T.~Bastogne, C.~Frochot, R.~Vanderesse,
  J.~Jasniewski, I.~Miladi, C.~Billotey, O.~Tillement, F.~Lux, and
  M.~Barberi-Heyob.
\newblock Multifunctional peptide-conjugated hybrid silica nanoparticles for
  photodynamic therapy and mri.
\newblock {\em Theranostics}, 2(9), 2012.

\bibitem{Bevilacqua1999}
F.~Bevilacqua, D.~Piguet, P.~Marquet, J.~D. Gross, B.~J. Tromberg, and
  C.~Depeursinge.
\newblock In vivo local determination of tissue optical properties:
  applications to human brain.
\newblock {\em Applied Optics}, 38(22), 1999.

\bibitem{Bhan2007}
K.~Bhan and J.~Spanier.
\newblock Condensed history {M}onte {C}arlo methods for photon transport
  problems.
\newblock {\em Journal of Computational Physics}, 225(2), 2007.

\bibitem{Carter1975}
L.~L. Carter and E.~Cashwell.
\newblock Particle-transport simulation with the {M}onte {C}arlo method.
\newblock Technical report, Los Alamos Scientific Lab., N. Mex.(USA), 1975.

\bibitem{Chandrasekhar1960}
S.~Chandrasekhar.
\newblock {\em Radiative transfer}.
\newblock Dover Books on Physics Series. Dover Publications, Incorporated,
  1960.

\bibitem{CheongPrahl1990}
W.-F. Cheong, S.~A. Prahl, and A.~J. Welch.
\newblock A review of the optical properties of biological tissues.
\newblock {\em IEEE Journal of Quantum Electronics}, 26(12), 1990.

\bibitem{Fletcher2013}
R.~Fletcher.
\newblock {\em Practical methods of optimization}.
\newblock John Wiley \& Sons, second edition, 2013.

\bibitem{FredrikssonLarssonStromberg2012}
I.~Fredriksson, M.~Larsson, and T.~Str{\"o}mberg.
\newblock Inverse {M}onte {C}arlo method in a multilayered tissue model for
  diffuse reflectance spectroscopy.
\newblock {\em Journal of Biomedical Optics}, 17(4), 2012.

\bibitem{Hastings1970}
W.~K. Hastings.
\newblock Monte {C}arlo sampling methods using {M}arkov chains and their
  applications.
\newblock {\em Biometrika}, 57(1), 1970.

\bibitem{Henyey1941}
L.~G. Henyey and J.~L. Greenstein.
\newblock Diffuse radiation in the galaxy.
\newblock {\em The Astrophysical Journal}, 93, 1941.

\bibitem{Jensen2003}
H.~W. Jensen, J.~Arvo, P.~Dutre, A.~Keller, A.~Owen, M.~Pharr, and P.~Shirley.
\newblock {M}onte {C}arlo ray tracing.
\newblock In {\em ACM SIGGRAPH}, 2003.

\bibitem{Johns2005}
M.~Johns, C.~Giller, D.~German, and H.~Liu.
\newblock Determination of reduced scattering coefficient of biological tissue
  from a needle-like probe.
\newblock {\em Optics Express}, 13(13), 2005.

\bibitem{KarlssonFredrikssonLarsson2012}
H.~Karlsson, I.~Fredriksson, M.~Larsson, and T.~Str{\"o}mberg.
\newblock Inverse {M}onte {C}arlo for estimation of scattering and absorption
  in liquid optical phantoms.
\newblock {\em Optics Express}, 20(11), 2012.

\bibitem{Meyn2009}
S.~P. Meyn, R.~L. Tweedie, and P.~W. Glynn.
\newblock {\em Markov chains and stochastic stability}, volume~2.
\newblock Cambridge University Press Cambridge, 2009.

\bibitem{PalmerRamanujam2006}
G.~M. Palmer and N.~Ramanujam.
\newblock {M}onte {C}arlo-based inverse model for calculating tissue optical
  properties. {P}art {I}: {T}heory and validation on synthetic phantoms.
\newblock {\em Applied Optics}, 45(5), 2006.

\bibitem{Prahl1988}
S.~A. Prahl.
\newblock {\em Light transport in tissue}.
\newblock PhD thesis, University of Texas at Austin, 1988.

\bibitem{Prahl1989}
S.~A. Prahl, M.~Keijzer, S.~L. Jacques, and A.~J. Welch.
\newblock A {M}onte {C}arlo model of light propagation in tissue.
\newblock {\em Dosimetry of Laser Radiation in Medicine and Biology}, 5, 1989.

\bibitem{Roberts2004}
G.~O. Roberts and J.~S. Rosenthal.
\newblock General state space {M}arkov chains and {MCMC} algorithms.
\newblock {\em Probability Surveys}, 1, 2004.

\bibitem{RobertsRosenthal2006}
G.~O. Roberts and J.~S. Rosenthal.
\newblock Harris recurrence of {M}etropolis-within-{G}ibbs and
  trans-dimensional {M}arkov chains.
\newblock {\em The Annals of Applied Probability}, 16(4), 2006.

\bibitem{Shirley2008}
P.~Shirley, D.~Edwards, and S.~Boulos.
\newblock {M}onte {C}arlo and quasi-{M}onte {C}arlo methods for computer
  graphics.
\newblock In {\em Monte {C}arlo and quasi-{M}onte {C}arlo methods 2006}.
  Springer, 2008.

\bibitem{Tierney1994}
L.~Tierney.
\newblock Markov chains for exploring posterior distributions.
\newblock {\em The Annals of Statistics}, 1994.

\bibitem{Veach1997_thesis}
E.~Veach.
\newblock {\em Robust {M}onte {C}arlo methods for light transport simulation}.
\newblock PhD thesis, Stanford University, 1997.

\bibitem{Wang1992_rep}
L.~Wang and S.~L. Jacques.
\newblock Monte {C}arlo modeling of light transport in multi-layered tissues in
  standard {C}.
\newblock Technical report, The University of Texas, MD Anderson Cancer Center,
  Houston, 1992.

\bibitem{Wilson1986}
B.~C. Wilson and M.~S. Patterson.
\newblock The physics of photodynamic therapy.
\newblock {\em Physics in Medicine and Biology}, 31(4), 1986.

\bibitem{Wilson2008}
B.~C. Wilson and M.~S. Patterson.
\newblock The physics, biophysics and technology of photodynamic therapy.
\newblock {\em Physics in Medicine and Biology}, 53(9), 2008.

\bibitem{Yoshida1980}
K.~Yoshida.
\newblock {\em Functional analysis}.
\newblock Classics in Mathematics, Springer. Springer London, Limited, 1980.

\bibitem{ZhuLiu2013}
C.~Zhu and Q.~Liu.
\newblock Review of {M}onte {C}arlo modeling of light transport in tissues.
\newblock {\em Journal of Biomedical Optics}, 18(5), 2013.

\end{thebibliography}

\end{document}